\documentclass[review, 3p, times]{elsarticle}

\journal{Physica D: Nonlinear Phenomena}

\biboptions{sort&compress}

\makeatletter
\AtBeginDocument{\def\NAT@space{}\def\NAT@spacechar{}}
\makeatother


\usepackage{amsmath, amssymb, amsthm}
\usepackage{bm, mathtools}
\usepackage{graphicx}
\usepackage{booktabs}
\usepackage{tabularx}
\usepackage{algorithm, algorithmic}          
\usepackage{xcolor}
\usepackage{microtype}
\usepackage{enumitem}
\usepackage{xspace}
\usepackage{caption}
\usepackage{verbatim}  

\makeatletter
\renewcommand\section{\@startsection {section}{1}{\z@}%
           {-18\p@ \@plus -6\p@ \@minus -3\p@}%
           {9\p@ \@plus 6\p@ \@minus 3\p@}%
           {\normalsize\bfseries\boldmath}}
\renewcommand\subsection{\@startsection{subsection}{2}{\z@}%
           {-12\p@ \@plus -6\p@ \@minus -3\p@}%
           {3\p@ \@plus 6\p@ \@minus 3\p@}%
           {\normalfont\normalsize\itshape}}
\renewcommand\subsubsection{\@startsection{subsubsection}{3}{\z@}%
           {-12\p@ \@plus -6\p@ \@minus -3\p@}%
           {\p@}%
           {\normalfont\normalsize\itshape}}
\usepackage[colorlinks=true, linkcolor=blue, citecolor=blue, urlcolor=blue, bookmarksnumbered=true]{hyperref}
\providecommand{\cref}[1]{Section~\ref{#1}}        
\providecommand{\Cref}[1]{Section~\ref{#1}}
\providecommand{\figref}[1]{Fig.~\ref{#1}}         
\providecommand{\Figref}[1]{Fig.~\ref{#1}}         
\providecommand{\tabref}[1]{Table~\ref{#1}}
\providecommand{\Tabref}[1]{Table~\ref{#1}}
\providecommand{\propref}[1]{Proposition~\ref{#1}}
\providecommand{\appref}[1]{\ref{#1}}
\providecommand{\Appref}[1]{Appendix~\ref{#1}}   

\usepackage{flafter}

\AtBeginDocument{\hypersetup{%
  pdftitle={Energy-Stable Parametric Operator Inference across Regime Transitions: Application to Unconfined Buoyant Plumes in Stratified Ambients},%
  pdfauthor={Youngjun Kwon, Abdollah Shafieezadeh}}}

\newcommand{\CROI}{CR-OpInf\xspace}
\newcommand{\Ab}{\texorpdfstring{\hat{\bm{A}}}{A}}

\newtheorem{proposition}{Proposition}

\theoremstyle{remark}
\newtheorem{remark}{Remark}
\theoremstyle{plain}

\begin{document}

\begin{frontmatter}

\title{Energy-Stable Parametric Operator Inference across Regime Transitions: Application to Unconfined Buoyant Plumes in Stratified Ambients}

\author[osu]{Youngjun Kwon}
\ead{kwon.683@osu.edu}
\author[osu]{Abdollah Shafieezadeh\corref{cor1}}
\ead{shafieezadeh.1@osu.edu}
\cortext[cor1]{Corresponding author.}
\affiliation[osu]{organization={Department of Civil, Environmental and
  Geodetic Engineering, The Ohio State University},
  city={Columbus}, state={OH}, postcode={43210}, country={USA}}

\pdfbookmark[1]{Abstract}{pdfabstract}%
\begin{abstract}

\noindent Buoyant plumes in stratified ambients arise in wildfire smoke transport, ventilation, and pollutant dispersion, where many-query prediction requires models that stay accurate and stable as the dynamics transition from free rise to internal-wave trapping. Operator Inference (OpInf) learns polynomial reduced models from snapshots, and the Boussinesq plume depends affinely on the squared buoyancy frequency $N^2$. However, this compatibility masks two stability gaps: a single affine operator family can distort the per-parameter dynamics into unstable spectra, and for quadratic OpInf models, finite-time blowup is governed by a cubic energy injection channel from the inferred quadratic operator. To address these gaps, we develop cyclic-skew regime-bridged OpInf (CR-OpInf). We derive an affine-inadequacy diagnostic that links the departure from a shared affine family to destabilizing spectral drift before integration. In place of the shared affine fit, we interpolate per-parameter operators with a Gaussian kernel. We also extend the cyclic-skew quadratic constraint to the buoyancy-source-forced parametric setting, closing the cubic energy channel and isolating this closure as the boundedness mechanism. We further introduce a regime-bridging basis split that resolves free rise and trapped intrusion structures while preserving the cyclic-skew structure under convex interpolation. Applied to an unconfined Boussinesq plume and a viscous Burgers problem spanning a smooth-to-shock transition, CR-OpInf remains bounded for five training horizons, avoids the divergence of affine and unconstrained variants, tracks the plume rise, and recovers stratification-dependent rise scaling at moderate horizons. To our knowledge, this is the first parametric OpInf study of an unconfined stratified plume. 

\end{abstract}

\pdfbookmark[1]{Keywords}{pdfkeywords}%
\begin{keyword}
Operator inference \sep Parametric reduced-order model \sep
Buoyant plume \sep Stratified ambient \sep Regime transition \sep Long-time stability \sep
Energy-preserving constraint
\end{keyword}

\end{frontmatter}




\section{Introduction}
\label{sec:intro}

\subsection{Buoyant plumes in stratified ambients}
\label{sec:intro:motivation}

A buoyant plume rising into a stably stratified ambient serves as a canonical model for a wide range of geophysical and engineering problems. These include wildfire smoke transport \citep{trentmann2006modeling}, pollutant dispersion from industrial stacks \citep{briggs1969plume, morton1956turbulent}, the natural ventilation of buildings \citep{linden1999fluid}, and buoyant discharges and stratified-fluid transport in geophysical and environmental flows \citep{hewitt2020subglacial, more2023stratified}. The strength of the ambient stratification controls a qualitative regime transition in these flows. That strength is set by the Brunt--V\"ais\"al\"a frequency, which measures the natural oscillation frequency of a vertically displaced fluid parcel. When the stratification is weak, the plume rises freely until turbulent entrainment dilutes its buoyancy and its rise is arrested \citep{morton1956turbulent}. When it is strong, the plume overshoots a level of neutral buoyancy, falls back, and spreads laterally as an intrusion \citep{bloomfield1998intrusion} that oscillates as an internal gravity wave near the Brunt--V\"ais\"al\"a frequency. Because the transition between the two regimes is gradual, the rise height, the intrusion depth, and the oscillation characteristics all shift substantially as the stratification carries the flow from one regime to the other \citep{briggs1969plume}.

Prediction across the full parameter range underlies real-time decision support, uncertainty quantification, and inverse problems in environmental hazard management, the many-query tasks for which reduced order surrogates are constructed \citep{peherstorfer2018combining}. Each such task requires repeated evaluation of the flow at different stratifications, and direct simulation of the governing partial differential equations (PDEs) is prohibitively expensive at the scale these tasks demand \citep{benner2015survey}. Reduced-order models (ROMs) constructed from snapshot data offer a natural remedy, since an offline training stage produces a low-dimensional surrogate whose online evaluation cost is orders of magnitude lower than that of the corresponding full-order model (FOM) \citep{peherstorfer2016data}. A qualitative change in the dynamics across a regime transition challenges the prevailing assumption that a single low-dimensional structure is valid over the entire parameter range. A parametric ROM must therefore do more than interpolate solutions across parameters. It must resolve distinct, regime-dependent flow structures while maintaining stable reduced dynamics as the dominant physics changes.

\subsection{Reduced-order modeling and Operator Inference}
\label{sec:intro:opinf}

Reduced-order modeling of nonlinear PDEs is supported by an extensive literature, including surveys \citep{benner2015survey, kramer2024learning} and standard references on projection-based and reduced-basis methods \citep{antoulas2005approximation, hesthaven2016certified}. Classical approaches project the governing equations onto a low-dimensional basis obtained from proper orthogonal decomposition (POD), including its balanced variants \citep{rowley2005model, willcox2002balanced}. The resulting reduced operators are mathematically faithful but require access to the discretized equations. Multifidelity strategies that combine such surrogates with the FOM have been used effectively in parameter study tasks \citep{peherstorfer2018combining}. Operator Inference (OpInf) \citep{peherstorfer2016data} infers polynomial form reduced operators non-intrusively from snapshot data through a least-squares regression in coordinates spanned by a POD basis, regularized in standard practice \citep{swischuk2020learning, mcquarrie2021combustion}. Other data-driven surrogates differ in what they assume about the dynamics. Dynamic mode decomposition \citep{schmid2010dynamic} learns a linear evolution operator, which is well suited to flows dominated by linear dynamics but does not by itself capture the quadratic convective nonlinearity of the Boussinesq equations that govern a buoyant plume. Sparse nonlinear identification \citep{brunton2016discovering} can discover nonlinear terms from data, a capability most valuable when the governing form is unknown. Neural operator surrogates such as deep operator networks \citep{lu2021learning} and Fourier neural operators \citep{li2021fourier} learn solution maps directly without a reduced basis and are correspondingly expressive, but they expose no explicit reduced operator on which spectral or structural constraints can act. OpInf suits a quadratically nonlinear system such as the Boussinesq plume, because its operators are interpretable linear algebraic objects whose spectrum and structure can be inspected and constrained, and its polynomial reduced model coincides with the exactly quadratic state dependence of the governing equations.

Since its introduction, OpInf has been extended along several directions that together broaden its applicability and strengthen its guarantees. On the representation side, lifting transformations \citep{kramer2019liftlearn} recast a broader class of nonlinear PDEs in polynomial form, and Lift \& Learn \citep{qian2020lift} thereby extends OpInf to non-polynomial advection--reaction systems, while the quadratic-manifold decoder of \citep{geelen2023operator} replaces the linear POD decoder to capture the leading curvature of the snapshot manifold. On the parametric side, the extension of \citep{mcquarrie2023nonintrusive} casts the reduced operators as an affine function of the varied parameter and fits a single joint least-squares regression across all training values. The same affine construction was developed independently for a parameterized rotating shallow water flow \citep{yildiz2021data} and later combined with additional structural constraints, as in a tensor-parametric Hamiltonian variant for canonical systems \citep{vijaywargiya2025tensor}. In the non-parametric setting, physical state constraints have also been embedded in the inference \citep{kim2025physically}. An element-wise scheme that sidesteps the affine form by inferring and interpolating per-parameter models has also been proposed \citep{kang2025parametric}. Because unconstrained OpInf does not guarantee bounded long-time trajectories, recent structure-preserving variants have introduced stability constraints on the linear operator, the quadratic operator, or both. The guaranteed-stable models of \citep{goyal2024guaranteed, pontesduff2024stability} parameterize the reduced linear operator in a dissipative-plus-skew form whose spectrum lies in the open left half-plane by construction, making the learned linear part Lyapunov-stable. The cyclic skew-symmetric constraint on the reduced quadratic operator, classical in energy-preserving Galerkin models \citep{schlegel2015long} and brought into OpInf by \citep{sharma2022hamiltonian, gruber2023canonical, gkimisis2025representation, koike2024energy}, preserves the energy neutrality of the convective nonlinearity, and the two constraints are usually imposed together. Preserving physical structure more broadly, such as Lagrangian structure, has been shown to extend the horizon over which a reduced model stays predictive \citep{sharma2024lagrangian}. Operator inference has also been applied to buoyancy-driven convection in a closed Rayleigh--B\'enard cell at a fixed Rayleigh number \citep{rocha2023parametric}.

\subsection{The regime-transition setting}
\label{sec:intro:gap}

The lifting, parametric, and structure-preserving extensions of OpInf have substantially broadened its applicability and strengthened its structural guarantees, yet the regime-transition setting raises distinct challenges for the parametric extension that standard single-regime applications do not encounter. It places requirements at three stages of the standard pipeline: the reduced basis, the operator inference step, and the structural prior. At the level of the reduced basis, a single global POD basis trained on the union of all training snapshots is a compromise between the two regimes' dominant spatial structures, since the modes that resolve the weakly stratified free rise phase differ from those that resolve the strongly stratified trapped oscillation. Local-basis constructions \citep{amsallem2012nonlinear} and bifurcation-aware localized reduced-order modeling \citep{hess2019localized} address related partitioning of the state space, and partitioning the training parameter domain with local bases is established \citep{eftang2010hp}, but none of these has been combined with the structure-preserving inference constraints developed for OpInf.

At the operator inference step, the joint affine least-squares fit of \citep{mcquarrie2023nonintrusive} constrains the inferred operators at every training parameter to a shared affine form, and the joint fit therefore balances the per-parameter optima against one another at every training point. This balancing has a limited effect when the dynamics vary slowly with the parameter, but across a qualitative regime change it can shift the inferred linear spectrum toward instability and degrade the reduced trajectories. The related observation that the choice of training parameters governs the stability and accuracy of parametric OpInf appears in a Bayesian active learning study \citep{mcquarrie2026active}. A recovery analysis shows that snapshots generated by re-projection let OpInf recover the intrusive Galerkin operators pre-asymptotically \citep{peherstorfer2020sampling}, and least-squares Petrov--Galerkin projection \citep{carlberg2011efficient} offers an intrusive, residual-minimizing alternative to Galerkin projection. Standard mitigations such as stronger regularization, a larger reduced dimension, or a denser training grid can improve the observed behavior but act on the fit while leaving its affine-compromise origin in place, and at present no a priori indicator signals when the affine form becomes inadequate.

At the level of the structural prior, which of the two constraints carries long-time stability on a forced dissipative flow remains unclear. The dissipative linear spectrum and the energy-neutral nonlinearity are the candidate mechanisms, and the two have not been compared as separable factors, because prior studies impose one constraint alone \citep{koike2024energy} or both together \citep{kaptanoglu2021promoting, goyal2024guaranteed}. The cyclic-skew quadratic constraint, classical in energy-preserving Galerkin models \citep{schlegel2015long}, entered OpInf through the canonical Hamiltonian setting \citep{sharma2022hamiltonian, gruber2023canonical} and the energy-preserving operator representations of \citep{gkimisis2025representation}. Although it has since been imposed in OpInf on dissipative viscous Burgers and Kuramoto--Sivashinsky systems \citep{koike2024energy}, its validity as a structural prior on a forced dissipative flow is not established. The Lyapunov-stable linear operator constraint comes from the guaranteed-stable quadratic models of \citep{goyal2024guaranteed, pontesduff2024stability}. The buoyant plume of interest adds finite viscosity together with an external buoyancy source that breaks the conservative structure, and the resulting system admits no known port-Hamiltonian formulation with an explicit dissipation operator. Because the plume offers no such form, the dissipation-aware port-Hamiltonian extension of \citep{geng2025porthamiltonian} does not directly apply. The three requirements are coupled, in that a basis-side change alone still leaves the multi-parameter compromise, an inference-side change alone still inherits the single-basis representation, and a structural prior carried over from the canonical setting calls for empirical validation on a dissipative, forced flow.

\subsection{Contributions}
\label{sec:intro:contributions}

\noindent This paper develops cyclic-skew regime-bridged Operator Inference (CR-OpInf), a parametric reduced-order modeling framework for forced dissipative flows whose dynamics span a regime transition. The framework addresses three coupled limitations of parametric OpInf: affine compromise in the inferred operators, nonphysical energy injection by the learned quadratic term, and loss of accuracy from using a single global basis across distinct flow regimes. The key contributions are as follows:

\begin{itemize}
    \item We introduce an affine-inadequacy diagnostic for standard affine parametric OpInf. The diagnostic measures how far the per-parameter operator fits depart from a shared affine family and, motivated by the Bauer--Fike eigenvalue perturbation bound, ties this mismatch to the spectral drift of the joint fit, whose magnitude and destabilizing direction we establish empirically before time integration. This analysis explains why the joint affine fit can move otherwise accurate per-parameter operators toward unstable linear spectra and motivates a non-affine treatment of parameter dependence.

    \item We develop a local operator interpolation strategy for parametric OpInf. Instead of enforcing one joint affine regression across all training parameters, we infer reduced operators independently at each training parameter and interpolate them through a Gaussian kernel at evaluation. The blend is a single convex combination, which preserves the structural constraint below at every evaluated parameter. This construction preserves the local regression optima, produces smooth parameter dependence, and avoids the multi-parameter compromise that destabilizes the affine fit.

    \item We extend the cyclic-skew quadratic constraint to buoyancy-source-forced dissipative reduced models. In this setting, the constraint no longer represents an exact conserved invariant of the full-order system. We show instead that it closes the cubic energy injection channel of quadratic OpInf models and thereby rules out finite-time blowup from the learned nonlinear term. By comparing cyclic-skew and Lyapunov-stable variants, we show that the Lyapunov-stable linear parameterization is not necessary for boundedness in the tested cases and can reduce accuracy for the plume.

    \item We introduce a regime-bridging basis split for the transition from free rise to trapping. The split resolves regime-dependent plume structures in separate reduced spaces, while convex operator interpolation preserves the cyclic-skew constraint at every evaluated parameter. The resulting CR-OpInf framework combines local inference, nonlinear energy-channel closure, and regime-aware representation without relying on a single affine operator family or a single global POD basis.

    \item We apply CR-OpInf to two parametric PDE settings. The primary study is a stratified Boussinesq plume, where we evaluate leave-one-out reconstruction accuracy, plume rise quantities of interest, stratification-dependent rise scaling, long-horizon boundedness, robustness to rank and training set size, and cross-regime parameter extrapolation. The second study uses a viscosity-parametric Burgers equation to test whether the stability behavior observed for the plume transfers to a different convective PDE with an energy-preserving quadratic nonlinearity. Across these studies, we compare CR-OpInf with affine parametric OpInf and with structural variants that isolate the effects of the cyclic-skew quadratic constraint and the Lyapunov-stable linear parameterization.
\end{itemize}

The remainder of the paper is organized as follows. \Cref{sec:background} reviews the elements of OpInf that the proposed framework builds on, including POD, polynomial form operator regression, the affine parametric extension, and two structure-preserving variants. \Cref{sec:methods} develops the proposed framework, covering the spectral diagnosis of the multi-parameter compromise, local operator interpolation, the cyclic-skew prior for non-canonical dissipative dynamics, the regime-bridging basis split, and the CR-OpInf construction. \Cref{sec:problem} describes the two parametric PDE problems used for evaluation. \Cref{sec:plume} presents the buoyant plume study, \cref{sec:burgers} examines the cross-problem transfer to the viscous Burgers problem, and \cref{sec:extrap} studies three parameter space extrapolation scenarios. \Cref{sec:discussion} synthesizes the empirical findings, and \cref{sec:conclusion} concludes. Three appendices provide supporting material. \Appref{sec:app_fom} covers the full-order solver, its validation, and the computational cost. \Appref{sec:app_theory} develops the spectral diagnosis and the energy boundedness theory. \Appref{sec:app_certified} presents the stability-constrained interpolation variant, the stability-accuracy tradeoff analysis, and the per-variant accuracy breakdown.


\section{Background}
\label{sec:background}

This section reviews the elements of OpInf that this paper builds on. We cover the POD that produces the reduced basis (\cref{sec:background:pod}), the polynomial form reduced model that defines the inference target (\cref{sec:background:opinf}), the data-driven least-squares regression that fits the reduced operators (\cref{sec:background:fit}), the affine parametric extension that is the baseline for this paper (\cref{sec:background:popinf}), and the two structure-preserving variants on which we draw (\cref{sec:background:structure}).

\subsection{Proper orthogonal decomposition}
\label{sec:background:pod}

The POD supplies the low-dimensional coordinate system in which every OpInf fit is expressed. Let $\bm{q}(\bm{x}, t; \mu) \in \mathbb{R}^n$ denote a discretized state of a parametric PDE at spatial points $\bm{x}$, time $t$, and parameter $\mu$. Here $n$ is the total number of spatial degrees of freedom. The discretized state collects the values of every solution variable on the computational mesh into a single vector. For a single value of the parameter $\mu_i$, let $K$ denote the number of temporal snapshots. Snapshots of the state at times $t_1, \ldots, t_K$ are collected into a snapshot matrix $\bm{Q}^{(i)} \in \mathbb{R}^{n \times K}$ whose $k$-th column is $\bm{q}(\bm{x}, t_k; \mu_i)$. Snapshots from $M$ training parameter values $\{\mu_1, \ldots, \mu_M\}$ are concatenated into a joint snapshot matrix $\bm{Q} = [\bm{Q}^{(1)} \mid \cdots \mid \bm{Q}^{(M)}] \in \mathbb{R}^{n \times MK}$, then centered by subtracting from each column the snapshot mean $\bar{\bm{q}} \in \mathbb{R}^n$ computed over all training snapshots.

When the state collects several physical variables of differing numerical magnitude, the POD energy would otherwise be dominated by whichever variable carries the largest scale. We therefore assume throughout that the solution variables are nondimensionalized on a common scale before the snapshots are assembled, so that no single variable dominates the decomposition on numerical-magnitude grounds alone. Any additional variable-wise scaling applied prior to the POD is stated where it is used.

The POD of the centered snapshot matrix is its truncated singular value decomposition (SVD). The leading $r$ left singular vectors are collected into a reduced basis $\bm{\Phi} \in \mathbb{R}^{n \times r}$ whose columns are orthonormal and span the $r$-dimensional subspace that minimizes the Frobenius projection error among all rank-$r$ subspaces. The truncation rank $r$ is chosen so that the residual energy beyond the leading $r$ singular values is below a prescribed tolerance. The reduced state
\begin{equation}
\hat{\bm{q}}(t; \mu) \;=\; \bm{\Phi}^\top \bigl(\bm{q}(t; \mu) - \bar{\bm{q}}\bigr)
\;\in\; \mathbb{R}^r
\label{eq:pod_state}
\end{equation}
is the coordinate vector of $\bm{q}(t; \mu) - \bar{\bm{q}}$ in the POD basis, and the full-order state is approximately recovered by the inverse projection $\bm{q}(t; \mu) \approx \bar{\bm{q}} + \bm{\Phi}\, \hat{\bm{q}}(t; \mu)$. For large-scale problems the truncated SVD is computed by a randomized algorithm \citep{halko2011finding}. This algorithm introduces controlled randomness into $\bm{\Phi}$ through the random projection it uses to approximate the column space of the snapshot matrix before the deterministic factorization.

\subsection{The polynomial form reduced model}
\label{sec:background:opinf}

The reduced operators that OpInf infers act on the POD coordinates of Eq.~\eqref{eq:pod_state} and obey a polynomial form that matches the structure of the underlying PDE. Specifying this reduced model form makes the inference problem finite-dimensional, since the unknowns are the polynomial coefficients of a low-order system. For a PDE whose right-hand side is polynomial in the state, the projection of the dynamics onto the reduced basis $\bm{\Phi}$ produces a reduced dynamical system whose right-hand side is also polynomial in the reduced state. OpInf \citep{peherstorfer2016data} bypasses the intrusive projection and infers the reduced operators directly from snapshot data. This polynomial setting is the one this paper targets. For a PDE with constant, linear, and quadratic terms the reduced dynamics take the form
\begin{equation}
\dot{\hat{\bm{q}}}(t; \mu) \;=\; \hat{\bm{c}}(\mu)
\;+\; \hat{\bm{A}}(\mu)\, \hat{\bm{q}}(t; \mu)
\;+\; \hat{\bm{H}}(\mu)\, \bigl(\hat{\bm{q}}(t; \mu) \otimes \hat{\bm{q}}(t; \mu)\bigr),
\label{eq:opinf_ansatz}
\end{equation}
where $\hat{\bm{c}} \in \mathbb{R}^r$ is the reduced constant forcing, $\hat{\bm{A}} \in \mathbb{R}^{r \times r}$ is the reduced linear operator, and $\hat{\bm{H}} \in \mathbb{R}^{r \times r^2}$ is the reduced quadratic operator. The symbol $\otimes$ denotes the Kronecker product, so that $\hat{\bm{q}} \otimes \hat{\bm{q}} \in \mathbb{R}^{r^2}$ is the vector of all pairwise products of the entries of $\hat{\bm{q}}$. The dot denotes a derivative with respect to time. The polynomial reduced model of Eq.~\eqref{eq:opinf_ansatz} is a natural choice for any PDE whose nonlinearity is quadratic in the state, and a state-lifting transformation \citep{qian2020lift} extends the applicability to a broader class of non-polynomial systems.

\subsection{Data-driven inference by Tikhonov-regularized least squares}
\label{sec:background:fit}

The polynomial coefficients of the reduced model are determined by a single least-squares regression against projected snapshots and their numerically estimated time derivatives, without ever projecting the discrete PDE. Given the polynomial reduced form of Eq.~\eqref{eq:opinf_ansatz}, the reduced operators $(\hat{\bm{c}}, \hat{\bm{A}}, \hat{\bm{H}})$ are inferred from the projected snapshot data and a numerical estimate of the projected time derivative. At a single training parameter $\mu_i$, let $\bm{X}^{(i)} = \bm{\Phi}^\top (\bm{Q}^{(i)} - \bar{\bm{q}} \bm{1}_K^\top) \in \mathbb{R}^{r \times K}$ denote the projected centered snapshots, where $\bm{1}_K$ is the vector of ones of length $K$. Let $\bm{Y}^{(i)} \in \mathbb{R}^{r \times K}$ denote the projected time-derivative matrix obtained by finite differencing the projected snapshots in time. With the polynomial terms on the right-hand side of the reduced model stacked into a column-wise data matrix
\begin{equation}
\bm{D}^{(i)} \;=\;
\bigl[\,
\bm{1}_K\;\big|\;
\bigl(\bm{X}^{(i)}\bigr)^\top\;\big|\;
\bigl(\bm{X}^{(i)} \otimes_{\rm col} \bm{X}^{(i)}\bigr)^\top
\,\bigr]
\;\in\; \mathbb{R}^{K \times (1 + r + r^2)},
\label{eq:design_matrix}
\end{equation}
where $\otimes_{\rm col}$ denotes the Khatri--Rao (column-wise Kronecker) product, the inference reduces to a linear least-squares regression
\begin{equation}
\min_{\hat{\bm{c}},\, \hat{\bm{A}},\, \hat{\bm{H}}}
\;\Bigl\|
\bigl[\hat{\bm{c}}\;\big|\;\hat{\bm{A}}\;\big|\;\hat{\bm{H}}\bigr]
\,\bm{D}^{(i)\,\top}
\;-\; \bm{Y}^{(i)}
\Bigr\|_F^2
\;+\;\beta\,
\Bigl\|
\bigl[\hat{\bm{c}}\;\big|\;\hat{\bm{A}}\;\big|\;\hat{\bm{H}}\bigr]
\Bigr\|_F^2,
\label{eq:opinf_regression}
\end{equation}
where $\|\cdot\|_F$ is the Frobenius norm and $\beta > 0$ is the Tikhonov weight that controls the conditioning of the regression \citep{swischuk2020learning, mcquarrie2021combustion}. The unregularized normal equations admit a closed-form solution when the data matrix, in its compact deduplicated form, has full column rank \citep{peherstorfer2016data}. The regularized version is solved equivalently by the augmented normal equations or by a singular value decomposition of the data matrix \citep{mcquarrie2021combustion}. The choice of $\beta$ is itself an inference step, made through an L-curve criterion or a held-out validation set \citep{mcquarrie2021combustion}. A variant assessed for robustness in \cref{sec:burgers:alpha} scales the regularization for the $k$-th reduced mode by the $k$-th singular value of the snapshot matrix raised to a power chosen by the user, analogous to the mode-dependent weights that arise in Bayesian OpInf \citep{guo2022bayesian}. Accurate estimation of the projected time derivative $\bm{Y}^{(i)}$ is essential, because low-order finite differences introduce a bias that propagates into the inferred operators. A higher-order scheme is therefore recommended when the snapshot spacing is fine enough, and higher-order central differences are standard options in OpInf practice \citep{kramer2024learning}. All results reported here estimate the projected time derivative $\bm{Y}^{(i)}$ with a sixth-order finite difference.

The Kronecker representation of the quadratic operator admits a compact alternative. We write the model in the full Kronecker form $\hat{\bm{q}} \otimes \hat{\bm{q}} \in \mathbb{R}^{r^2}$ for the analysis, so that $\hat{\bm{H}} \in \mathbb{R}^{r \times r^2}$ carries $r^3$ entries. The regression itself uses the common compact form that deduplicates the equal pairs $\hat{q}_i \hat{q}_j = \hat{q}_j \hat{q}_i$ and works with $\hat{\bm{H}}$ of width $r(r+1)/2$ in place of $r^2$, which removes the duplicated columns from the data matrix without changing the model. The cyclic-skew structural prior of \cref{sec:background:structure} removes a further third of the independent parameters under either convention.

\subsection{Affine parametric extension}
\label{sec:background:popinf}

The standard way to incorporate a continuous parameter into OpInf is the affine parametric extension, in which the reduced operators of Eq.~\eqref{eq:opinf_ansatz} are allowed to depend on the parameter $\mu$. The simplest parameterization \citep{mcquarrie2023nonintrusive} is an affine form in a known scalar function $g(\mu)$ of the parameter,
\begin{equation}
\hat{\bm{A}}(\mu) \;=\; \hat{\bm{A}}_0
\;+\; g(\mu)\, \hat{\bm{A}}_1,
\label{eq:affine_ansatz}
\end{equation}
where $\hat{\bm{A}}_0, \hat{\bm{A}}_1 \in \mathbb{R}^{r \times r}$ are parameter-independent operators inferred jointly from all training parameters, and the form of $g(\mu)$ is chosen to reflect the parameter dependence of the underlying PDE. The constant and quadratic operators $\hat{\bm{c}}$ and $\hat{\bm{H}}$ are commonly taken to be independent of the parameter. In a centered coordinate system, affine parameter dependence in the full-order linear operator can also induce affine dependence in the reduced constant term through the action of that operator on $\bar{\bm{q}}$. The simplified baseline used here follows the common setting in which the dominant parameter dependence is assigned to $\hat{\bm{A}}(\mu)$. Variants may also parameterize $\hat{\bm{c}}(\mu)$ affinely when the centering choice makes this term non-negligible. Substituting Eq.~\eqref{eq:affine_ansatz} into the reduced model above produces a single joint least-squares regression in the unknowns $(\hat{\bm{c}}, \hat{\bm{A}}_0, \hat{\bm{A}}_1, \hat{\bm{H}})$ that fits the reduced dynamics across all $M$ training parameter values simultaneously through a stacked data matrix analogous to Eq.~\eqref{eq:design_matrix}. The joint fit yields a single reduced model that can be evaluated at any parameter value in the training range and exposes the parameter dependence in closed form. A similar construction was introduced independently for shallow water flow \citep{yildiz2021data}. The construction defined by Eq.~\eqref{eq:affine_ansatz} is \emph{affine parametric OpInf}, the baseline against which the proposed framework is compared. In the comparisons that follow, its single multi-parameter regression is referred to as the affine joint fit.

\subsection{Structure-preserving variants}
\label{sec:background:structure}

The unconstrained regression of Eq.~\eqref{eq:opinf_regression} carries no built-in guarantee that the inferred reduced operators produce long-time bounded trajectories, and for canonical Hamiltonian and conservative systems a faithful reduced model should preserve the energetic structure of the underlying PDE. The absence of a stability guarantee on the inferred operators is more generally a recurring concern for OpInf \citep{kramer2024learning, sawant2023pireg}, and two complementary developments target this issue.

The first development \citep{goyal2024guaranteed, pontesduff2024stability} constrains the reduced linear operator through a dissipative-plus-skew parameterization. We state a representative instance
\begin{equation}
\hat{\bm{A}} \;=\; -\bm{P}\bm{P}^\top - \varepsilon \bm{I}
\;+\; \operatorname{skew}(\bm{S}),
\label{eq:lyap_A}
\end{equation}
with $\bm{P}, \bm{S} \in \mathbb{R}^{r \times r}$ free matrices, $\varepsilon > 0$ a small slack parameter, $\bm{I}$ the identity matrix, and $\operatorname{skew}(\bm{S}) = (\bm{S} - \bm{S}^\top)/2$ the skew-symmetric part; throughout, $\operatorname{sym}(\bm{S}) = (\bm{S} + \bm{S}^\top)/2$ denotes the symmetric part. The decomposition places the spectrum of $\hat{\bm{A}}$ in the open left half-plane by construction, rendering the linear part asymptotically stable, and we follow the source literature in labeling this variant Lyapunov-stable.

The second development constrains the reduced quadratic operator $\hat{\bm{H}}$ so that the cubic energy contribution of the underlying convective nonlinearity vanishes exactly, a condition classical in energy-preserving Galerkin models \citep{schlegel2015long} and enforced within OpInf by the Hamiltonian and energy-preserving constructions of \citep{sharma2022hamiltonian, gruber2023canonical, gkimisis2025representation, koike2024energy}. The constraint is most cleanly stated in terms of the third-order tensor $\hat{F}_{ijk}$ obtained from the symmetric expansion of $\hat{\bm{H}}$ via $\hat{\bm{H}}(\hat{\bm{q}} \otimes \hat{\bm{q}}) = \sum_{i,j,k} \hat{F}_{ijk}\, \hat{q}_j \hat{q}_k\, \bm{e}_i$, where $\bm{e}_i$ is the $i$-th canonical basis vector of $\mathbb{R}^r$. The cyclic-skew condition is
\begin{equation}
\hat{F}_{ijk} + \hat{F}_{jki} + \hat{F}_{kij} \;=\; 0
\qquad \forall i, j, k.
\label{eq:cyclic_skew_background}
\end{equation}
The condition is equivalent to the energy-conservation identity $\langle \hat{\bm{q}},\, \hat{\bm{H}}(\hat{\bm{q}} \otimes \hat{\bm{q}}) \rangle = 0$, which holds for the convective term of systems of Navier--Stokes type under periodic or no-flux boundary conditions \citep{schlegel2015long, gkimisis2025representation}. On the sponge-forced plume problem treated later, this cancellation no longer represents conservation of the full-order energy, because the sponge layer and the buoyancy source break the conservative structure. The constraint is therefore used as a reduced-model energy channel closure, with the finite-horizon bound established in \appref{sec:app_energy}. We refer to a fit that imposes Eq.~\eqref{eq:cyclic_skew_background} on the inferred $\hat{\bm{H}}$ as a \emph{cyclic-skew quadratic} fit, naming the tensor identity behind the energy-preserving constraint. The constraint reduces the number of independent operator entries of $\hat{\bm{H}}$ from $r^2(r+1)/2$ in the unconstrained case to $r(r^2 - 1)/3$ after the cyclic-skew identification. Here $r^2(r+1)/2$ counts the independent entries of the quadratic operator across all $r$ output rows, where $r(r+1)/2$ is the compact column width noted earlier, while $r(r^2 - 1)/3$ counts the degrees of freedom that survive the cyclic-skew constraint.

The two structural constraints of Eqs.~\eqref{eq:lyap_A} and \eqref{eq:cyclic_skew_background} are independent of each other and can be applied together or separately. The energy-preserving quadratic constraint has already been applied within OpInf to the dissipative, non-canonical systems of viscous Burgers and Kuramoto--Sivashinsky \citep{koike2024energy}, while the stable-$\hat{\bm{A}}$ parameterization arises in the guaranteed-stable quadratic models developed in \citep{goyal2024guaranteed, pontesduff2024stability}, where the two are typically paired.


\section{Proposed methodology}
\label{sec:methods}

This section develops the proposed parametric OpInf framework. We begin with a diagnosis of why the affine parametric form of \cref{sec:background:popinf} fails on a problem with a regime transition. We then present three methodological components that address the inference, stability, and representation gaps: local operator interpolation, a cyclic-skew quadratic prior, and a regime-bridging basis split. Finally, we integrate these components into the unified CR-OpInf construction. The presentation is self-contained at the level of the method, and the empirical evidence for the choices made here is deferred to the evaluation sections that follow.

\subsection{Diagnosing the multi-parameter compromise}
\label{sec:methods:diagnosis}

Fitting the affine form $\hat{\bm{A}}(\mu) = \hat{\bm{A}}_0 + g(\mu)\, \hat{\bm{A}}_1$ jointly across $M$ training parameter values on a problem with a regime transition produces an inferred linear operator whose eigenvalues reach a positive maximum real part of order unity, quantified below. The resulting reduced trajectories diverge rapidly. The same OpInf regression performed at any single training parameter with no parameter coupling yields a maximum real part that is substantially smaller. The instability is therefore associated with the joint affine fit and not with the individual per-parameter regressions.

We derive an eigenvalue perturbation analysis that links the affine joint-fit mismatch to the spectral drift. Denote by $\hat{\bm{A}}^{\rm single}(\mu_i)$ the single-parameter least-squares optimum at training point $\mu_i$, the result of applying the regression of Eq.~\eqref{eq:opinf_regression} to the snapshots of $\mu_i$ alone, and by $\hat{\bm{A}}^{\rm joint}(\mu_i)$ the affine joint fit of Eq.~\eqref{eq:affine_ansatz} evaluated at the same point. The per-parameter perturbation that the joint fit incurs at $\mu_i$ is then $\bm{E}_i = \hat{\bm{A}}^{\rm joint}(\mu_i) - \hat{\bm{A}}^{\rm single}(\mu_i)$. For every eigenvalue $\lambda$ of $\hat{\bm{A}}^{\rm joint}(\mu_i)$, the Bauer--Fike theorem \citep{bauer1960norms} yields the perturbation bound
\begin{equation}
\min_{k}\,\bigl|\lambda
- \lambda_k\bigl(\hat{\bm{A}}^{\rm single}(\mu_i)\bigr)\bigr|
\;\leq\;
\kappa_2\!\bigl(\bm{V}^{(i)}\bigr) \cdot \|\bm{E}_i\|_2,
\label{eq:bauer_fike_main}
\end{equation}
where $\bm{V}^{(i)}$ is the eigenvector matrix of the diagonalizable $\hat{\bm{A}}^{\rm single}(\mu_i)$ (the finite conditioning $\kappa_2(\bm V^{(i)})$ below confirms diagonalizability) and $\kappa_2$ is the spectral condition number. Because the eigenvector conditioning remains mild on the problems considered here (\figref{fig:bauer_fike}(c)), the bound caps the spectral drift by a multiple of $\|\bm{E}_i\|_2$, which grows when the per-parameter operators are poorly represented by a shared affine $g(\mu)$-family.

Eq.~\eqref{eq:bauer_fike_main} bounds the spectral drift by $\kappa_2 \|\bm{E}_i\|_2$, the largest drift that the affine constraint can induce. The bound itself is directionless and loose in absolute terms on the plume fits (\appref{sec:app_theory}), so we use it qualitatively: it identifies $\|\bm{E}_i\|_2$ as the driver of the drift. The destabilizing direction of the drift is an empirical observation. The single-parameter fit minimizes the residual at $\mu_i$ exactly. The joint affine constraint can therefore only perturb its spectrum, and that perturbation has no a priori reason to leave the maximum real part in the stable half-plane. The joint fit displaces the spectrum toward instability in every case examined here. We make the affine inadequacy quantitative through the relative residual $\rho_{\rm aff}$ of the best affine fit to the per-parameter operators, the residual of that fit normalized by the total operator magnitude (defined in Eq.~\eqref{eq:rho_affine}). On a representative design spanning the parameter range, $\rho_{\rm aff} \approx 0.26$. It drops to $\approx 0.10$ within the sub-regime and to $\approx 0.21$ within the super-regime, the parameter ranges below and above the regime boundary. A substantial fraction of the operator variation is therefore not captured by the affine family, most of it removable on the sub-regime side by splitting at the regime boundary but persisting within the super-regime (\appref{sec:app_theory}). This affine residual provides the scale of the spectral perturbation induced by the joint fit.

\figref{fig:bauer_fike} shows the diagnosis on the plume problem through two empirical facts, both quantified in \appref{sec:app_theory}. First, at the production training set ($M = 17$), the joint affine fit produces a maximum real eigenvalue between $+0.97$ and $+1.78$ across the seventeen training values of the plume stratification parameter, the Brunt--V\"ais\"al\"a frequency $N$ of \cref{sec:problem} with $\mu = N$, and the worst case $+1.78$ sits at the smallest $N$. The single-parameter fits at the same nodes, with the same regularization, give maximum real eigenvalues between $+0.04$ and $+0.10$, smaller by a factor of $18$ to $38$ at every node. The instability belongs to the joint fit, and each per-parameter fit is only mildly unstable. Second, the joint-fit instability cannot be regularized away. Varying the Tikhonov weight over its full range drives the operator norm toward zero. The maximum real eigenvalue therefore descends toward zero from above, yet it stays strictly positive at every $\beta$ and never crosses into the stable half-plane. The descent leads only to a vanishing operator and offers no route to stability, because a penalty large enough to push the growth rate near zero also suppresses the dynamics the operator must represent. No single penalty value therefore yields an operator that is both stable and accurate. This motivates a non-affine treatment of parameter dependence, followed by a separate structural constraint on the inferred quadratic operator.

\begin{figure}[htbp]
\centering
\includegraphics[width=\linewidth]{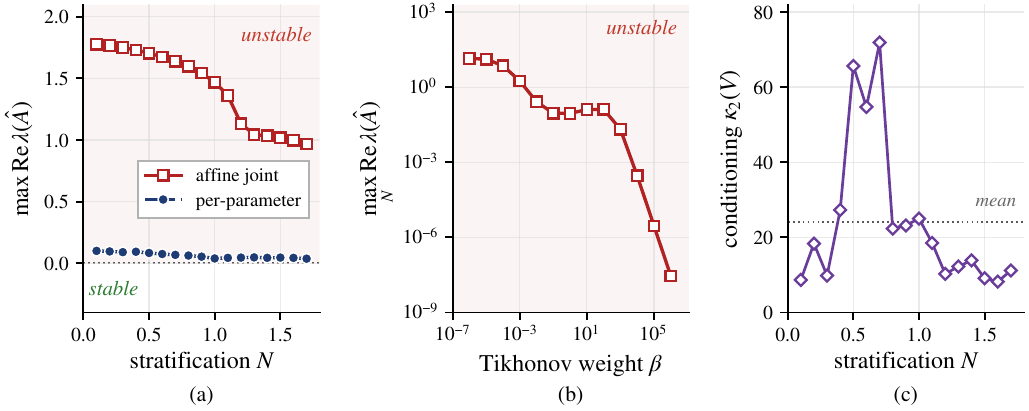}
\caption{Diagnosis of the affine joint fit. (a) Maximum real part of the eigenvalues of the inferred linear operator $\hat{\bm{A}}$ versus $N$ for the affine joint fit and the per-parameter fits ($\beta = 10^{-3}$). (b) The maximum eigenvalue real part, taken over all $N$, versus the Tikhonov weight $\beta$. (c) Eigenvector conditioning $\kappa_2(\bm V^{(i)})$ of the per-parameter operators $\hat{\bm A}^{\rm single}(\mu_i)$ versus $N$.}
\label{fig:bauer_fike}
\end{figure}

\subsection{Local operator interpolation}
\label{sec:methods:loi}

Guided by the Bauer--Fike diagnosis, we develop an alternative treatment that decouples the per-parameter fits. This local operator interpolation (LOI) performs an independent regression at each training parameter and interpolates the resulting operator set when the model is evaluated at a new parameter. Interpolating per-parameter operators provides a non-affine alternative to a shared affine embedding. The affine embedding in \citep{mcquarrie2023nonintrusive} avoids the need for operator interpolation, and the element-wise scheme of \citep{kang2025parametric} infers and interpolates per-parameter OpInf models. The constraint of \cref{sec:methods:cyclicskew} defines a convex set, and a convex kernel blend therefore keeps every blended operator inside that set at every evaluated parameter (\appref{sec:app_energy}), a guarantee that unstructured operator interpolation does not provide. The unconstrained plain least-squares baseline of \cref{sec:burgers} blends per-parameter fits in exactly this way with no constraint, and its divergence on both problems shows the limit of unstructured operator interpolation here.

Concretely, given a training set $\mathcal{M}_{\rm train} = \{\mu_1, \ldots, \mu_M\}$ and a shared reduced basis $\bm{\Phi}$, the LOI construction solves the single-parameter regression of Eq.~\eqref{eq:opinf_regression} independently at every training parameter,
\begin{equation}
\bigl(\hat{\bm{c}}_i, \hat{\bm{A}}_i, \hat{\bm{H}}_i\bigr)
\;=\; \operatorname*{arg\,min}_{\hat{\bm{c}},\,\hat{\bm{A}},\,\hat{\bm{H}}}
\;J^{(i)}\bigl(\hat{\bm{c}}, \hat{\bm{A}}, \hat{\bm{H}}\bigr),
\qquad i = 1, \ldots, M,
\label{eq:loi_per_mu}
\end{equation}
where $J^{(i)}$ is the same Tikhonov-regularized objective with the data matrix $\bm{D}^{(i)}$ of Eq.~\eqref{eq:design_matrix} and the projected time derivatives $\bm{Y}^{(i)} = \bm{\Phi}^\top\!\dot{\bm{Q}}^{(i)}$ formed from the centered snapshots at $\mu_i$, possibly augmented with a structural constraint as developed in \cref{sec:methods:cyclicskew}. The operator at a test parameter $\mu^\ast$ is obtained as the average of the per-parameter operators weighted by a Gaussian kernel,
\begin{equation}
\hat{\bm{O}}(\mu^\ast)
\;=\; \sum_{i=1}^M w_i(\mu^\ast)\, \hat{\bm{O}}_i,
\qquad
w_i(\mu^\ast)
\;=\; \frac{\exp\!\bigl[-(\mu^\ast - \mu_i)^2 / 2\ell^2\bigr]}
            {\sum_{j} \exp\!\bigl[-(\mu^\ast - \mu_j)^2 / 2\ell^2\bigr]},
\label{eq:loi_blend}
\end{equation}
where $\hat{\bm{O}}$ is any of the operators in Eq.~\eqref{eq:loi_per_mu} and $\ell > 0$ is the kernel bandwidth, the single hyperparameter of the interpolation step. The reduced trajectory is integrated in the polynomial form of Eq.~\eqref{eq:opinf_ansatz} with the blended operators of Eq.~\eqref{eq:loi_blend}.

LOI can be read as kernel regression in parameter space, a counterpart of the operator interpolation on matrix manifolds of \citep{amsallem2011online}, which in turn generalizes the basis interpolation on the Grassmann manifold of \citep{amsallem2008interpolation}. The construction simplifies that operator interpolation in two ways. All per-parameter fits use a \emph{common} reduced basis $\bm{\Phi}$ computed from the joint snapshot matrix as in \cref{sec:background:pod}. This removes the coordinate transformation that operator interpolation on matrix manifolds otherwise requires. The principal-angle perturbation of the common basis relative to the ideal per-parameter bases is controlled by a Davis--Kahan companion bound (\propref{prop:davis_kahan}). The interpolation is a Gaussian kernel in Euclidean parameter space, a closed-form alternative to a geodesic interpolation on the Grassmann manifold, and it keeps the construction differentiable in the parameter.

By construction, the per-parameter operator quality of LOI is independent of $M$, since each $\hat{\bm{A}}_i$ is the single-parameter optimum at $\mu_i$ alone and incurs no joint compromise. The eigenvalue perturbation analysis of \cref{sec:methods:diagnosis} therefore does not apply. The quality of the parameter space interpolation is controlled instead by the kernel bandwidth $\ell$ and the density of the training grid, a tradeoff we examine in \cref{sec:plume:robustness}. The per-parameter regression in Eq.~\eqref{eq:loi_per_mu} is written here as a single-step residual on the projected time derivative. A broader family of rollout residuals extends this fit by integrating the model over multiple time steps while sharing the same operator parameterization and the same per-parameter independence. We use the single-step residual throughout this paper and leave the multi-step family to future study.

\subsection{Cyclic-skew prior in a non-canonical dissipative setting}
\label{sec:methods:cyclicskew}

The unconstrained least-squares fit of the reduced quadratic operator $\hat{\bm{H}}$ does not in general satisfy the cyclic-skew condition of Eq.~\eqref{eq:cyclic_skew_background} of \cref{sec:background:structure}, and the inferred quadratic nonlinearity can amplify the reduced trajectory in directions unrelated to the underlying physics. We impose the cyclic-skew condition at the per-parameter inference step. The constraint is encoded through a structured parameterization of the third-order tensor $\hat F_{ijk}$ that automatically satisfies the condition, and the constrained inference is solved by a limited-memory Broyden--Fletcher--Goldfarb--Shanno (BFGS) optimization with the Frobenius residual against the projected time derivative as the loss. Because the cyclic-skew condition is a linear constraint on the operator entries and the Tikhonov-regularized loss is strictly convex, the constrained inference has a unique global minimizer. The limited-memory BFGS iteration converges to that global optimum.

The cyclic-skew condition is classical in energy-preserving Galerkin models \citep{schlegel2015long} and entered operator inference through the canonical Hamiltonian setting \citep{sharma2022hamiltonian, gruber2023canonical}, in which it preserves an underlying energy-conservation identity of the full-order PDE. A buoyant plume governed by the Boussinesq equations with finite viscosity, a stratification source, and an external buoyancy source is not in this setting. The convective term still satisfies the cyclic-skew identity (under appropriate boundary conditions), but the underlying full-order PDE no longer conserves energy. In this non-canonical dissipative setting, we show that the cyclic-skew condition acts as an energy-channel closure that removes cubic energy injection from the learned quadratic term, in place of the conserved invariant it would enforce in the canonical setting. With the reduced energy $E = \tfrac{1}{2}\|\hat{\bm{q}}\|_2^2$, the cyclic-skew condition is exactly the statement that the quadratic term does no net work on the energy, $\hat{\bm{q}}^\top\hat{\bm{H}}[\hat{\bm{q}}\otimes\hat{\bm{q}}] = 0$ for all $\hat{\bm{q}}$. The energy rate $\dot{E} = \hat{\bm{q}}^\top\hat{\bm{c}} + \hat{\bm{q}}^\top\hat{\bm{A}}\hat{\bm{q}}$ is then governed by the linear and constant terms alone. The nonlinearity can redistribute energy among the reduced modes but cannot inject it. The cyclic-skew constraint closes the cubic blowup channel left open by an unconstrained quadratic operator. The cancellation makes the energy rate at most quadratic in $\|\hat{\bm{q}}\|_2$, and a Gr\"onwall estimate, the standard comparison bound for differential inequalities, then caps the growth at exponential and rules out finite-time divergence even for an unstable inferred linear operator (\appref{sec:app_energy}). The reduced trajectory then stays bounded over the long but finite horizons we integrate, and the boundedness observed in practice is quantified in \cref{sec:plume:longtime}. A uniform-in-time bound is not provided by the cyclic-skew constraint alone and is available only under the optional dissipative stability-constrained variant of \appref{sec:app_certified}. Because the cyclic-skew condition is linear in the operator entries, the convex kernel blend of \cref{sec:methods:loi} preserves it exactly. The convex interpolation therefore extends the cyclic-skew property beyond the training nodes to every evaluation parameter (\appref{sec:app_energy}). We recast this classical constraint as a structural prior for the inference of a buoyancy-source-forced reduced model without assuming a reduced Hamiltonian or an explicit dissipation operator.

The structural parameterization also reduces the regression dimension. The cyclic-skew condition removes about one third of the independent entries of $\hat{\bm{H}}$ and yields a convex constrained regression. The constrained fit is a convex program comparable in cost to a standard compact unconstrained least-squares fit. We document the costs in \appref{sec:app_cost}.

\subsection{Regime-bridging via basis splitting}
\label{sec:methods:regimebridge}

We introduce a regime-bridging basis split to address the single-basis representation gap discussed in \cref{sec:intro:gap}. On a problem whose dynamics exhibit a qualitative transition across the parameter range, a single global POD basis trained on the union of all training snapshots compromises between the two regimes' dominant spatial structures and captures neither end well. We construct two regime-side reduced bases by splitting the training set at a regime midpoint $\mu_c$, chosen empirically as the best-performing candidate value (\cref{sec:plume:robustness}) so that it lies within the diffuse physical transition band. The two regime-side training sets $\mathcal{M}_{\rm sub} = \{\mu_i: \mu_i \leq \mu_c + \delta\}$ and $\mathcal{M}_{\rm sup} = \{\mu_i: \mu_i \geq \mu_c - \delta\}$ share a small overlap radius $\delta > 0$ at the boundary. The radius admits the training values nearest the boundary into both regime sides, and is fixed a priori without tuning against test error. The snapshots in each regime-side training set are POD-truncated independently to a regime-specific reduced basis $\bm{\Phi}_{\rm sub}$ or $\bm{\Phi}_{\rm sup}$, each with its own snapshot mean $\bar{\bm{q}}_{\rm sub}$ or $\bar{\bm{q}}_{\rm sup}$ removed as in \cref{sec:background:pod}. To evaluate the model at a parameter $\mu^\ast$, the reduced trajectory is integrated independently in each regime-side basis, using the LOI-blended operators of \cref{sec:methods:loi} fitted on the regime-side training set. We construct the prediction as a sigmoid-weighted combination of the two regime-side reconstructions in the full-order state space,
\begin{equation}
\bm{q}_{\rm pred}(t; \mu^\ast)
\;=\; \bigl(1 - \alpha(\mu^\ast)\bigr)
       \bigl(\bar{\bm{q}}_{\rm sub} + \bm{\Phi}_{\rm sub}\hat{\bm{q}}_{\rm sub}(t; \mu^\ast)\bigr)
\;+\; \alpha(\mu^\ast)
       \bigl(\bar{\bm{q}}_{\rm sup} + \bm{\Phi}_{\rm sup}\hat{\bm{q}}_{\rm sup}(t; \mu^\ast)\bigr),
\label{eq:sigmoid_blend}
\end{equation}
with $\hat{\bm{q}}_{\rm sub}(t; \mu^\ast)$ and $\hat{\bm{q}}_{\rm sup}(t; \mu^\ast)$ the reduced trajectories integrated in the two regime-side bases using the LOI-blended operators at $\mu^\ast$, the sigmoid blend weight $\alpha(\mu^\ast) = [1 + \exp(-(\mu^\ast - \mu_c)/w_\alpha)]^{-1}$, and blend width $w_\alpha > 0$. We combine the reconstructed full-order fields of Eq.~\eqref{eq:sigmoid_blend} directly instead of interpolating the two reduced bases, because basis interpolation would also require transforming the learned third-order quadratic operators between incompatible reduced coordinates. Because the partition acts in parameter space, each regime-side trajectory is integrated in a fixed basis, and the regime-bridging split adds only the blend of reconstructed fields in Eq.~\eqref{eq:sigmoid_blend} on top of the per-regime LOI models.

\subsection{Composition of the framework}
\label{sec:methods:composition}

The three ingredients introduced above are local operator interpolation, the cyclic-skew quadratic constraint, and the regime-bridging basis split. We integrate these components into a single parametric OpInf construction, shown in \figref{fig:framework}. We refer to this construction as \emph{cyclic-skew regime-bridged Operator Inference} (\CROI). We use the cyclic-skew constraint of \cref{sec:methods:cyclicskew} as the nonlinear stability prior, because it closes the cubic energy injection channel identified in that section. We retain the algebraic cyclic-skew form but reinterpret its role. In the present forced dissipative setting it acts as an energy-channel closure in place of an exact conservation law, and it acts together with the regime-bridging basis split of \cref{sec:methods:regimebridge} on the per-parameter operators of LOI. The full configuration is specified by five hyperparameters: the regime midpoint $\mu_c$, the overlap radius $\delta$, the sigmoid blend width $w_\alpha$, the kernel bandwidth $\ell$, and the reduced dimension $r$ per regime side. Their values and motivating indicators are reported in \cref{sec:plume}, together with sensitivity studies over the midpoint $\mu_c$, the bandwidth $\ell$, the blend width $w_\alpha$, and the reduced dimension $r$.

\begin{figure}[htbp]
\centering
\includegraphics[width=\linewidth]{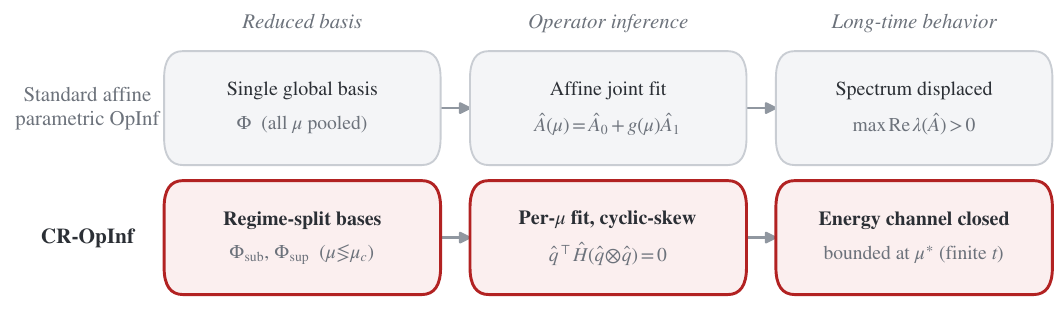}
\caption{The \CROI{} pipeline (bottom) beside the standard affine pipeline (top), in three stages. The reduced basis $\bm{\Phi}$ is a single global basis above and the regime-split pair $\bm{\Phi}_{\mathrm{sub}}, \bm{\Phi}_{\mathrm{sup}}$ below, divided at the regime midpoint $\mu_c$. The inferred operators act on the reduced state $\hat{\bm{q}}$, with the affine linear fit $\hat{\bm{A}}(\mu) = \hat{\bm{A}}_0 + g(\mu)\hat{\bm{A}}_1$ above and the cyclic-skew quadratic condition $\hat{\bm{q}}^\top\hat{\bm{H}}(\hat{\bm{q}}\otimes\hat{\bm{q}})=0$ below. The blended operator is integrated at a test parameter $\mu^\ast$. This gives the displaced spectrum $\max\operatorname{Re}\lambda(\hat{\bm{A}})>0$ above and the finite-horizon bounded trajectory below.}
\label{fig:framework}
\end{figure}


\section{Problem setting}
\label{sec:problem}

The proposed framework is evaluated on two parametric PDEs. The principal application is a two-dimensional Boussinesq buoyant plume in a linearly stratified ambient, with the Brunt--V\"ais\"al\"a frequency as the varied parameter. The cross-problem comparison uses a one-dimensional viscous Burgers equation with viscosity as the varied parameter. Both parametric problems are defined here under a shared discretization convention, so that the later cross-problem comparison of \cref{sec:burgers} rests on one consistent setup.

\subsection{Buoyant plume in a stratified ambient}
\label{sec:problem:plume}

The principal application is a two-dimensional incompressible buoyant plume in a vertical plane $\Omega = [0, L_x] \times [0, L_z]$ with $L_x = 2.5$ and $L_z = 1.5$ in nondimensional units. The domain is open, with periodic lateral boundaries and a wave-absorbing sponge near the top that emulates an unbounded vertical extent. The bottom boundary is a no-slip wall that carries the buoyancy source, the buoyancy satisfies zero-flux conditions at the bottom and top walls, and the velocity is free-slip beneath the sponge at the top (\appref{sec:app_fom}). The plume is therefore free to rise and spread without confinement by solid boundaries. The state vector is the triple $\bm{q}(\bm{x}, t) = \bigl(u, w, b\bigr)$ of the horizontal velocity $u$, the vertical velocity $w$, and the buoyancy $b$, each a function of position and time. Under the Boussinesq approximation, the dynamics are governed by
\begin{subequations}
\label{eq:plume_pde}
\begin{align}
\partial_t u + \bm{u} \cdot \nabla u
   &\;=\; -\partial_x p + \nu \nabla^2 u, \\
\partial_t w + \bm{u} \cdot \nabla w
   &\;=\; -\partial_z p + \nu \nabla^2 w + b, \\
\partial_t b + \bm{u} \cdot \nabla b
   &\;=\; \kappa \nabla^2 b - N^2 w + S(\bm{x}), \\
\nabla \cdot \bm{u} &\;=\; 0,
\end{align}
\end{subequations}
in which $\bm{u} = (u, w)$ is the velocity field and $p$ is the pressure that enforces incompressibility. The coefficients $\nu$ and $\kappa$ are the kinematic viscosity and the buoyancy diffusivity, both set to $10^{-3}$. The source $S(\bm{x})$ is a localized Gaussian buoyancy injection at the lower boundary, and $N$ is the Brunt--V\"ais\"al\"a frequency that parameterizes the ambient stratification strength. The buoyancy equation term $-N^2 w$ encodes the restoring effect of the stable stratification. A fluid parcel displaced vertically by an amount $\delta z$ experiences a buoyancy restoring force proportional to $-N^2\, \delta z$.

The parameter dependence of the dynamics is exactly affine in $N^2$, since the parameter appears linearly in the buoyancy equation and nowhere else. The reduced linear operator inherits this structure, and the two-term basis $\{1, N^2\}$ is therefore physics-exact. This corresponds to $g(N) = N^2$ in the affine reduced model form of \cref{sec:background:popinf}. A more general affine combination such as $\{1, N, N^2\}$ would introduce an $N$-linear operator term with no counterpart in the governing equations. This makes the plume a natural test problem for parametric OpInf, because at the PDE level the affine form is a physics-grounded encoding of the parameter dependence.

The motivating applications are three-dimensional and turbulent, but we treat this two-dimensional laminar model as a grid- and time-converged proxy that reproduces the qualitative regime phenomenology across the transition, the buoyancy-driven rise, overshoot, internal-wave trapping, and intrusion at the level of neutral buoyancy, without turbulence closure or quantitative validation against laboratory or three-dimensional data. The three-dimensional and wind-forced extensions are left to future work.

The training range is $N \in \{0.1, 0.2, \ldots, 1.7\}$, $M = 17$ values spaced uniformly in $N$. Each trajectory is integrated to $t = 20$ and sampled at output interval $\Delta t = 0.025$, for $K = 801$ snapshots per training parameter on a $240 \times 144$ grid. The reduced model is trained on the in-window horizon $t_{\rm train} = 12$. Companion long-time runs to the test horizon $t_{\rm final} = 60$, five times that window, are generated at every training parameter value for the forecast evaluation and the long-time stability assessment of \cref{sec:plume:longtime}. Eq.~\eqref{eq:plume_pde} is discretized with the full-order numerics detailed in \appref{sec:app_fom}. That appendix covers the staggered grid, the time integrator, the pressure projection, the sponge layer, the lateral boundary conditions, and the grid-refinement study.

\paragraph{Regime transition driven by $N$}
The parameter $N$ controls a qualitative regime transition. At weak stratification ($N \lesssim 0.4$) the plume rises unimpeded into the absorbing sponge layer at the top of the domain, a region that damps upward-propagating internal waves so the truncated vertical extent produces no spurious reflections. The weakly stratified end is therefore domain-limited, and its trajectory is dominated by the buoyant-rise phase. Here \emph{domain-limited} means only that the buoyant rise reaches the sponge layer. These weakly stratified trajectories remain valid, high-fidelity training data. A transparency study in \appref{sec:app_fom} confirms the sponge does not contaminate the bulk plume dynamics. The unconfined behavior the method targets is the trapping regime, where the plume settles at a height of neutral buoyancy well below every boundary. The lateral boundaries are periodic, and the lateral extent is chosen so that the intrusion at the level of neutral buoyancy reaches them only late in the run. This choice limits periodic wrap-around contamination of the trapping height. At strong stratification ($N \gtrsim 0.8$) the plume is trapped at a height of neutral buoyancy determined by the balance between source injection and ambient stratification, and it overshoots and settles with lateral intrusion in the manner of turbulent fountains in a stratified ambient \citep{bloomfield1998intrusion}. The trajectory is then dominated by internal-wave oscillation around the trapping height. At intermediate $N$ both phases coexist. The transition is gradual, and several physical indicators place it within the band $N \in [0.35, 0.8]$. The rise time $\tau_{\rm rise}$ and the Brunt--V\"ais\"al\"a oscillation period $\tau_{\rm BV} = 2\pi / N$ cross at $N \approx 0.35$. The first post-peak oscillation of the rise trajectory appears between $N = 0.5$ and $0.6$. A spectral indicator of the buoyancy field marks the onset of oscillation dominance near $N = 0.8$. None of these indicators isolates the transition by itself. We take a representative value $N \approx 0.5$ within this band at the onset of internal-wave oscillation as the single regime boundary used by the basis split and as the divider between free rise and trapping in the figures. The basis-split study of \cref{sec:plume:robustness} confirms it as the empirical optimum, and the primary in-window error metric is insensitive to the tested midpoints $\mu_c \in \{0.4, 0.5, \ldots, 0.9\}$. \Figref{fig:plume_N_regime} contrasts the two regimes at the weak and strong representative values of $N$ and shows that the two regimes produce qualitatively distinct long-time buoyancy fields by $t = 60$. At the weak value the plume rises into the sponge and leaves a buoyant column, while at the strong value it settles into a trapped, laterally spreading intrusion that oscillates about its height of neutral buoyancy. The centroid trajectories of all $17$ training $N$ interpolate continuously between these two endpoints across the transition band. The parameter dependence is therefore smooth even where the spatial structure changes character. The qualitative differences across the regimes are the modeling challenge a single-basis parametric ROM must overcome.

\begin{figure}[htbp]
\centering
\includegraphics[width=\linewidth]{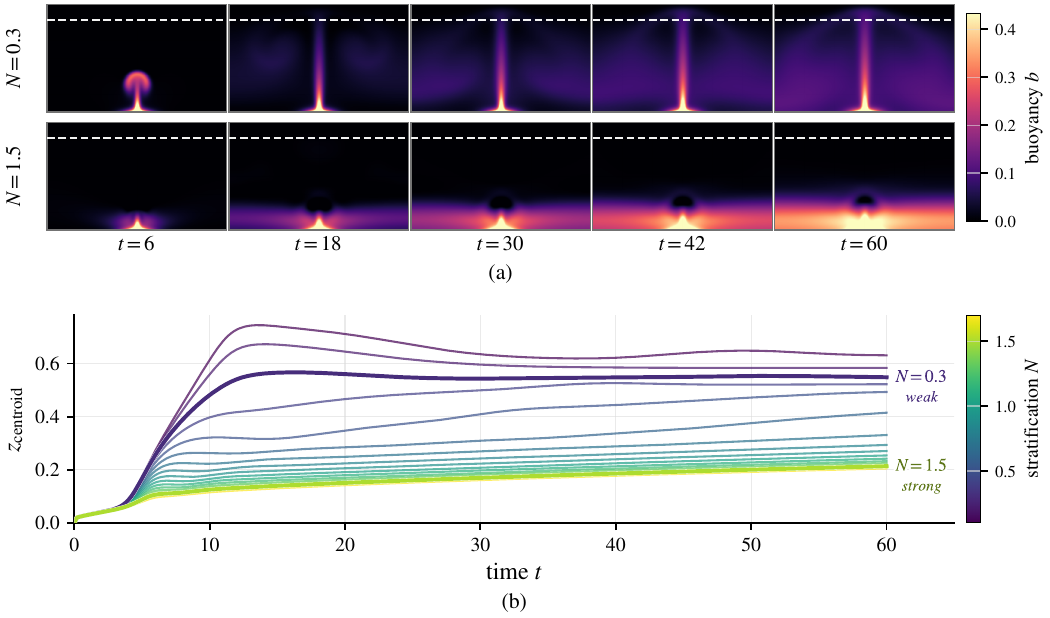}
\caption{(a) Full-order buoyancy field $b(x, z)$ at $N = 0.3$ and $N = 1.5$ at five times. The dashed line marks the absorbing-sponge base. (b) Buoyancy-weighted vertical centroid $z_{\rm centroid}(t)$ for the $17$ training $N$. Color encodes $N$.}
\label{fig:plume_N_regime}
\end{figure}

\subsection{Viscosity-parametric Burgers problem}
\label{sec:problem:burgers}

The cross-problem comparison uses a one-dimensional viscous Burgers equation on the periodic interval $[0, 1]$,
\begin{equation}
\partial_t u + u\, \partial_x u
\;=\; \nu\, \partial_x^2 u,
\qquad u(0, t) = u(1, t),
\label{eq:burgers_pde}
\end{equation}
with the Gaussian initial condition $u_0(x) = A \exp[-(x - x_0)^2 / 2 w_0^2]$ ($A = 1$, $x_0 = 0.3$, $w_0 = 0.1$) and kinematic viscosity $\nu$ as the varied parameter. The training set consists of the six values $\nu \in \{10^{-2},\, 5{\times}10^{-3},\, 2{\times}10^{-3},\, 10^{-3},\, 5{\times}10^{-4},\, 10^{-4}\}$ that span a smooth-to-shock transition centered near $\nu_c = 2 \times 10^{-3}$. The midpoint $\nu_c$ is identified from a characterization of the full-order data. The maximum spatial gradient $\max_x |\partial_x u(\cdot, t)|$ stays bounded by a fixed constant over the simulated time interval when $\nu \geq \nu_c$ but grows monotonically toward a near-singular profile when $\nu < \nu_c$. This reflects the balance between the diffusive term $\nu \partial_x^2 u$ and the nonlinear steepening produced by the convective term $u \partial_x u$ on the simulation time scale. The Gaussian initial condition and the periodic domain admit the classical estimate of the shock-formation time $t_{\rm shock} \sim w_0 / A$ \citep{whitham1974linear}. With these parameters and a simulation horizon comparable to $t_{\rm shock}$, the critical viscosity above which diffusion suppresses shock formation falls in the same decade as $\nu_c$, in agreement with the empirical midpoint we use. Below $\nu_c$ the solution exhibits sharp shock formation, while above $\nu_c$ it remains smooth.

The viscous Burgers equation is the standard one-dimensional model problem in parametric reduced-order modeling. It is unforced and has a single quadratic nonlinearity. It exhibits a smooth-to-shock parametric transition that parallels the subcritical-to-supercritical regime transition of the plume. It differs from the plume in field structure, evolving a single scalar field where the plume couples three. It carries no source term where the plume has a continuous boundary source, and it is one-dimensional where the plume is two-dimensional. Using Burgers as the comparison problem isolates which of the proposed methodological ingredients are problem-specific and which are general. This is the distinction that organizes the cross-problem comparison of \cref{sec:burgers}. Eq.~\eqref{eq:burgers_pde} is discretized by a second-order finite-difference scheme in space on a grid of $1{,}024$ points in $x$ and advanced by an explicit fourth-order Runge--Kutta time stepper. The in-window training horizon is $t_{\rm train} = 0.5$ and the test horizon is $t_{\rm final} = 2.5$, five times the training window and the same horizon multiple used for the plume. With snapshots sampled uniformly in time as for the plume, each trajectory gives $K = 501$ snapshots per parameter at output interval $\Delta t = 0.005$ on the $1{,}024$-point grid. For both problems the reduced order training pipeline takes the snapshot matrices as input and produces the parameter-dependent reduced operators of Eq.~\eqref{eq:opinf_ansatz} through the methodology of \cref{sec:methods}, at a computational cost documented in \appref{sec:app_cost}.


\section{Application to the buoyant plume}
\label{sec:plume}

This section evaluates \CROI on the two-dimensional buoyant plume of \cref{sec:problem:plume}. We report the accuracy of the reduced model with respect to the full-order solution (\cref{sec:plume:accuracy}), the recovery of the Morton--Taylor--Turner plume rise scaling (\cref{sec:plume:mtt}), long-time stability beyond the training horizon (\cref{sec:plume:longtime}), and robustness to the training set size, the reduced dimension, and the randomized basis construction (\cref{sec:plume:robustness}).

\subsection{Accuracy under leave-one-out evaluation}
\label{sec:plume:accuracy}

The principal accuracy assessment is leave-one-out over the $17$ training values of $N$. For each test parameter, the proposed framework is retrained on the remaining $16$ training values and evaluated on the held-out parameter through the integration of Eq.~\eqref{eq:opinf_ansatz} with the blended operators of \cref{sec:methods:loi}. The training is performed at the fixed configuration $(\mu_c, \delta, w_\alpha, \ell, r) = (0.5, 0.15, 0.05, 0.05, 20)$, where $\mu_c$ is the regime midpoint, $\delta$ the overlap radius, $w_\alpha$ the sigmoid blend width, $\ell$ the kernel bandwidth, and $r$ the reduced dimension per regime side (\cref{sec:methods:composition}). The per-parameter regression uses uniform Tikhonov regularization at weight $\beta = 10^{-3}$. The spectrum-aware per-mode weighting of \cref{sec:background:fit} is assessed separately as a robustness check (\cref{sec:burgers:alpha}). \Tabref{tab:plume_accuracy} summarizes the resulting accuracy and stability statistics. Before turning to the aggregate metrics, \figref{fig:plume_reconstruction} shows that the reduced reconstruction recovers the full-order fields in window across both stratification regimes and across all three state fields, with the two velocity components reconstructed as faithfully as the buoyancy. The rising mushroom cap of the weakly stratified regime and the trapped, laterally spreading structure of the strongly stratified regime are both reproduced to within a few percent relative Frobenius error. At the snapshot time $t = 6$ shown, the relative Frobenius error is $2.9\%$ ($u$), $1.9\%$ ($w$), and $6.3\%$ ($b$) at $N = 0.3$ and $4.6\%$, $3.9\%$, and $2.7\%$ at $N = 1.5$. The pointwise field error grows in the time-forecast windows, fastest in the weak free rise regime, while the integrated rise stays accurate (\cref{sec:plume:longtime}). The energy mechanism bounds the trajectory without pinning the pointwise field, which is why the field error and the integrated-rise error separate. The forecast field errors and rise-quantity errors are broken down across the structural variants in \appref{sec:app_multimetric}.

\begin{figure}[htbp]
\centering
\includegraphics[width=\linewidth]{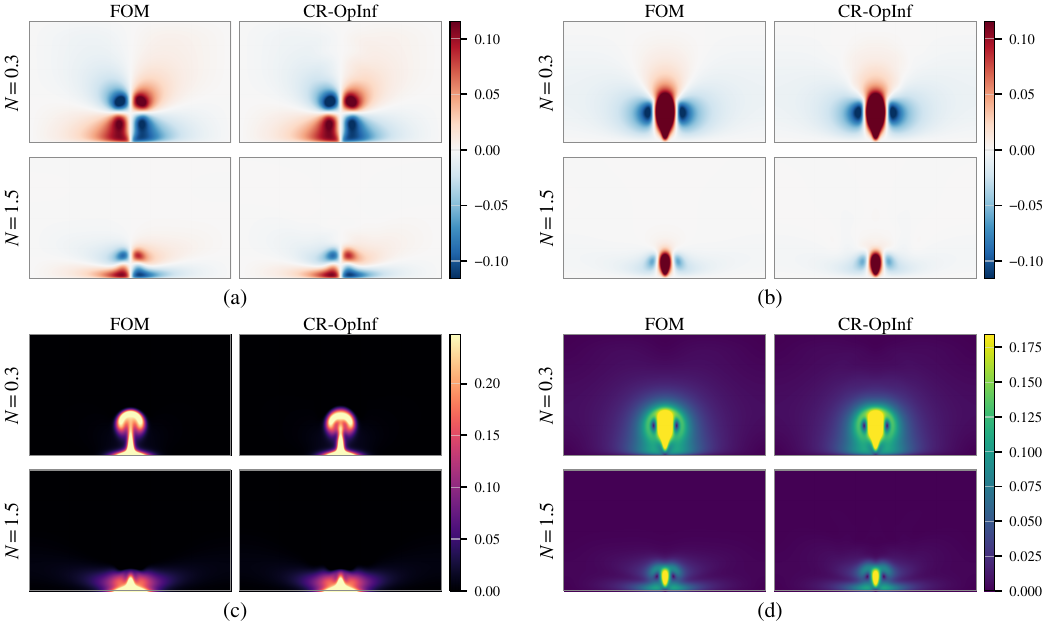}
\caption{In-window ($t = 6$) full-order and \CROI{} reconstruction at $N = 0.3$ and $N = 1.5$. (a) Horizontal velocity $u$. (b) Vertical velocity $w$. (c) Buoyancy $b$. (d) Speed $|U| = \sqrt{u^2 + w^2}$. The two velocities share one diverging color scale, clamped so that the $N = 0.3$ vertical-velocity core saturates. The buoyancy and the speed use sequential scales.}
\label{fig:plume_reconstruction}
\end{figure}

\begin{table}[t]
\centering
\caption{Accuracy of the proposed framework on the buoyant plume, leave-one-out over $17$ training values of $N$. The forecast windows reach $t = 20$ and $t = 60$ against the training horizon $t_{\rm train} = 12$. Here $z_{\max}(t)$ is the trajectory of the plume top and $z_{\rm centroid}(t)$ the buoyancy-weighted vertical centroid.}
\label{tab:plume_accuracy}
\small
\begin{tabular}{lrr}
\toprule
Metric & Mean & Maximum \\
\midrule
In-window Frobenius error ($t \leq 12$) & $7.91\%$ & $13.67\%$ \\
Forecast-1 error ($12 < t \leq 20$) & $26.65\%$ & $48.68\%$ \\
Forecast-2 error ($20 < t \leq 60$) & $61.74\%$ & $78.61\%$ \\
QoI correlation for $z_{\max}(t)$ (Pearson) & $0.90$ & -- \\
$z_{\max}$ relative root mean square error (over $N$) & $15.4\%$ & -- \\
$z_{\rm centroid}$ relative root mean square error & $11.2\%$ & -- \\
Trajectories stable at $t = 20$ & \multicolumn{2}{c}{$17/17$} \\
Trajectories stable at $t = 60$ & \multicolumn{2}{c}{$17/17$} \\
\bottomrule
\end{tabular}
\end{table}

The in-window Frobenius error and the quantity of interest correlation of \tabref{tab:plume_accuracy} measure complementary aspects of the reconstruction, the first the full state-level discrepancy and the second the bulk plume rise. The in-window Frobenius error of $7.91\%$ is the most demanding metric the table contains, because it penalizes every spatial and temporal residual at full resolution. To calibrate it, an unconstrained OpInf model fitted to a single stratification at its own optimal rank reaches a best-case in-window error of about two to five percent on this flow, $2.2\%$ at $N = 0.5$ and $4.9\%$ at $N = 1.3$. This floor is set by the advection- and oscillation-dominated, slowly decaying POD spectrum and is not a fitting deficiency. The $7.91\%$ leave-one-out error therefore sits within a small factor of the per-parameter optimum, and the residual gap is the cost of one surrogate that stays stable across the whole parameter range, relative to seventeen separately tuned per-parameter models. The quantity of interest (QoI) Pearson correlation of $0.90$ on the maximum-rise trajectory $z_{\max}(t)$ is the bulk-behavior counterpart, computed on the integrated time series instead of the full state field. Bulk plume rise is the decision-relevant quantity for the motivating engineering applications. A Pearson correlation of $0.90$ on the trajectory indicates that the reduced model reproduces the timing and shape of the plume rise dynamics, with the amplitude error recorded separately by the $z_{\max}$ root mean square error (RMSE) in \tabref{tab:plume_accuracy}. This fixed configuration is used unchanged across all $17$ test parameters and is stable at every one of them, and it carries over to Burgers as one cell of the cross-problem comparison of \cref{sec:burgers}, where the structural constraints apply unchanged while the problem-scale settings are fixed once from the Burgers data. The breakdown of these metrics across the four structural variants is given in \appref{sec:app_multimetric} (\figref{fig:plume_multimetric}).

The spectral diagnosis of \cref{sec:methods:diagnosis} already indicates that the instability of the affine baseline cannot be regularized away, and the long-time integration confirms it directly. The unconstrained affine baseline completes $0$ of $17$ trajectories to $t = 60$ at any regularization strength. Its lightest stable setting reaches a $4.5\%$ in-window error on the $12$ of $17$ trajectories it keeps bounded to $t = 20$, and its strongest setting holds all $17$ only to $t = 20$ at a degraded $29\%$, whereas the cyclic-skew constraint holds all $17$ trajectories to $t = 60$ at a $7.9\%$ in-window error.

\subsection{Recovery of the Morton--Taylor--Turner plume rise scaling}
\label{sec:plume:mtt}

A physics-anchored check for any plume ROM is the Morton--Taylor--Turner similarity prediction $z \sim N^{-3/4}$ for the trapping height of a buoyant plume in a stratified ambient \citep{morton1956turbulent}. We extract the buoyancy-weighted vertical centroid $z_{\rm centroid}(t)$ at long time from each full-order and reduced trajectory and fit a log-log power law in $N$. The more obvious choice $z_{\max}$ is sponge-clipped at the weakly stratified end (the plume reaches the sponge layer before the stratification has time to act) and sensitive to the phase of the internal-wave oscillation at the strongly stratified end. We therefore use the mass-weighted $z_{\rm centroid}$, a robust indicator of the trapping height to which the Morton--Taylor--Turner similarity applies.

The fit covers the thirteen training values $N \in \{0.5, 0.6, \ldots, 1.7\}$ in the trapping band. Below $N \approx 0.5$ the rising plume reaches the sponge layer before the stratification sets the trapping height, and $z_{\rm centroid}$ therefore saturates and falls off the power law. The entire strongly stratified end, including $N = 1.6$ and $1.7$, remains consistent with the fitted scaling. The data themselves therefore set the upper limit, and no arbitrary cutoff is imposed. Across this band, the full-order trapping height is well fit by a power law in $N$, and a least-squares log-log fit yields a slope of $-0.731$, with $R^2 = 0.968$ (\figref{fig:plume_mtt}(a)). Interpreting this exponent requires keeping the dimensionality of the simulation in mind. The classical Morton--Taylor--Turner result $z \sim N^{-3/4}$ is derived for an axisymmetric plume, whereas dimensional analysis for an idealized two-dimensional line source gives $z \sim N^{-1}$. The measured slope is far closer to the axisymmetric $-3/4$ than to the two-dimensional line-source $-1$. It matches the axisymmetric value to within about $3\%$ and misses the line-source value by more than a quarter. That a two-dimensional simulation tracks the axisymmetric exponent instead of the ideal line-source exponent is consistent with the finite source width, the finite vertical extent of the computational domain, and the post-transient window in which $z_{\rm centroid}$ is evaluated. All of these place the flow outside the asymptotic self-similar line-source regime. The measured value is an effective exponent, and we report its proximity to $-3/4$ as an empirical match to the axisymmetric law, without claiming recovery of the asymptotic line-source similarity.

Recovering the $N$-dependence alone is a necessary check that stops short of decisive evidence, since that exponent is partly fixed by dimensional consistency. We therefore also vary the buoyancy flux $F$, the vertically integrated rate at which the source supplies buoyancy and the second parameter of the classical scaling. For the Gaussian source of \appref{sec:app_fom}, this flux is proportional to the source amplitude. Varying the source strength (proportional to $F$) at fixed $N = 1.0$ in the full-order sensitivity study gives $z_{\rm centroid} \sim F^{+0.244}$ ($R^2 \approx 1$). This is within about $2.5\%$ of the axisymmetric flux exponent $+1/4$ and again far from the two-dimensional line-source value $+1/3$. The trapping heights from the two studies collapse onto the single similarity variable $F^{1/4} N^{-3/4}$ with a residual coefficient of variation of $1.3\%$. Because the two exponents are fit independently, the collapse only checks their consistency and does not by itself establish the similarity. We next test whether the reduced model reproduces this full-order behavior.

Moreover, the full-order slope settles over the deep window: fitting $z_{\rm centroid}$ over $t \in [50, 60]$ changes it by less than $0.01$ relative to the $t = 60$ value, whereas over the moderate window $t \in [16, 20]$ the full-order fit is steeper, at $-0.80$, because the weakly stratified trapping heights are still settling at that horizon. The slope recovered by the reduced model also depends on the evaluation horizon. Over the deep window $t \in [50, 60]$, five times the training horizon, it is $-0.40$, with $R^2 = 0.73$. Over the moderate window $t \in [16, 20]$, where the reduced trajectory still carries only the moderate forecast error of \tabref{tab:plume_accuracy}, the recovered slope is $-0.70$ ($R^2 = 0.95$), within $0.04$ of the converged full-order value and shallower than the same-window full-order fit of $-0.80$. The horizon dependence is the expected signature of forecast-error accumulation. The reduced model reproduces the scaling over the horizon where its trajectory remains reliable, while the fitted slope becomes shallower as error accumulates over the long forecast, and we therefore report the slope at both horizons so that the agreement over the reliable window and the degradation over the long horizon are both visible.

\begin{figure}[htbp]
\centering
\includegraphics[width=\linewidth]{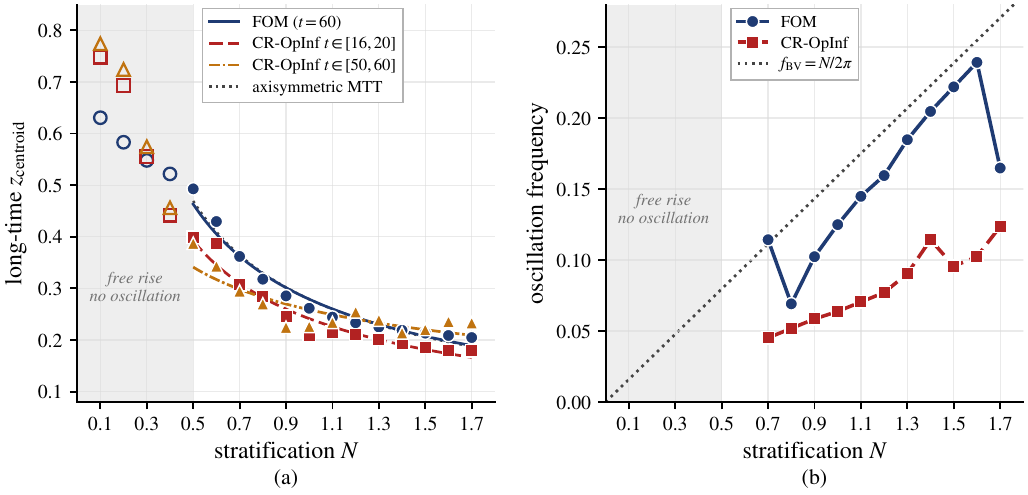}
\caption{(a) Long-time buoyancy-weighted centroid $z_{\rm centroid}$ versus $N$ for the full-order solution at $t = 60$ and \CROI{} over windows $t \in [16,20]$ and $[50,60]$. Solid markers and the log-log power-law fits cover $N \in [0.5, 1.7]$. Hollow markers lie in the shaded $N < 0.5$ band. The dotted line is the axisymmetric Morton--Taylor--Turner slope ($-0.75$). (b) Dominant post-transient centroid-oscillation frequency versus $N$ for the full-order solution and \CROI{} over the trapped regime $N \geq 0.7$, with the Brunt--V\"ais\"al\"a line $f_{\rm BV} = N/2\pi$.}
\label{fig:plume_mtt}
\end{figure}

This scaling recovery situates the present work within the small body of data-driven plume modeling. The closest data-driven effort regresses a low-order analytical model of an unconfined plume's centerline and edge envelope from video \citep{salas2025plume}. That model is tied to a single unstratified laboratory configuration and does not represent the resolved field. It addresses neither the parametric stratification nor the trapping at the level of neutral buoyancy that organize the present problem. The reduced model here instead resolves the full two-dimensional field across the stratification range and recovers the trapping-height scaling from the resolved trajectory. To our knowledge this is the first data-driven reduced-order model of a stratified buoyant plume to recover the Morton--Taylor--Turner plume rise scaling.

\subsection{Long-time stability and time-resolved trajectories}
\label{sec:plume:longtime}

The leave-one-out trajectories of \CROI remain non-divergent for all $17$ test parameters throughout $t \in [0, 60]$, five times the training horizon $t_{\rm train} = 12$. This boundedness is consistent with the energy mechanism of \cref{sec:methods:cyclicskew}. The per-parameter cyclic-skew operators are themselves mildly unstable, with a worst-case growth rate $\lambda_{\max}(\operatorname{sym}\hat{\bm{A}})$ (the energy-based measure taken up in \cref{sec:discussion}) of roughly $+0.24$ to $+0.46$ at the super-regime nodes ($N \geq 0.6$). This symmetric part drives the energy balance, and its largest eigenvalue reaches $+0.46$ at the worst blended parameter in the leave-one-out evaluation (\appref{sec:app_certified}). The reduced trajectory nonetheless stays bounded because the cyclic-skew constraint makes the quadratic term do no net work on the energy (\propref{prop:energy_neutral}). With the finite-time blowup channel closed, the mildly unstable linear part permits at most exponential growth, and the integrated trajectories in fact grow slowly and do not diverge over the integration horizon. In contrast, the affine joint fit displaces the spectrum to positive real parts of order unity (\figref{fig:bauer_fike}) while leaving the quadratic channel open, and diverges. An exploratory unconstrained plain least-squares configuration at $r = 25$, $\mu_c = 0.9$ shows an abrupt transition to divergence that is not observed under the cyclic-skew constraint, a conditioning property of the unconstrained fit independent of the basis (\cref{sec:discussion:synthesis}).

A spectral check of the strongly stratified trajectories ($N \geq 0.7$) makes the interpretation as a trapped oscillation precise. On the post-transient window the full-order centroid oscillates at a frequency that broadly increases with $N$ and approaches the cyclic Brunt--V\"ais\"al\"a frequency $f_{\rm BV} = N/2\pi$ from below (\figref{fig:plume_mtt}(b)). We report a frequency only for $N \geq 0.7$. Below that, near the regime boundary, the plume is only marginally trapped and the post-transient centroid signal is weak. Its dominant spectral peak lies below $f_{\rm BV}$ but is barely stronger than the slow settling trend, with a peak-to-floor power ratio near unity compared with more than threefold at the trapped nodes, and direct inspection of the spectra at $N = 0.5$ and $0.6$ confirms that no internal-wave frequency is cleanly resolved there. Over $N \in [1.1, 1.6]$ the two agree to within about $17\%$. At $N \leq 1.0$ the short post-transient window resolves the spectral peak only to the nearest few bins of the discrete Fourier transform. This resolution supports the increasing trend but is too coarse for a quantitative match, and the isolated $N = 1.7$ estimate is limited by the same bin resolution. The approach to $f_{\rm BV}$ is the behavior expected at leading order for the trapped motion here. In the post-transient window the buoyancy fluctuation energy is almost entirely horizontally uniform. More than $98\%$ lies in the zero-horizontal-wavenumber component (a fraction that rises monotonically with $N$), and a single dominant vertical mode sits at $k_z \approx 2\pi / L_z$. The trapped oscillation is therefore a near-vertical buoyancy heave, for which the frequency tends to the buoyancy frequency as the horizontal wavenumber vanishes. The small offset below $f_{\rm BV}$ is the finite-amplitude and finite-domain correction to that limit. The reduced model reproduces the oscillatory regime and its broadly increasing trend with $N$, but at roughly half the full-order frequency. This factor-of-two error in the dominant internal-wave frequency is a qualitative limitation in the trapped regime. The reduced model is therefore not suited to spectral or resonance many-query tasks there, while the integrated-rise quantities that motivate this class of surrogates are insensitive to it. This sharpens the distinction between integrated-rise accuracy and pointwise field fidelity taken up in \cref{sec:extrap:correlation}. The single-basis baselines are omitted from this frequency comparison, since they diverge before the post-transient window the frequency is measured on. Both fits diverge before $t = 60$ at every one of the four parameters of \figref{fig:plume_time_resolved}, the plain least-squares fit between $t \approx 8$ and $t \approx 26$ and the Lyapunov-$\bm{A}$ fit between $t \approx 18$ and $t \approx 48$, and a post-transient frequency is therefore undefined for them.

\Figref{fig:plume_time_resolved} shows that the cyclic-skew model tracks the integrated rise across the whole regime range while the two single-basis variants do not survive the forecast. The two rows show complementary rise diagnostics. The top row is $z_{90}(t)$, the height below which $90\%$ of the vertically integrated buoyancy lies, a robust measure of the plume top that is smoother and less sensitive to transient cap detachment than the pointwise maximum-height trajectory. The bottom row is the buoyancy-weighted centroid $z_{\rm centroid}(t)$. At the production reduced dimension $r = 20$ selected in \cref{sec:plume:robustness}, the reduced model tracks the full-order trajectory closely in the in-window block at all four parameters. The reduced plume top begins above the full-order solution at $t = 0$ and settles onto it within the first few time units, since the mean-subtracted basis represents the near-empty initial field only weakly, leaving the dynamics and the long-time behavior unaffected. In the deep-forecast block beyond $t = 20$ the reduced trajectory remains qualitatively faithful, and the rise and trapping height are preserved while it accumulates phase and amplitude errors of the magnitude expected from the in-window error level. Both rows also show the two single-basis variants that drop the cyclic-skew constraint. The plain least-squares and Lyapunov-$\bm{A}$ models both diverge well before $t = 60$, each marked at its integrator-failure time (the plain least-squares fit fails inside the training window at the weakest stratification shown), while only the cyclic-skew model integrates to $t = 60$ at every parameter, the trajectory-level counterpart of the cross-problem structural comparison of \cref{sec:burgers}.

\begin{figure}[htbp]
\centering
\includegraphics[width=\linewidth]{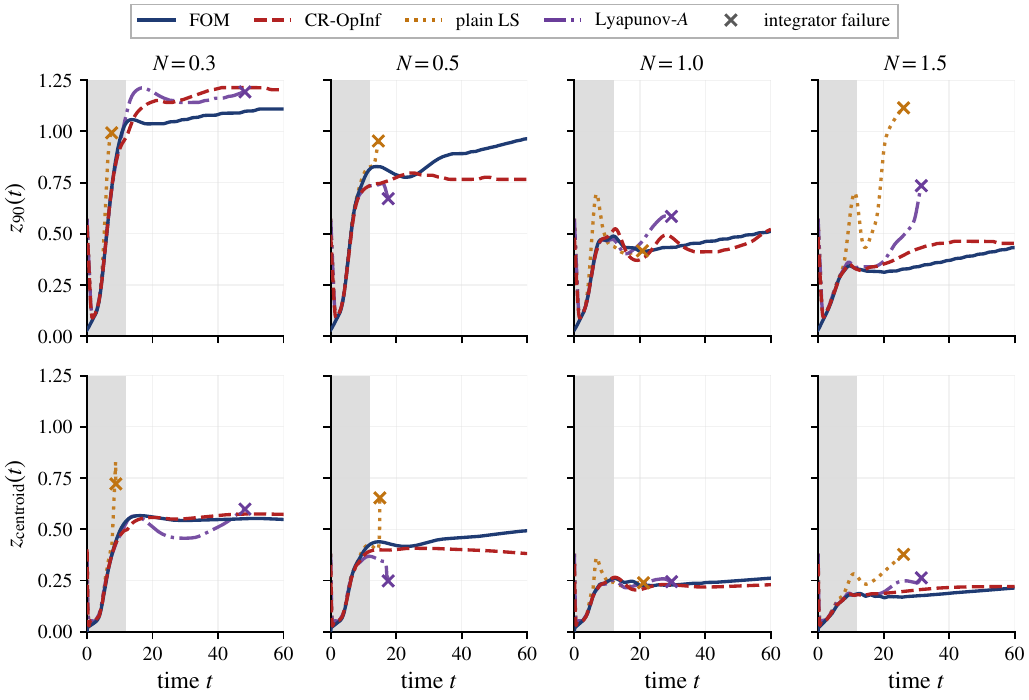}
\caption{Buoyancy-mass 90th-percentile height $z_{90}(t)$ in the top row and buoyancy-weighted centroid height $z_{\rm centroid}(t)$ in the bottom row, for the full-order solution, \CROI{}, and the plain least-squares and Lyapunov-$\bm{A}$ single-basis variants, one column per parameter $N \in \{0.3, 0.5, 1.0, 1.5\}$, integrated to $t = 60$. Both rows share one vertical scale. Crosses mark integrator failure of the diverging variants, and the shaded band is the training window $t \leq 12$.}
\label{fig:plume_time_resolved}
\end{figure}

\subsection{Robustness to the training set, hyperparameters, and initial condition}
\label{sec:plume:robustness}

The accuracy and stability reported above are evaluated at the fixed training configuration. We now assess the sensitivity of these results to six control variables in turn: the size of the training parameter set, the reduced dimension, the location of the regime midpoint, the kernel bandwidth, the sigmoid blend width, and the random seed used by the truncated singular value decomposition that constructs the reduced basis. Finally, we test robustness to perturbations of the initial condition.

\paragraph{Training-set size}
The local operator interpolation framework of \cref{sec:methods:loi} fits a separate operator at each training parameter. The per-parameter fit quality therefore does not depend on how many training parameters are present. What depends on the training set size is the kernel-interpolation gap between them. We rebuild the construction with training sets of size $M \in \{3, 5, 7, 9, 11, 13, 15, 17\}$ obtained by uniform sub-sampling of the full $17$-parameter grid. Every $M$ is evaluated on the same leave-one-out measure over all $17$ parameters as the full grid. The in-window error distribution at each $M$ is therefore directly comparable. \Figref{fig:sample_complexity} shows that distribution as a box over the $17$ parameters, for the proposed cyclic-skew model and for the two alternative fits that share its regime-split basis, the unconstrained least squares and the Lyapunov-stable linear operator. These two baselines are placed here on the same regime-split basis as the proposed model to isolate the effect of the constraint, whereas the long-time trajectories of \cref{sec:plume:longtime} show them in their single-basis form (\tabref{tab:transfer_matrix}, regime bridging off). For the cyclic-skew model the median in-window error falls smoothly with training set size, from $33\%$ at $M = 3$ to $8\%$ at the full grid, where it reproduces the accuracy of the proposed model reported in \cref{sec:plume:accuracy}. The distribution stays tight at every $M$. The cyclic-skew and Lyapunov-stable fits track each other closely on this in-window measure, since the difference the constraints make is one of long-time stability and leaves in-window accuracy essentially unchanged. The unconstrained fit separates from both in the upper tail. Its median is comparable, but its in-window error blows up at isolated parameters at every $M$. The full-range whisker reaches several thousand percent at the worst parameter even on the full grid, whereas the constrained fits stay below $90\%$ throughout. This is the in-window manifestation of the long-time separation reported in \cref{sec:plume:longtime} and \appref{sec:app_multimetric}. The smooth degradation of the median with coarser sampling reflects the widening of the kernel-interpolation gap. The construction is therefore robust to a factor-of-six reduction in the training set size, and the accuracy degrades smoothly without ever collapsing.

\begin{figure}[htbp]
\centering
\includegraphics[width=\linewidth]{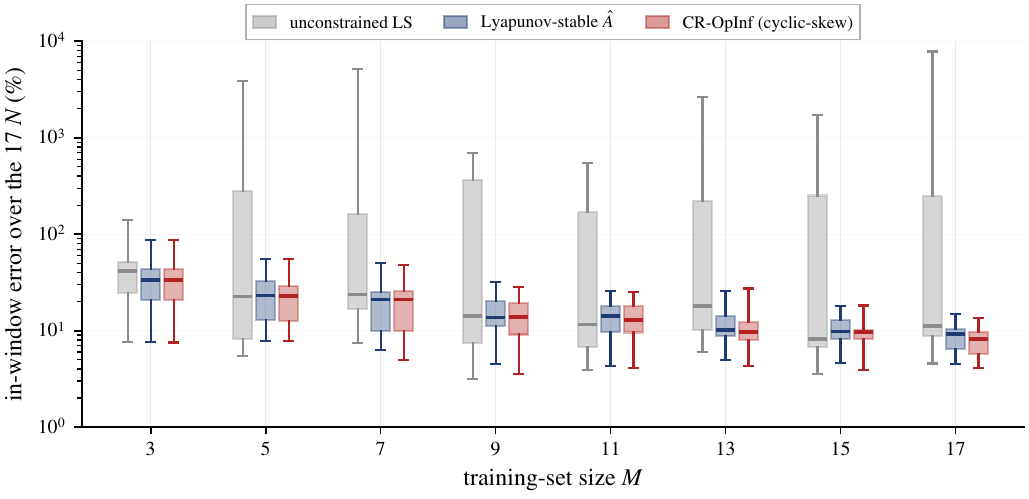}
\caption{In-window Frobenius error versus training set size $M$, for three fits sharing the regime-split basis, the unconstrained least squares, the Lyapunov-stable linear operator, and the proposed cyclic-skew constraint. Each box aggregates the $17$ leave-one-out parameters at one $M$ and spans the interquartile range and median, with full-range whiskers.}
\label{fig:sample_complexity}
\end{figure}

\paragraph{Reduced dimension}
We vary the reduced dimension $r \in \{10, 15, 20, 25, 30\}$ on the production training set. All five rank values yield full long-time stability and produce in-window errors that decrease monotonically from $12.4\%$ at $r = 10$ to $7.8\%$ at $r = 30$, and the production choice $r = 20$ sits within $0.2$ percentage points of the asymptotic in-window error level. The choice is bounded below by the accuracy that low rank sacrifices and above by the conditioning of the per-parameter regression. Each per-parameter fit estimates $1 + r + r(r+1)/2$ coefficients per reduced coordinate from the constant, the linear row, and the compressed quadratic row. This amounts to $231$ coefficients at $r = 20$ against the $481$ in-window snapshots available per parameter. The regression is therefore overdetermined at the production rank and first becomes underdetermined near $r = 30$. The quadratic block holds $210$ of those $231$ coefficients and is weakly constrained by the smooth in-window trajectory, whose effective rank lies well below the truncation rank. The Tikhonov regularization that the standard OpInf regression carries \citep{mcquarrie2021combustion, mcquarrie2023nonintrusive} resolves the resulting conditioning, and the rapid decay of the POD singular value spectrum keeps the truncation admissible. The modes beyond $r = 20$ carry little energy and the in-window error saturates, while the regime-side snapshot blocks nominally support $r \approx 60$ (sub side) and $r \approx 80$ (super side) at a relative singular value tolerance of $10^{-3}$. Raising the rank past the saturation only enlarges the regression and weakens the conditioning of the quadratic block without improving accuracy. The production choice $r = 20$ is therefore a deliberate tradeoff, and the cyclic-skew structural constraint maintains stability across the full range of $r$ we tested.

\paragraph{Regime midpoint}
The regime midpoint $\mu_c$ of \cref{sec:methods:regimebridge} is chosen by comparing candidate values. We evaluate $\mu_c \in \{0.4, 0.5, 0.6, 0.7, 0.8, 0.9\}$ under the same leave-one-out protocol. All six configurations preserve full long-time stability and yield in-window errors within one percentage point of each other, from $7.9\%$ at $\mu_c = 0.5$ to $8.68\%$, and the $\mu_c = 0.5$ value matches the production $7.91\%$ within the reported seed spread. A broad shallow minimum sits near $\mu_c = 0.5$. We adopt $\mu_c = 0.5$, the empirical minimum. Selecting $\mu_c$ on the same leave-one-out metric that is later reported carries a risk of selection bias. The sub-percentage-point spread of the plateau limits the effect. The value is moreover consistent with the physics, since the dynamical transition is diffuse and the candidate indicators of \cref{sec:problem:plume} bracket $N \in [0.35, 0.8]$. The midpoint $\mu_c = 0.5$ sits within this band at the onset of internal-wave oscillation.

\paragraph{Kernel bandwidth}
The kernel bandwidth $\ell$, the single hyperparameter of the operator-interpolation step, has a broad insensitive range. Reducing it tenfold from the adopted $0.05$ moves the leave-one-out error by under $0.03$ percentage points, and widening it past the training-node spacing degrades accuracy smoothly. Every tested value $\ell \in \{0.005, 0.01, 0.02, 0.05, 0.1, 0.2, 0.5, 1.0\}$ keeps all $17$ trajectories stable to $t = 60$. The sigmoid blend width behaves the same way: over $w_\alpha \in \{0.01, 0.02, 0.05, 0.1, 0.2, 0.5, 1.0\}$ the error stays within $0.1$ points of its value at the adopted width for all $w_\alpha \leq 0.1$, it degrades only at widths that approach the parameter range and wash out the regime distinction, and every tested width keeps all $17$ trajectories stable.

\paragraph{Randomization and initial condition}
The randomized truncated singular value decomposition that builds the basis introduces controlled stochasticity. Repeating the proposed pipeline across five random seeds gives a mean in-window error of $7.93\%$ with a standard deviation of $0.01\%$, with the forecast-block errors, the Morton--Taylor--Turner slope, and full long-time stability all within sampling uncertainty of the production-seed value. This seed-insensitivity is consistent with the well-separated leading $r = 20$ singular values and the Davis--Kahan basis-perturbation bound of \appref{sec:app_theory}. Boundedness is also a property of the dynamics that extends well beyond the data initial condition. Integrating from the data initial condition and from six perturbations of it, through amplification by factors of two, five, and ten and through three random directions at matched norm, leaves every one of the $119$ runs across all $17$ parameters bounded to $t = 60$. Most relax back toward the data amplitude, a behavior indicating that the data-scale state is locally attracting (\appref{sec:app_energy}).


\section{Cross-problem transfer to viscous Burgers}
\label{sec:burgers}

The plume application of \cref{sec:plume} establishes that \CROI is accurate, long-time stable, and physics-consistent on the two-dimensional Boussinesq problem with the Brunt--V\"ais\"al\"a frequency as the varied parameter. This section tests whether the same construction transfers to a qualitatively different parametric PDE, and which of the structural constraint choices we adopted on the plume are problem-specific versus problem-independent. We execute a $2 \times 4$ comparison matrix on the viscous Burgers problem of \cref{sec:problem:burgers}. The matrix crosses the on and off settings of the regime-bridging basis split of \cref{sec:methods:regimebridge} with the four fit methods. The four fit methods are plain least squares, Lyapunov-stable $\bm{A}$ only, cyclic-skew $\bm{H}$ only, and both constraints. The resulting eight cells are compared against the corresponding eight cells on the plume. \Figref{fig:burgers_reconstruction} shows the full-order Burgers field beside the reduced reconstruction in the two regimes the split bridges, a smooth diffusively spreading profile at high viscosity and a steepening shock front at low viscosity.

\begin{figure}[htbp]
\centering
\includegraphics[width=\linewidth]{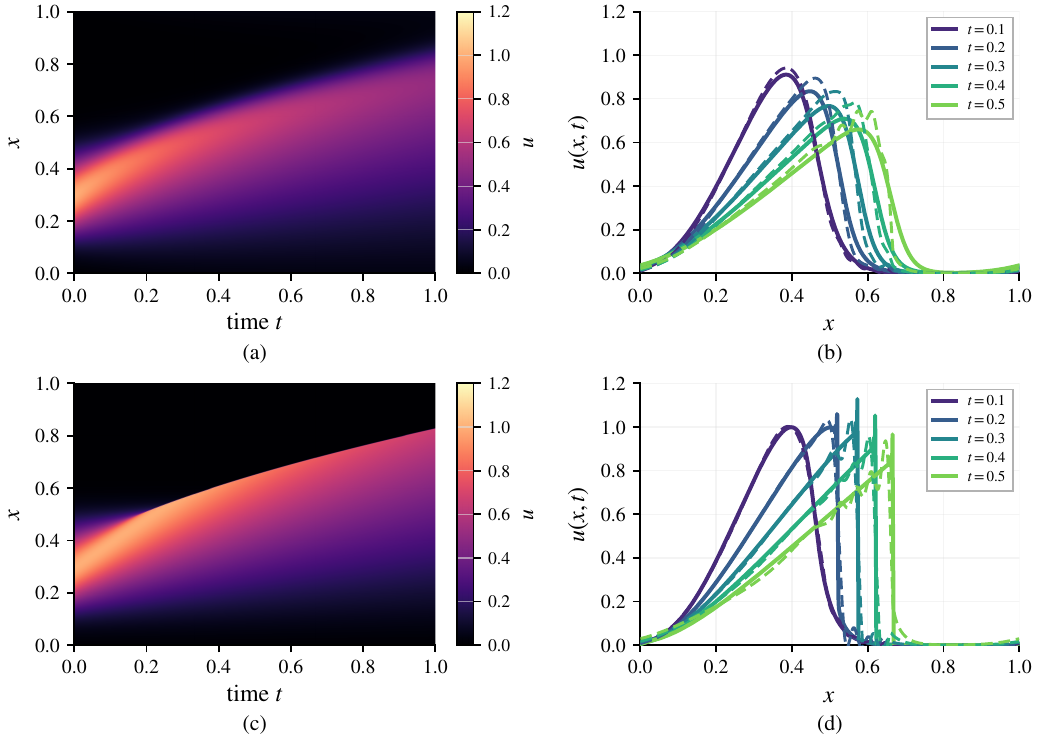}
\caption{Viscous Burgers reconstruction. (a) Full-order space--time field $u(x, t)$ for the smooth case ($\nu = 10^{-2}$). (b) Full-order (solid) and \CROI{} (dashed) profiles $u(x)$ at five in-window times for the smooth case, with color encoding time. (c) Full-order space--time field for the shock case ($\nu = 10^{-4}$). (d) Profiles as in (b) for the shock case.}
\label{fig:burgers_reconstruction}
\end{figure}

\subsection{The cross-problem comparison matrix}
\label{sec:burgers:matrix}

The Burgers training set is the six-parameter set of \cref{sec:problem:burgers}. The regime split is at $\nu_c = 2 \times 10^{-3}$, the smooth-to-shock midpoint identified in \cref{sec:problem:burgers} from the gradient-blowup characterization of the full-order data and consistent with the classical shock-formation-time estimate. The plume midpoint of \cref{sec:plume:robustness} is the empirical optimum of a study over candidate midpoints. This Burgers split is instead fixed at the data-characterized midpoint that the shock-formation-time estimate independently supports, and it is not separately tuned. The reduced dimension is $r = 10$ per regime side, the kernel blend is taken in $\log\nu$ space, and the overlap and sigmoid construction match the plume. Each per-parameter regression is Tikhonov-regularized and therefore well posed at this rank.

\Tabref{tab:transfer_matrix} reports the eight cells on each PDE. Each cell contains the mean in-window Frobenius error and the number of test trajectories that integrate to the full test horizon without numerical divergence. For the plume this horizon is $t_{\rm final} = 60$, five times its training window $t_{\rm train} = 12$, and for Burgers it is $t_{\rm final} = 2.5$, five times its training window $t_{\rm train} = 0.5$, the same horizon multiple on both problems. As on the plume, each Burgers cell is a leave-one-out average over the six training viscosities. The table shows that the stability hierarchy of \cref{sec:plume} transfers across problems. Across both PDEs, the cyclic-skew quadratic constraint alone keeps every trajectory bounded to the full test horizon of five training horizons, and it does so even though the inferred reduced linear operator is unstable at every parameter. On both problems the inferred linear operators are nowhere dissipative in the worst-case sense (\appref{sec:cert_cost_law}). The Lyapunov-stable linear constraint paired with it by default in the guaranteed-stable models of \citep{goyal2024guaranteed} does not improve trajectory completion in these tests and increases the in-window error, because long-time boundedness here follows from the closed cubic energy channel and requires no stabilized linear spectrum. The remaining cells clarify how the cyclic-skew constraint, the Lyapunov-stable linear constraint, and the regime-bridging split each affect trajectory completion and in-window error.

\begin{table}[t]
\centering
\caption{Cross-problem $2 \times 4$ comparison matrix of four fit methods (rows) across regime-bridging off or on and two PDEs (column groups for the plume and Burgers). Each cell reports in-window Frobenius error and trajectory count. RB denotes the regime-bridging split (\cref{sec:methods:regimebridge}). Cells are evaluated at a single representative basis seed; \tabref{tab:certified} in \appref{sec:app_certified} reports five-seed averages.}
\label{tab:transfer_matrix}
\small
\begin{tabular}{lcccc}
\toprule
& \multicolumn{2}{c}{Plume ($M = 17$, $\mu = N$)}
& \multicolumn{2}{c}{Burgers ($M = 6$, $\mu = \nu$)} \\
\cmidrule(lr){2-3}\cmidrule(lr){4-5}
Fit method & RB off & RB on & RB off & RB on \\
\midrule
Plain LS                            & NaN, $0/17$    & $733\%, 2/17$  & $602.7\%, 0/6$  & $107.0\%, 0/6$ \\
Lyapunov-$\bm{A}$ only          & $9.9\%, 1/17$  & $9.5\%, 0/17$  & $12.5\%, 6/6$   & $9.6\%, 1/6$ \\
\textbf{Cyclic-skew $\bm{H}$}   & \textbf{10.8\%, 17/17} & \textbf{7.9\%, 17/17} & \textbf{11.8\%, 6/6} & \textbf{7.7\%, 6/6} \\
Both (Lyapunov + skew)              & $21.8\%, 17/17$ & $16.8\%, 17/17$ & $12.3\%, 6/6$  & $7.8\%, 6/6$ \\
\bottomrule
\end{tabular}
\end{table}

The plain least-squares baseline fails in all four of its cells in the matrix, where no more than $2$ of $17$ plume trajectories and none of the $6$ Burgers trajectories stay bounded. Here \emph{plain LS} is the local operator interpolation of \cref{sec:methods:loi} with no structural constraint. Operators are inferred per parameter and blended exactly as in the proposed method, but the quadratic operator is left unconstrained. The divergence of the plain-LS fit is therefore distinct from the linear-spectrum instability of the \emph{affine joint} fit of \cref{sec:methods:diagnosis}. The per-parameter linear operators here are only mildly unstable, with maximum real part between $+0.04$ and $+0.10$. The unconstrained quadratic operator leaves the nonlinear energy injection channel open (\propref{prop:energy_neutral}). Fifteen of the $17$ trajectories diverge at times spanning $t = 4$ to $t = 44$ with median $8$, while the remaining two complete the horizon, consistent with the $2/17$ entry of \tabref{tab:transfer_matrix}. The linear part alone cannot produce finite-time divergence, and even with the measured eigenvector conditioning its amplification by $t = 4$ is bounded. The early onset therefore indicates that the open quadratic channel is injecting energy. The constraint of \propref{prop:energy_neutral} removes the finite-time blowup route this channel admits, and with it the divergence. The affine joint fit fails through the linear part that the per-parameter interpolation removes, while the unconstrained per-parameter fit fails through the quadratic part that the cyclic-skew constraint removes. These results show that constraining the quadratic operator is necessary for robust trajectory completion on both problems. An additional cell at a single representative basis seed separates the two roles cleanly. Imposing the cyclic-skew constraint on the plume's affine joint fit yields $17$ of $17$ trajectories bounded to $t = 60$, and the constraint reduces the joint fit's maximum real eigenvalue from the unconstrained $[+0.97, +1.78]$ to $[+0.38, +0.51]$. Yet its training-parameter (in-sample) error still averages $43\%$ and exceeds $90\%$ at the free rise nodes where the affine form is most compromised.

In every cell the cyclic-skew quadratic constraint alone achieves full stable trajectory completion and, among the configurations that complete, the lowest in-window error. The lower plume RB-off error of the Lyapunov-stable-only fit ($9.9\%$) comes from a model that integrates only $1$ of $17$ trajectories to $t = 60$, and thus it is not a competing operating point. The Lyapunov-stable linear operator constraint of \citep{goyal2024guaranteed} improves neither accuracy nor stability when used in combination with the cyclic-skew constraint. The combined constraint increases the in-window error relative to the cyclic-skew constraint alone on both problems, markedly on the plume ($16.8\%$ versus $7.9\%$) and marginally on Burgers ($7.8\%$ versus $7.7\%$). On these two problems, the Lyapunov-stable linear operator constraint is therefore redundant once the cyclic-skew quadratic constraint is in place and does not improve trajectory completion. It may still help where a residual instability remains, and whether the redundancy is universal remains open. The accuracy it costs is the price of enforcing strict dissipativity on an advection-dominated flow whose data-optimal operators are far from dissipative (\appref{sec:cert_cost_law}).

The Lyapunov-stable-only alternative is not robustly long-time stable in the matched five-horizon test. On Burgers, it completes all six trajectories with the single basis ($12.5\%$) but keeps only one of six bounded under the regime-split basis ($9.6\%$), and it fares no better on the plume. Only the cyclic-skew constraint completes every trajectory on both problems and both basis settings. That the Lyapunov-stable constraint at least survives the single-basis Burgers case reflects the underlying quadratic structure. The one-dimensional Burgers nonlinearity is a single quadratic in $u$, whereas the plume couples three fields $(u, w, b)$ through convective and stratification terms. That richer quadratic interaction is what the linear-only constraint does not control.

\Figref{fig:transfer_hierarchy} shows the principal result of the section, that the two ends of the fit-method ranking are conserved across the two problems. The cyclic-skew-only fit completes every trajectory on both problems at the lowest error among the completing configurations, and the plain least-squares fit never completes every trajectory, while the two intermediate constraints exchange order between the two problems. The conserved content of the hierarchy is therefore its endpoints. Both problems share an advective quadratic nonlinearity with an energy-preserving structure under the boundary conditions considered, and Burgers provides a one-field scalar analogue of the plume's coupled convective dynamics. The finding is scoped to parametric PDEs with an energy-preserving convective nonlinearity. Extending the test to a reaction-dominated quadratic, whose nonlinearity injects or removes energy, is a natural next step.

\begin{figure}[htbp]
\centering
\includegraphics[width=\linewidth]{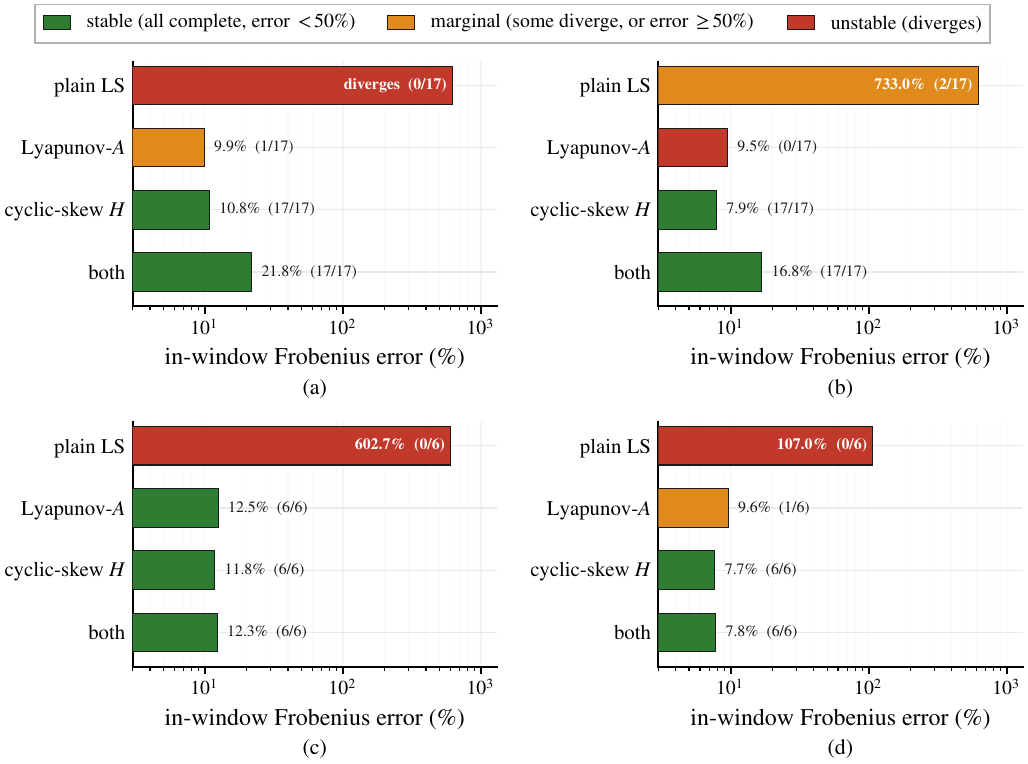}
\caption{In-window Frobenius error of the four fit methods (plain least squares, Lyapunov-stable linear part, cyclic-skew quadratic part, both) for (a) plume single basis, (b) plume regime-split basis, (c) Burgers single basis, (d) Burgers regime-split basis. Bars are annotated with the in-window Frobenius error and the number of trajectories reaching the test horizon (five training horizons on both problems), and are colored by grade: all trajectories complete with error below $50\%$; partial completion or error above $50\%$; none reach the horizon (or the fit diverges in window, where the error is omitted).}
\label{fig:transfer_hierarchy}
\end{figure}

\subsection{Sensitivity to the spectrum-aware Tikhonov exponent}
\label{sec:burgers:alpha}

The standard regularized regression \citep{mcquarrie2021combustion} admits a spectrum-aware Tikhonov variant (\cref{sec:background:fit}), in which the per-mode prior strength scales as the $k$-th singular value of the snapshot matrix raised to a power, analogous to the mode-dependent weights of Bayesian OpInf \citep{guo2022bayesian}. The production regression is the power-zero member of this family, the uniform weight. The variant is not part of the proposed methodology, and we vary the exponent only to confirm the configuration is robust to it and to document that no exponent is a tuned contribution. Across powers from zero to four the in-window error varies by about one percentage point on the plume ($7.8\%$ to $8.7\%$) and by about $0.6$ points on Burgers ($7.05\%$ to $7.66\%$), with no monotone trend. Once the cyclic-skew constraint is in place, the tail modes are already structurally constrained and the per-mode weighting adds no effective budget. The uniform weight remains the default, and no exponent contributes to the reported accuracy. We also ran a noise study on the plume with $\sigma_{\rm n} \in \{0, 1, 5, 10\}\%$, where $\sigma_{\rm n}$ denotes the relative amplitude of Gaussian white noise added to the snapshot columns as a percentage of the Frobenius norm of the noiseless snapshot matrix, and it confirms that the exponent gives no noise robustness either. For $\sigma_{\rm n} \geq 1\%$, the spectrum-weighted plain least-squares fit with no structural constraint diverges across the full range of exponents. The cyclic-skew model refit on noisy snapshots at $\sigma_{\rm n} \in \{1, 5\}\%$ separates the two failure channels once more. Every trajectory stays bounded at both noise levels, and the fit completes $17$ of $17$ to $t = 60$, even though five-percent noise inflates the inferred worst-case growth rate beyond $+1.0$. The accuracy of the bounded fit nonetheless degrades with noise, as the per-parameter in-sample error rises from $1.6\%$ noiseless to $4.5\%$ at one percent and $66\%$ at five percent. Recovering accuracy under noise would require a denoising front end, which boundedness alone does not provide.

The single-step least-squares residual of \cref{sec:methods:loi} belongs to the family of multi-step rollout residuals introduced there. The Burgers transition admits no analogue of the Morton--Taylor--Turner plume rise exponent to serve as an external scaling target for a rollout objective. We therefore did not execute the multi-step study on Burgers and make no claim about it here. The structural form of the per-parameter regression of \cref{sec:methods:loi} is identical across both problems. The same multi-step family is therefore available on Burgers for any application that introduces an integrated physics target as a secondary fit objective.

As on the plume, the structural constraints and the construction of \cref{sec:methods:composition} transfer unchanged to Burgers, with the problem-scale settings (reduced dimension, kernel coordinate, and regime midpoint) fixed once from the Burgers data without tuning against test error. The transferred configuration achieves $7.7\%$ in-window error with all six trajectories bounded to five training horizons. As on the plume, its forecast error grows beyond the training window. The model is trained on short-horizon dynamics and extrapolates the later decay of the viscous solution, while the trajectory stays bounded.


\section{Parameter space extrapolation}
\label{sec:extrap}

\cref{sec:plume} and \cref{sec:burgers} evaluate the reduced model primarily in leave-one-out settings, where the test parameters lie within the sampled parameter range and are bracketed by nearby training values except at the endpoints. This section addresses the harder question of how the model behaves when the test parameter falls outside the convex hull of the training set, or in a dynamical regime entirely held out from training. We report three scenarios on the plume. A boundary holdout removes the four extreme parameters from the training set. A super-to-sub regime extrapolation holds out the entire sub-regime. The reverse sub-to-super regime extrapolation holds out the entire super-regime.

\subsection{Three extrapolation scenarios}
\label{sec:extrap:scenarios}

Each scenario retrains the proposed framework on a strict subset of the $17$ training parameters and evaluates on the held-out parameters. The basis is recomputed on the reduced training set to avoid information leakage. Because the in-distribution study showed weak sensitivity to the seed of the randomized SVD, we report the extrapolation trends for one representative seed. \Tabref{tab:extrap} summarizes the three scenarios and the resulting in-window accuracy and stability statistics, where $n_{\rm stable}$ denotes the number of test trajectories that integrate to $t = 60$ without numerical divergence and $n_{\rm test}$ the total number of held-out test points.

\begin{table}[t]
\centering
\caption{Parameter space extrapolation scenarios on the plume, with the size- and distance-matched control pair B$'$/C$'$ and the in-distribution baseline; parameter-set sizes in parentheses. The $z_{\max}$ column is the mean in-window temporal correlation of the plume top trajectory $z_{\max}(t)$ with the full-order data.}
\label{tab:extrap}
\footnotesize
\setlength{\tabcolsep}{3.5pt}
\begin{tabular}{lccccc}
\toprule
Scenario & Training $N$ & Held-out $N$ & In-window error & $z_{\max}$ corr. & $n_{\rm stable}/n_{\rm test}$ \\
\midrule
A: Boundary holdout & $\{0.3, 0.4, \ldots, 1.5\}$ (13) & $\{0.1, 0.2, 1.6, 1.7\}$ (4) & $16.24\%$ & $0.89$ & $4/4$ \\
B: Super-to-sub & $\{0.6, 0.7, \ldots, 1.7\}$ (12) & $\{0.1, 0.2, \ldots, 0.5\}$ (5) & $51.77\%$ & $0.90$ & $5/5$ \\
C: Sub-to-super & $\{0.1, 0.2, \ldots, 0.5\}$ (5) & $\{0.6, 0.7, \ldots, 1.7\}$ (12) & $85.10\%$ & $0.91$ & $12/12$ \\
B$'$: matched control & $\{0.6, 0.7, \ldots, 1.0\}$ (5) & $\{0.1, 0.2, \ldots, 0.5\}$ (5) & $51.83\%$ & $0.96$ & $5/5$ \\
C$'$: matched control & $\{0.1, 0.2, \ldots, 0.5\}$ (5) & $\{0.6, 0.7, \ldots, 1.0\}$ (5) & $61.69\%$ & $0.93$ & $5/5$ \\
\midrule
In-distribution baseline & all 17 $N$ & (leave-one-out) & $7.91\%$ & $0.90$ & $17/17$ \\
\bottomrule
\end{tabular}
\end{table}

The cyclic-skew quadratic constraint provides stability under parameter space extrapolation even when the training set covers only a single regime, and all $21$ extrapolated test trajectories remain non-divergent at $t = 60$. By contrast, the unconstrained plain-LS fit diverges on $15$ of the $17$ test points even in-distribution (see \tabref{tab:transfer_matrix}). The bounded behavior under cross-regime extrapolation is therefore a property of the constraint beyond what a generic ROM provides.

\Figref{fig:k9_panel}(a) shows that in-window accuracy degrades smoothly and predictably with extrapolation distance, with no abrupt collapse at the training boundary. Within each boundary direction, the in-window error grows with the distance from the test parameter $\mu^\ast$ to the nearest training parameter. In Scenario A, the four boundary test points produce in-window errors of $23.9\%, 16.2\%, 9.0\%, 15.9\%$ at $N = 0.1, 0.2, 1.6, 1.7$ respectively. Their mean of $16.2\%$ is roughly twice the in-distribution mean of $7.91\%$. In Scenario B, the error grows from $24.5\%$ at the boundary $N = 0.5$ to $69.1\%$ at the most distant $N = 0.1$.

\begin{figure}[htbp]
\centering
\includegraphics[width=\linewidth]{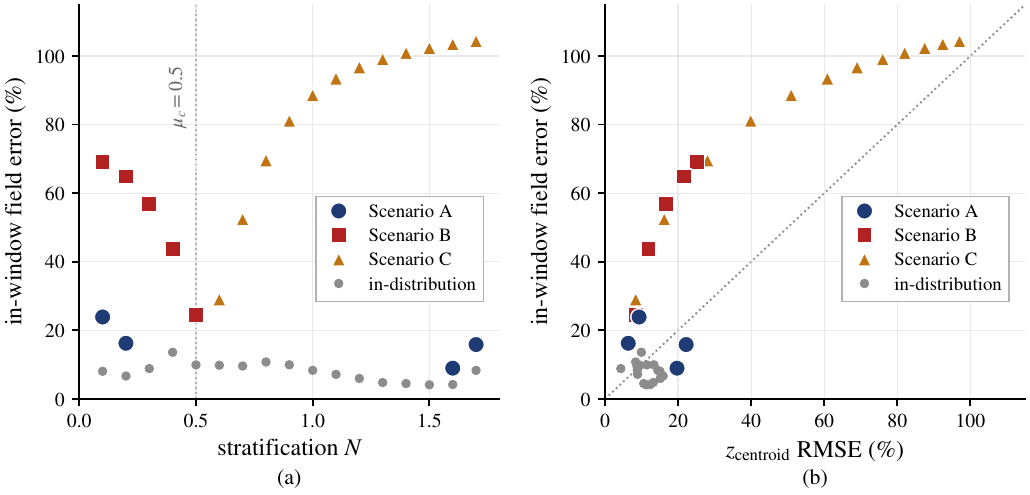}
\caption{(a) In-window field error versus $N$ for the three extrapolation scenarios A, B, and C and the in-distribution leave-one-out baseline, with the dotted vertical line at $\mu_c = 0.5$. (b) In-window field error versus the relative RMSE of the buoyancy-weighted vertical centroid $z_{\rm centroid}$ for the same points, with the $y = x$ diagonal.}
\label{fig:k9_panel}
\end{figure}

Scenarios B and C each hold out one entire regime, and a comparison of their mean in-window errors initially suggests a strong directional asymmetry. The sub-to-super direction in Scenario C has mean $85.1\%$ and appears worse than the super-to-sub direction in Scenario B at mean $51.8\%$. This raw comparison is confounded, because the two scenarios differ in training set size (twelve nodes versus five) and in test coverage. Scenario C extends $1.2$ beyond its training edge and B only $0.5$. A matched control with five training nodes on each side of the boundary and distance-matched test pairs removes most of the in-window gap. The controlled means are $51.8\%$ against $61.7\%$, so about ten points of the $33$-point raw gap survive the control, and all control trajectories remain non-divergent. The surviving gap grows with distance from the training edge, from under two percentage points at the nearest matched pair to seventeen at the most distant. Training set size and extrapolation distance therefore explain the bulk of the raw asymmetry, while the residual is a genuine directional effect concentrated at the far nodes, which the next subsection interprets.

\subsection{Physical interpretation of the residual asymmetry}
\label{sec:extrap:asymmetry}

Projecting each held-out in-window trajectory onto the rank-$20$ basis of the opposite training set captures comparable energy fractions, $80\%$ under B and $87\%$ under C for the raw energy and a nearly identical $76\%$ and $75\%$ for the fluctuation energy after removing each trajectory's time mean. Basis representation therefore introduces no directional bias, and the matched control above shows that training set size and extrapolation distance account for the bulk, though not all, of the in-window gap. The remaining question concerns the most distant extrapolations. Super-regime training histories contain a ballistic-rise phase alongside the oscillation, while sub-regime histories never contain a restoring phase. Scenario C's worst errors occur at the far nodes that dominate the raw gap, exactly where its operators must generate an oscillation they have never seen. The matched control preserves this ranking beyond the training window, with the sub-to-super direction worse at both forecast horizons ($86\%$ against $82\%$ at the first horizon, $106\%$ against $86\%$ at the deep horizon). These controls lead to a more specific interpretation. In-window cross-regime accuracy is governed primarily by training set size and extrapolation distance. The residual asymmetry that survives the control grows with extrapolation distance and points in one direction, and the failure mode is the generation of dynamics absent from the training history, in this case the trapped oscillation that sub-regime training data do not contain.

\subsection{Integrated dynamics are preserved while pointwise field fidelity degrades}
\label{sec:extrap:correlation}

The Frobenius error of \tabref{tab:extrap} is a pointwise measure that is sensitive to phase. In an advection- and oscillation-dominated unconfined flow it is governed by small displacements of coherent structures. This error therefore overstates the loss of the quantities that drive decisions, since the many-query tasks that motivate this class of surrogates depend on integrated rise quantities. The reduced model tracks these quantities far more robustly, and the temporal correlation of the integrated plume rise trajectories with the full-order data stays high under both cross-regime directions. The centroid-height trajectory $z_{\rm centroid}(t)$ is tracked with Pearson correlation $0.96$ under both extrapolation directions and the trajectory of the plume top $z_{\max}(t)$ with correlations of $0.90$ and $0.91$ (\tabref{tab:extrap}). As in \tabref{tab:plume_accuracy}, the correlation is reported on the plume top, while the error comparison of \figref{fig:k9_panel}(b) uses the smoother centroid. Across all five scenarios this correlation stays between $0.89$ and $0.96$ while the field error varies more than fivefold, which quantifies how much of the decision-relevant content survives extrapolation. These correlations stay close to the in-window values of \cref{sec:plume:accuracy} even as the field Frobenius error rises from $8\%$ in-window to $52\%$ and $85\%$ under extrapolation. \Figref{fig:k9_panel}(b) confirms this separation directly through the error magnitudes. For the cross-regime scenarios every test point lies above the $y = x$ diagonal, so the integrated-rise centroid RMSE stays below the pointwise field Frobenius error, by a wide margin at moderate errors and by a narrowing one at the far sub-to-super nodes. Only the two boundary points at $N = 1.6$ and $1.7$, where both errors are small, fall below the diagonal, with the centroid RMSE exceeding the field error by a factor of up to $2.2$ there. The low-order rise quantities therefore degrade more slowly than the full field over most of the tested range. The per-point spatial correlation also decreases as the Frobenius error increases, reinforcing the conclusion that the main degradation is in pointwise field fidelity. For uncertainty quantification and scenario analysis of plume rise, the integrated rise is the target quantity and the pointwise field fidelity is secondary.

Across all three scenarios, the construction provides \emph{stability under extrapolation}, with bounded trajectories on all $21$ held-out test cases and no a priori knowledge that the test parameter lies outside the training range or in an unseen regime. The in-window field accuracy degrades substantially, while the integrated rise dynamics remain well tracked (correlation near or above $0.90$, \cref{sec:extrap:correlation}). Outside the training range the model is best read as a qualitative full-field predictor and a reliable predictor of integrated rise. Within it, field reconstruction is markedly more accurate.


\section{Discussion}
\label{sec:discussion}
\label{sec:discussion:synthesis}

We synthesize the empirical results around the cyclic-skew energy-channel mechanism and use the data-weighted growth rate to interpret the remaining accuracy--stability tradeoffs. The cyclic-skew quadratic constraint is the structural feature most strongly associated with trajectory completion, and the proposed configuration avoids the abrupt rank-dependent divergence observed in the unconstrained plain-LS fit.

On both problems the cyclic-skew quadratic constraint yields full stable trajectory completion in every cell of the cross-problem comparison and achieves the lowest in-window error among the fully stable configurations in every cell. The Lyapunov-stable linear operator constraint of \citep{goyal2024guaranteed} is neither necessary nor beneficial under the cyclic-skew constraint, and on the plume it is also insufficient on its own, while on Burgers it yields stable trajectories on its own only under the single global basis. Among the constraints tested, the cyclic-skew constraint alone secures finite-horizon trajectory completion, because it closes the cubic finite-time blowup channel associated with the learned quadratic term (\propref{prop:energy_neutral}). Its \emph{accuracy} advantage over the Lyapunov-stable parameterization is a separate effect, traceable to the excess dissipation that this parameterization imposes. This interpretation is consistent with the energy-preserving formulation of \citep{gkimisis2025representation}. Once the cubic energy contribution of the inferred quadratic operator is constrained to vanish, the inferred linear part is free to take the values that the data require for fidelity, without that excess dissipation.

Varying the reduced dimension produces no abrupt divergence under the proposed configuration. A baseline at $\mu_c = 0.9$ with plain least-squares fitting exhibited an abrupt divergence at $r = 25$ where the reduced model diverged for every test point. At the adopted configuration with $\mu_c = 0.5$ and the cyclic-skew quadratic constraint, the reduced model is non-divergent across the full rank range $r \in \{10, 15, 20, 25, 30\}$, with the in-window error decreasing monotonically. The bases themselves support substantially higher ranks ($r \approx 60$ to $80$, \cref{sec:plume:robustness}) at a tolerance of $10^{-3}$ on the relative singular values. The abrupt divergence in that baseline therefore reflected the conditioning of the unconstrained least-squares fit, which does not recur under the cyclic-skew constraint at any tested rank, consistent with the reduction in free quadratic entries that the constraint imposes.

Both behaviors point to the data-weighted growth rate, which measures the inferred linear part through the energy channel. The proposed reduced operators are linearly unstable at every parameter. The worst-case growth rate $a_{\max} = \lambda_{\max}(\operatorname{sym}\hat{\bm A})$ is the largest eigenvalue of the Euclidean symmetric part of the inferred linear operator, and it is positive at every parameter, spanning $+0.11$ to $+0.46$ across the seventeen training fits (\figref{fig:energy_channel}(b)) and $[+0.20, +0.46]$ for the blended leave-one-out operators (\appref{sec:app_certified}). It bounds the linear-operator contribution to the instantaneous energy rate and is distinct from the spectral growth rate $\max\operatorname{Re}\lambda(\hat{\bm A})$ of \cref{sec:methods:diagnosis}, which governs asymptotic modal growth. \propref{prop:energy_neutral} therefore guarantees only freedom from finite-time blowup. The long-horizon boundedness we report ($17/17$ to five training horizons) is empirical. A proof remains open, and $a_{\max}$ does not explain the observation, since it is positive at every parameter. The data-weighted growth rate
\begin{equation}
a_{\rm data}
=\frac{\sum_k \hat{\bm q}_k^\top \operatorname{sym}_{\bm M}(\hat{\bm A})\,\hat{\bm q}_k}
      {\sum_k \hat{\bm q}_k^\top \bm M\,\hat{\bm q}_k},
\qquad
\operatorname{sym}_{\bm M}(\hat{\bm A})=\tfrac12(\bm M\hat{\bm A}+\hat{\bm A}^\top\bm M),
\label{eq:a_data}
\end{equation}
is the growth rate of the inferred linear part weighted by the directions the training snapshots occupy. Here $\hat{\bm q}_k$ are the reduced training snapshots and $\bm M$ is the symmetric positive-definite weight matrix of the energy metric in which the growth rate is evaluated. It is several times smaller than $a_{\max}$ and negative at twelve of the seventeen parameters, with two of the five positive values within $+0.01$ of zero (\appref{sec:cert_cost_law}). The exponential envelope permitted by the Gr\"onwall bound is therefore never approached, and the trajectories grow far more slowly than the worst case allows. This is consistent with the observed boundedness while falling short of a proof.

At the three weakly stratified free rise parameters ($N \le 0.3$), the data-weighted growth rate $a_{\rm data}$ is \emph{positive} at all three nodes, and the trajectory samples net-amplifying directions of the linear part with $a_{\max}$ up to $+0.28$. The reduced model still stays bounded to $t = 60$, where the spectrum alone would permit growth. Its boundedness at those parameters therefore cannot be attributed to effective linear dissipation, and it isolates the closed cubic channel as the operative mechanism that \figref{fig:energy_channel} shows at its clearest. At these parameters the buoyancy source drives the reduced system to a bounded quasi-steady state. The realized energy-growth rate $a_{\rm eff} = \dot E/(2E)$ along the integrated trajectory is therefore near zero at every parameter, within $[-0.003, +0.016]$ across the seventeen training values (\appref{sec:app_energy}). The cubic-channel closure secures that forced relaxation against finite-time blowup. The data-weighted growth rate is also computable from the fitted blended operator before any long integration, so it is available at a new deployment parameter before a variant is chosen. We do not claim it as a validated screen between bounded and unbounded deployments, because every cyclic-skew case in this study stayed bounded, including every node with $a_{\rm data} > 0$. The weakly stratified free rise nodes, where $a_{\rm data}$ is largest, are the cases where the stability-constrained variant of \appref{sec:app_certified} is most relevant when an a priori guarantee is required.

The same quantity grades the stability-accuracy tradeoff of the dissipativity constraint. The stability-constrained fit is inexpensive in the strongly stratified trapping regime, where $a_{\rm data}$ is small or negative, and grows expensive toward the weakly stratified free rise end where buoyancy drives the energy. \Figref{fig:certcost_law} shows that this dependence is smooth across the parameter range, with no threshold behavior. The correlation of the cost with $a_{\rm data}$ is $+0.90$, while the worst-case growth rate $a_{\max}$ remains the weaker predictor (\appref{sec:cert_cost_law}). \Figref{fig:tradeoff} isolates what the linear-part choice costs by holding the cyclic-skew $\hat{\bm H}$ fixed across all three constructions and varying only $\hat{\bm A}$, with each of the $17$ leave-one-out parameters resolved by $N$. The free model is the unconstrained-$\hat{\bm A}$ \CROI{} of \cref{sec:plume:accuracy}. It leaves its worst-case growth rate at the data-optimal value, positive at a mean of $+0.35$. The mean in-window error there is $7.91\%$. The Lyapunov-stable construction constrains the growth rate near zero at a mean error near $17\%$, while the stability-constrained construction enforces $\operatorname{sym}\hat{\bm{A}} \preceq -\eta\bm{I}$. The worst-case growth rate is therefore bounded above by $-\eta$, and the model carries a uniform-in-time absorbing-ball bound. Over $\eta \in \{0.05, 0.1, 0.2, 0.35, 0.5\}$ the mean in-window error rises from $37.3\%$ and approaches $49\%$ by $\eta = 0.5$, with the marginal cost of each further increment shrinking. A tighter decay rate therefore strengthens the guarantee at diminishing accuracy cost. The color shows the cost is strongly $N$-dependent. The lowest-error Lyapunov and stability-constrained parameters are the strongly stratified ones, where the inferred operator is already near-dissipative. The weakly stratified free rise parameters incur the largest accuracy penalty. This is the same ranking as in the stability-accuracy tradeoff of \figref{fig:certcost_law}. The free model therefore attains the reported in-window accuracy without a uniform-in-time bound, while the stability-constrained family supplies one at a graded, $N$-dependent cost. The per-metric accuracy and guaranteed bound at each margin are tabulated in \tabref{tab:certified} (\appref{sec:app_certified}). The data-weighted growth rate links the accuracy and the guarantee. It identifies where the free model's worst-case envelope is loose and where a uniform-in-time guarantee is inexpensive should one be required.

\begin{figure}[htbp]
\centering
\includegraphics[width=\linewidth]{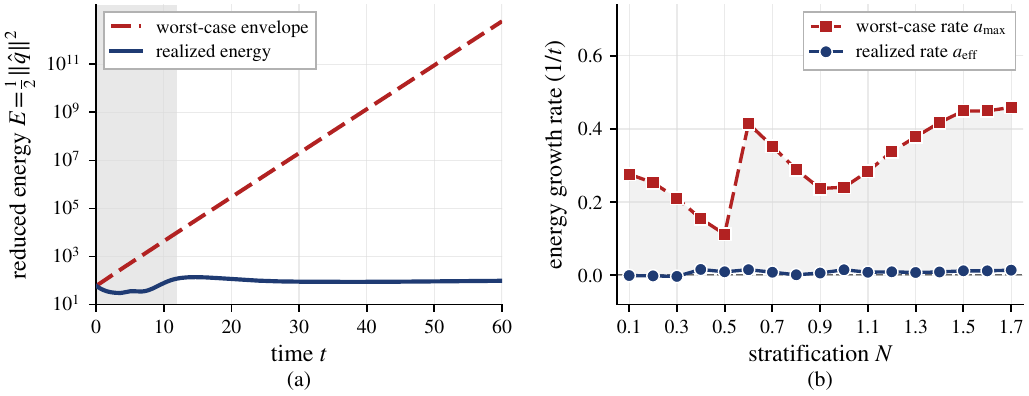}
\caption{(a) Reduced energy $E(t) = \tfrac{1}{2}\|\hat{\bm q}\|^2$ at $N = 0.3$ for the realized model and for the Gr\"onwall envelope permitted by the worst-case growth rate $a_{\max}$; the shaded band marks the training window $t \leq 12$. (b) Worst-case growth rate $a_{\max}$ and realized energy growth rate $a_{\rm eff}$ versus $N$, with the gray fill spanning the gap between them.}
\label{fig:energy_channel}
\end{figure}

\begin{figure}[htbp]
\centering
\includegraphics[width=0.72\linewidth]{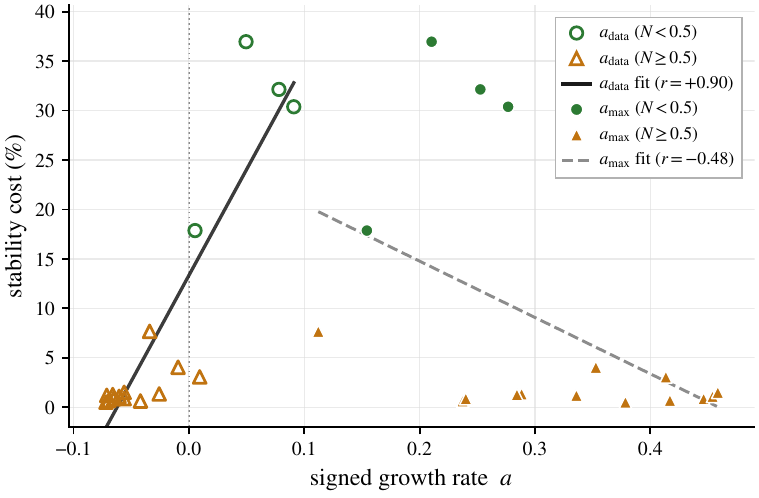}
\caption{Stability-constraint cost versus signed growth rate. The cost is the increase in mean in-window Frobenius error incurred by enforcing the dissipativity constraint. Each model appears twice, as an open marker at its data-weighted growth rate $a_{\rm data}$ (solid least-squares fit) and a filled marker at its worst-case growth rate $a_{\max}$ (dashed fit), sharing one vertical coordinate. Color encodes the regime side (green sub, orange super). Correlations use the Euclidean energy metric.}
\label{fig:certcost_law}
\end{figure}

\begin{figure}[htbp]
\centering
\includegraphics[width=0.74\linewidth]{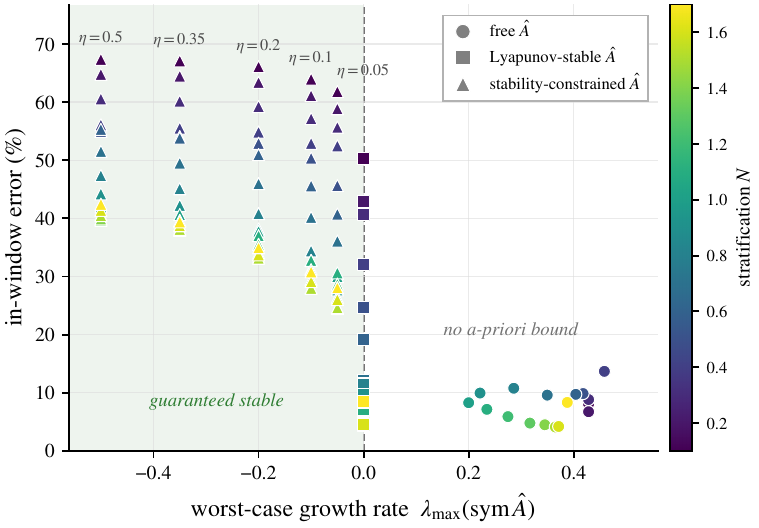}
\caption{In-window error versus the worst-case growth rate $a_{\max} = \lambda_{\max}(\operatorname{sym}\hat{\bm{A}})$. Each marker is one of the $17$ leave-one-out parameters. Color encodes $N$. Marker shape encodes the construction of the linear part $\hat{\bm{A}}$, free, Lyapunov-stable, or stability-constrained at margin $\eta \in \{0.05, 0.1, 0.2, 0.35, 0.5\}$. All three constructions carry the same cyclic-skew $\hat{\bm{H}}$ and differ only in $\hat{\bm{A}}$. The shaded half-plane $x < 0$ marks the region carrying the stability-constrained decay bound.}
\label{fig:tradeoff}
\end{figure}

Accuracy and stability are claims about different horizons, and separating them clarifies the remaining results. The reduced model is an accurate in-window surrogate (\cref{sec:plume:accuracy}). Beyond the window it is a stable bounded extrapolator whose forecast-accuracy degradation is characterized in \cref{sec:extrap}. This distinction also organizes the asymmetry of cross-regime extrapolation and the sample complexity scaling below.

The three extrapolation scenarios reported in \cref{sec:extrap} produce non-divergent trajectories in all $21$ test cases. The raw super-to-sub versus sub-to-super accuracy gap ($51.8\%$ against $85.1\%$ mean in-window error, a $64\%$ relative degradation) is largely a design effect of training set size and test coverage: a matched control shrinks it to about ten points, and the residual is a directional effect concentrated at the far sub-to-super nodes, with the basis representation itself nearly symmetric (\cref{sec:extrap:asymmetry}). The boundedness itself is the result, since the unconstrained plain-LS fit diverges on $15$ of the $17$ test points even in-distribution. Stability under cross-regime extrapolation is therefore a non-trivial property of the cyclic-skew quadratic constraint. Beneath that stability, in-window accuracy declines monotonically with distance from the training range.

The term regime-bridging refers to bridging the regime transition inside the training hull, through the regime-split basis and the per-parameter interpolation. The $7.91\%$ leave-one-out accuracy across the sub-to-super transition demonstrates it. It does not denote accurate field prediction of an entirely held-out regime, for which we report only bounded trajectories and a tracked integrated rise. The $51$ to $85\%$ in-window field error under whole-regime holdout is the accuracy cost of true parameter space extrapolation.

The sample complexity study of \cref{sec:plume:robustness} draws a clean contrast between the local interpolation and the affine form. The median in-window error grows smoothly from $8\%$ at the full $17$-parameter grid to $33\%$ at a three-parameter grid, and all trajectories stay bounded. Where the affine fit imposes a compromise across the regime transition that no amount of training data removes (the $\rho_{\rm aff}$ diagnostic of \cref{sec:methods:diagnosis}), the local interpolation keeps the per-parameter operator quality independent of the training set size. Its error growth is dominated by the kernel-interpolation gap, which shrinks with finer sampling, while the affine eigenvalue compromise diagnosed in \cref{sec:methods:diagnosis} plays no part. The reduced model therefore stays stable at a usable accuracy even at the sparsest grid. Across the extrapolation and sample complexity studies, the construction consistently prevents numerical divergence over the tested horizons, while accuracy degrades gradually with distance and data sparsity.


\section{Conclusion}
\label{sec:conclusion}

We developed cyclic-skew regime-bridged Operator Inference (\CROI), a non-intrusive parametric reduced-order modeling framework for PDEs that span a qualitative regime transition as a parameter varies. The framework infers reduced operators at each training parameter independently and combines them at any queried parameter through a Gaussian kernel in the parameter. All operators are expressed on a single reference reduced basis shared across the training set, and that basis is split at a physical indicator of the dynamical transition. The inferred quadratic operator is constrained to be cyclic-skew. The central result is that the cyclic-skew constraint is the structural prior most strongly associated with trajectory completion in both problems studied, when isolated from the Lyapunov-stable linear parameterization with which it is usually paired. On the buoyancy-source-forced plume it is the only single constraint under which every trajectory completes the five-horizon test, whereas the Lyapunov-stable parameterization alone fails. Because a joint multi-parameter least-squares fit across a regime transition displaces the inferred spectrum into the unstable half-plane and incurs a single-basis compromise the standard affine parametric form cannot accommodate, the per-parameter inference is needed to avoid the affine-spectrum instability, and the basis split further reduces the in-window error across the transition.

The cyclic-skew constraint succeeds because forcing the reduced nonlinearity to make no net contribution to the energy budget closes the finite-time blowup channel. The constraint survives the convex operator blend unchanged, and this yields boundedness at every evaluated parameter. When a uniform-in-time bound is required, a stability-constrained variant that enforces a dissipative linear part supplies such a guarantee, with an accuracy penalty that depends on how the trajectory samples the energy-growth directions of the reduced linear operator. The data-weighted growth rate provides a diagnostic for when enforcing the dissipative constraint is likely to carry a small accuracy penalty.

We evaluated the framework on two parametric problems. The principal application is a two-dimensional Boussinesq buoyant plume in a linearly stratified ambient, with the stratification varied across the subcritical-to-supercritical transition. \CROI{} stays bounded at all $17$ test parameters to five times the training horizon. In what is to our knowledge the first parametric OpInf study of an unconfined plume in a stratified ambient, the full-order model recovers the plume rise stratification scaling, and \CROI{} reproduces that scaling at the moderate forecast horizon and retains the correct negative trend at the deep horizon. The recovered plume rise slope changes from $-0.70$ at the moderate forecast horizon to $-0.40$ at the deep horizon as trajectory error accumulates, and it reproduces the internal-wave oscillation at roughly half the full-order frequency. Improving the long-horizon scaling slope and the oscillation frequency would likely require a regression that penalizes error along the integrated trajectory while remaining compatible with the cyclic-skew constraint. The cross-problem comparison uses the one-dimensional viscous Burgers equation across the smooth-to-shock transition, where the structural constraints transfer unchanged, with only problem-scale settings re-fixed from the Burgers data, and the transferred configuration reaches comparable in-window accuracy with full long-time stability. That same boundedness carries over to parameter space extrapolation. All $21$ boundary holdout and cross-regime trajectories remain bounded to five times the training horizon, with in-window accuracy degrading smoothly with extrapolation distance, and a size- and distance-matched control traces most of the raw directional gap to training set size and test coverage, leaving a residual asymmetry concentrated at the largest extrapolation distances. The hardest case is the prediction of trapped oscillations from sub-regime-only training data.




\phantomsection
\addcontentsline{toc}{section}{Declaration of generative AI and AI-assisted technologies in the manuscript preparation process}
\section*{Declaration of generative AI and AI-assisted technologies in the manuscript preparation process}

During the preparation of this work the authors used AI-assisted tools to edit the language and improve readability. After using these tools, the authors reviewed and edited the content as needed and take full responsibility for the content of the published article.

\phantomsection
\addcontentsline{toc}{section}{CRediT authorship contribution statement}
\section*{CRediT authorship contribution statement}

\textbf{Youngjun Kwon:} Conceptualization, Methodology, Software, Validation, Formal analysis, Investigation, Data curation, Visualization, Writing -- original draft. \textbf{Abdollah Shafieezadeh:} Conceptualization, Supervision, Resources, Funding acquisition, Writing -- review \& editing.

\phantomsection
\addcontentsline{toc}{section}{Declaration of competing interest}
\section*{Declaration of competing interest}

The authors declare that they have no known competing financial interests or personal relationships that could have appeared to influence the work reported in this paper.

\phantomsection
\addcontentsline{toc}{section}{Acknowledgments}
\section*{Acknowledgments}

This work was supported in part by the Lichtenstein endowment at The Ohio State University. Any opinions, findings, and conclusions expressed are those of the authors and do not necessarily reflect those of the sponsoring organization.


\appendix
\makeatletter
\@addtoreset{figure}{section}
\@addtoreset{table}{section}
\makeatother

\section{Full-order model details}
\label{sec:app_fom}

\subsection{Numerical scheme}
\label{sec:app_fom_numerics}

The 2D Boussinesq solver for Eq.~\eqref{eq:plume_pde} uses the following discretization:

\begin{itemize}[leftmargin=*]
\item \textbf{Staggered marker-and-cell (MAC) grid}: velocity components
$\bm{u} = (u, w)$ stored at the respective face centers. Pressure $p$
and buoyancy $b$ are stored at cell centers. The cell-centered buoyancy is
linearly interpolated to vertical faces for use in the buoyancy term
$b\,\hat{\bm{z}}$ of the governing equations.

\item \textbf{Pressure projection}: a Helmholtz--Hodge decomposition
enforces $\nabla \cdot \bm{u} = 0$ at every time step. The pressure
Poisson solve uses fast Fourier transforms along the periodic
horizontal direction $x$ and tridiagonal solves in the vertical
direction $z$.

\item \textbf{Time integration}: second-order Runge--Kutta (RK2) with
adaptive time step $\Delta t$ from the Courant--Friedrichs--Lewy (CFL)
condition with CFL $= 0.4$.

\item \textbf{Sponge layer}: an upper damping region
$z \in [L_z - d, L_z]$ absorbs upward-moving disturbances. The sponge
profile is a smooth Gaussian-like ramp with characteristic strength
$\sigma_s = 2.0$ and thickness $d = 0.15$.
\end{itemize}

\paragraph{Boundary conditions} All fields are periodic in the
horizontal direction $x \in [0, L_x]$. The velocity is no-slip at the bottom ($z = 0$) and free-slip with the sponge damping at the top ($z = L_z$). The pressure Poisson solve uses homogeneous Neumann conditions in $z$. The buoyancy uses zero-flux (Neumann) conditions at both walls, so that the only buoyancy input is the interior source of Eq.~\eqref{eq:source}.

\paragraph{Buoyancy source} A steady source is added to the buoyancy
equation at every time step,
\begin{equation}
S(x, z) \;=\; S_0 \, \exp\!\Big(-\frac{(x - L_x/2)^2}{2 \sigma_x^2}\Big)
\, \exp\!\Big(-\frac{z^2}{h_z^2}\Big),
\label{eq:source}
\end{equation}
a horizontal Gaussian of width $\sigma_x = 0.125$ ($\approx 12$ cells on the production grid) centered at mid-domain and falling off as a Gaussian in $z$ from the bottom boundary with vertical scale $h_z = 0.03$ ($\approx 3$ cells). The amplitude is $S_0 = 0.8$. The source is maintained for the full simulation and models a continuous buoyant release such as a vent or a fire, so that the plume settles into a sustained quasi-steady column.

\paragraph{Production parameters} The production domain, grid, horizon, snapshot count, and diffusivities are as given in \cref{sec:problem:plume}. With characteristic velocity and length of order unity, the Reynolds number is $\mathcal{O}(10^3)$. The simulation is two-dimensional and unsteady with no turbulence closure, consistent with the grid and CFL convergence reported below.

\subsection{Grid convergence}
\label{sec:app_grid_conv}

A six-grid convergence study at $N \in \{0.5, 1.5\}$ yields the following.

\begin{itemize}[leftmargin=*]
\item Plume rise diagnostics $z_{\max}$ and $b_{\max}$ are converged
at $128 \times 96$, with $< 1\%$ change between $128 \times 96$ and
$192 \times 144$ resolutions.

\item Kinetic energy KE${}_{\max}$ shows a $\sim 5\%$ drift between
$160 \times 120$ and $192 \times 144$ before stabilizing at finer
grids. We accept this as the FOM uncertainty budget. The ROM is
evaluated against the production-grid FOM. Therefore this drift does not
affect ROM error reporting.

\item A spot check at the strongly stratified $N = 1.5$ shows the same
convergence pattern as $N = 0.5$, so no separate refinement is
required for the supercritical regime.
\end{itemize}

The production grid $240 \times 144$ uses $\Delta x = \Delta z \approx 0.0104$ to keep cells square. This extends the convergence-validated $192 \times 144$ ($L_x = 2.0$) configuration laterally at the same grid spacing to accommodate the lateral intrusion at strong $N$, which pushes the plume cap further outward in $x$.

\subsection{Temporal convergence}
\label{sec:app_temporal_conv}

The time integrator is a second-order Runge--Kutta scheme paired with the Helmholtz--Hodge pressure projection of the numerical scheme above, a standard combination for incompressible Boussinesq flow. The per-step pressure projection and the second-order spatial stencil set the accuracy floor. A higher-order integrator therefore offers little benefit, because the measured grid and CFL convergence budget below is already dominated by the spatial discretization. The adaptive step is set by the Courant number, and the bulk quantity most sensitive to it is the peak kinetic energy. Refining the step (CFL${} \in \{0.2, 0.3, 0.4, 0.5\}$ at $N = 0.5$) moves this peak kinetic energy by $2.9\%$ between the production CFL${} = 0.4$ and the finest CFL${} = 0.2$, and the trapping-height centroid by $0.7\%$. Both shifts lie within the $\sim 5\%$ full-order uncertainty budget of the grid study and well below the $7.9\%$ in-window reduced-model error. The temporal discretization is therefore not a limiting factor in the operator regression, and this numerical uncertainty is too small to confound the reduced-model assessment. We repeated the check at the strongly stratified case $N = 1.5$ and reached the same conclusion. At this $N$, the post-transient internal-wave oscillation makes the trajectory most temporally sensitive. Peak kinetic energy changes by $3.2\%$ and the trapping-height centroid by $0.7\%$ between CFL${} = 0.4$ and CFL${} = 0.2$. The dominant oscillation frequency of the trajectory of the plume top is unchanged across the four step sizes. The most temporally sensitive quantity is therefore resolved already at the production setting.

\subsection{Sponge layer transparency}
\label{sec:app_sponge}

The sponge layer is included to absorb upward-propagating internal-wave radiation at the top of the computational domain so that the truncated finite vertical extent does not produce spurious wave reflections. To ensure that the sponge does not contaminate the bulk plume dynamics the reduced model is intended to learn, we ran a two-axis sensitivity study over the sponge thickness $d$ and the sponge strength $\sigma_s$ at $N = 0.5$, the case in which the plume reaches the highest vertical extent and is therefore the most susceptible to sponge interaction.

We varied the thickness over the five values $d \in \{0.05, 0.10, 0.15, 0.20, 0.25\}$ in steps of $0.05$ with $\sigma_s$ held at its default, and the strength over the doubling sequence $\sigma_s \in \{0.5, 1.0, 2.0, 4.0, 8.0\}$ with $d$ held at its default. The bulk QoIs $z_{\max}(t)$, $z_{\rm centroid}(t)$, the kinetic energy $\mathrm{KE}(t)$, the buoyancy maximum $b_{\max}(t)$, and the plume cross-sectional area all change by less than $0.5\%$ relative to the default. The default $(\sigma_s, d) = (2.0, 0.15)$ therefore lies on a wide plateau of insensitivity, and this plateau justifies the choice. Outside this plateau, for thin and weak sponges ($d \lesssim 0.03$ and $\sigma_s \lesssim 0.2$), we observe wave reflection that contaminates the $z_{\rm centroid}$ trajectory, but the default lies well inside the insensitive plateau and away from this regime. The same conclusion holds at the strongly stratified end ($N = 1.5$), where the plume is trapped low and the sponge interaction is even smaller. A spot check there shows no change in the bulk QoIs across the same parameter ranges.

These checks verify the numerics, and no comparison against laboratory plume data is attempted, since the FOM serves as the reference solution for every reduced-model error reported here. An energy-budget check provides production-grade verification that internal-wave reflection is negligible inside the training window $t \leq 12$. The kinetic energy in the upper half of the domain, above the trapping height of each $N$, shows no spurious revivals of the kind a reflected wave packet would produce. The default sponge configuration is used throughout the empirical sections without further modification.

\subsection{State-vector assembly for the OpInf regression}
\label{sec:app_lifting}

The full-order state vector $\bm{q}$ supplied to the POD and the OpInf regression is assembled from the cell-centered buoyancy $b$ and the staggered velocity components $u$, $w$ in three steps. First, each field is interpolated to cell centers if not already there. This yields three $N_x \times N_z$ arrays per snapshot, where $N_x \times N_z = 240 \times 144$ is the production grid of \cref{sec:problem}. Second, the three arrays are flattened (row-major) and concatenated into a single vector $\bm{q} \in \mathbb{R}^{3 N_x N_z}$. Third, the resulting per-parameter snapshot matrix $\bm{Q}^{(i)} \in \mathbb{R}^{3 N_x N_z \times K}$ is centered and scaled per variable using the standard shift--scale preprocessing \citep{mcquarrie2021combustion}, and the concatenation of these matrices over the training parameters furnishes the input to the POD truncation and the OpInf regression of \cref{sec:background:fit}.

\subsection{Computational cost}
\label{sec:app_compute}
\label{sec:app_cost}

All timings were measured on a single workstation with two Intel Xeon Gold 6248R processors ($24$ cores used via thread pinning), $256$\,GB of memory, and no graphics processor. They are per-process CPU wall-clock times and are expected to be reproducible to within a small constant factor on comparable $24$-core hardware. \Tabref{tab:cost} summarizes the representative costs.

\begin{table}[htbp]
\centering
\caption{Representative wall-clock costs for the production setup.}
\label{tab:cost}
\small
\begin{tabular}{lr}
\toprule
Operation & Wall-clock \\
\midrule
Full-order plume simulation (per parameter, $240\times144$, $t{=}20$) & $\sim 120$\,s \\
Reduced-operator fit (cyclic-skew, per parameter) & $\sim 1$\,s \\
Reduced-model integration to $t = 60$ (adaptive Runge--Kutta) & $\sim 0.3$\,s \\
\bottomrule
\end{tabular}
\end{table}

The reduced model integrates about $400\times$ faster than the corresponding full-order simulation, taking the $\sim 120$\,s full-order run to $t = 20$ as the numerator and the $\sim 0.3$\,s reduced-model integration to $t = 60$ as the denominator. At a matched horizon, the reduced-model speedup is larger, because the reduced cost in the denominator covers a three-times-longer integration than the full-order cost in the numerator. The offline training cost of the cyclic-skew configuration is several minutes over the training set and is therefore repaid after a small number of reduced-model queries. Once the offline fit is amortized, the end-to-end speedup approaches the per-query factor of about $400\times$. Fitting the structure-preserving operators adds negligible cost relative to a single full-order simulation, and we make no claim about relative fitting speed among the structural variants.


\section{Spectral diagnosis and energy boundedness}
\label{sec:app_bauer_fike}
\label{sec:app_theory}

This appendix supports two claims made in the main text. The first claim of \cref{sec:methods:diagnosis} is that the affine joint fit displaces the inferred linear spectrum toward instability on a problem with a regime transition, and that no regularization strength repairs it. The second claim of \cref{sec:methods:cyclicskew} and \cref{sec:burgers} is that the cyclic-skew constraint on the reduced quadratic operator is the structural prior that secures stability. It rules out finite-time blowup of the reduced trajectory by closing the nonlinear energy injection channel, and the convex operator blend of \cref{sec:methods:loi} preserves that property at every evaluated parameter.

\subsection{Setup}

Fix a reduced dimension $r$ and a POD basis $\bm{\Phi} \in \mathbb{R}^{n \times r}$ common to all training parameters $\mu \in \mathcal{M}_{\rm train} = \{\mu_1, \dots, \mu_M\}$. Let $\bm{X}^{(i)} = \bm{\Phi}^\top (\bm{Q}^{(i)} - \bar{\bm{q}}\bm{1}_K^\top) \in \mathbb{R}^{r \times K}$ be the centered reduced snapshot matrix at $\mu_i$, as in \cref{sec:background:fit}, with $K$ the number of snapshots per training parameter, and $\bm{Y}^{(i)} = \bm{\Phi}^\top \dot{\bm{Q}}^{(i)}$ the corresponding projected time-derivative matrix. Adopt the affine form $\Ab(\mu) = \Ab_0 + g(\mu)\,\Ab_1$ ($g(\mu) = \mu^2$ for the plume, $g(\mu) = \mu$ for Burgers), and set $g_i := g(\mu_i)$. The single-parameter fit at $\mu_i$ is the $\Ab$-block of the regularized regression of Eq.~\eqref{eq:opinf_regression} restricted to the snapshots of $\mu_i$. In the unregularized linear-only limit it reduces to $\Ab^{\rm single}(\mu_i) = \bm{Y}^{(i)} (\bm{X}^{(i)})^{+}$, the form used in the analysis below. The multi-parameter affine fit $(\Ab_0^{\rm joint}, \Ab_1^{\rm joint})$ minimizes $\sum_{i} \| (\Ab_0 + g_i \Ab_1)\bm{X}^{(i)} - \bm{Y}^{(i)} \|_F^2$, with prediction $\Ab^{\rm joint}(\mu_i) = \Ab_0^{\rm joint} + g_i \Ab_1^{\rm joint}$. The per-parameter perturbation that the joint constraint incurs is
\begin{equation}
\bm{E}_i \;\equiv\; \Ab^{\rm joint}(\mu_i) - \Ab^{\rm single}(\mu_i).
\label{eq:app_perturbation}
\end{equation}

\subsection{Spectral displacement under the affine joint fit}

\paragraph{Directional bound} The single-parameter fit minimizes
the residual at $\mu_i$ exactly. Its spectrum is therefore the reference, and the joint affine constraint can only perturb it. By the Bauer--Fike theorem \citep{bauer1960norms}, every eigenvalue $\lambda$ of $\Ab^{\rm joint}(\mu_i)$ lies within $\kappa_2(\bm{V}^{(i)})\,\|\bm{E}_i\|_2$ of an eigenvalue of $\Ab^{\rm single}(\mu_i)$, where $\bm{V}^{(i)}$ is the eigenvector matrix of $\Ab^{\rm single}(\mu_i)$ and $\kappa_2$ the spectral condition number. The bound is sign-agnostic, constraining only the magnitude of the displacement of the maximum real part while leaving its direction relative to the stable half-plane undetermined. Empirically, the displacement is toward instability. The eigenvector conditioning is in fact mild on the plume, with $\kappa_2(\bm{V}^{(i)}) \in [8, 72]$ across the seventeen training values and no monotone trend in $\mu$. The drift is therefore carried by the perturbation magnitude $\|\bm{E}_i\|_2$ (of order $7$ to $23$ in spectral norm) and is uncoupled from the conditioning. Evaluated at these values the bound exceeds the observed displacement of the maximum real part by one to three orders of magnitude, so its role is qualitative. It identifies $\|\bm{E}_i\|_2$ as the driver and caps the possible drift, and the quantitative content of the diagnosis rests on the measured residual $\rho_{\rm aff}$ and the measured spectra.

\paragraph{Affine-inadequacy measure} The perturbation of Eq.~\eqref{eq:app_perturbation} is large precisely when the
per-parameter operators do not lie on an affine $g(\mu)$-family. We quantify this directly. Let $\{\Ab^{\rm single}(\mu_i)\}$ be the per-parameter operators and let $(\Ab_0^\star, \Ab_1^\star)$ be their best affine fit. The relative residual
\begin{equation}
\rho_{\rm aff}
\;=\;
\frac{\bigl(\sum_i \| \Ab^{\rm single}(\mu_i)
       - (\Ab_0^\star + g_i \Ab_1^\star) \|_F^2 \bigr)^{1/2}}
     {\bigl(\sum_i \| \Ab^{\rm single}(\mu_i) \|_F^2 \bigr)^{1/2}}
\label{eq:rho_affine}
\end{equation}
is a computable a priori indicator of how far the operator family departs from any affine form. On a representative design spanning the full parameter range, $\rho_{\rm aff} \approx 0.26$. About a quarter of the total operator magnitude therefore lies outside the best affine fit in $g(\mu)$. The joint fit must compromise on this variation. Restricting the fit to one regime side reduces the departure unevenly. On the sub-regime training values the residual falls to $\rho_{\rm aff} \approx 0.10$, while on the super-regime values it falls only to $\rho_{\rm aff} \approx 0.21$, so an appreciable affine inadequacy persists within the super-regime itself. The restricted fits also involve fewer operators, which eases the regression irrespective of any regime structure. This is the diagnostic counterpart of two constructions. The regime split of \cref{sec:methods:regimebridge} confines each affine fit to a range where $\rho_{\rm aff}$ is smaller. The local operator interpolation of \cref{sec:methods:loi} removes the affine constraint altogether.

\subsection{Empirical verification on the production plume problem}
\label{sec:app_bauer_fike_empirical}

At the production setup $M = 17$, with matched Tikhonov regularization ($\beta = 10^{-3}$), the affine joint fit of \citep{mcquarrie2023nonintrusive} produces an operator that is unstable at every training node. The spectral growth rate $\max\operatorname{Re}(\lambda(\hat{\bm{A}}(N)))$ ranges from $+1.78$ at the smallest $N$ to $+0.97$ at the largest. The single-parameter least-squares fit at the same nodes and with the same regularization peaks between $+0.04$ and $+0.10$. This is smaller by a factor of $18$ to $38$ (\figref{fig:bauer_fike}). The instability is a property of the joint affine constraint, while the per-parameter data itself is well-conditioned. The per-parameter construction of \cref{sec:methods:loi} avoids it by fitting each operator independently.

The joint-fit instability cannot be regularized away. Varying the Tikhonov weight across the twelve decades $\beta \in [10^{-6}, 10^{6}]$ monotonically shrinks the operator norm ($\|\hat{\bm{A}}_0\|_2$ from $64$ down to $5\times10^{-8}$) but leaves $\max\operatorname{Re}(\lambda)$ strictly positive at every $\beta$. It falls from $+14$ at $\beta = 10^{-6}$ down to $+3\times10^{-4}$ at $\beta = 10^{4}$ and approaches zero only as the operator itself vanishes, but it never crosses into the stable half-plane. No regularization strength therefore makes the affine joint fit stable while leaving a usable operator, and the fit-time treatment of the parameter dependence must change.

\subsection{Energy boundedness of the cyclic-skew reduced model}
\label{sec:app_energy}

The reduced model integrated in this paper is the quadratic system $\dot{\hat{\bm{q}}} = \hat{\bm{c}} + \hat{\bm{A}}\hat{\bm{q}} + \hat{\bm{H}}[\hat{\bm{q}} \otimes \hat{\bm{q}}]$, with $(\hat{\bm{H}}[\hat{\bm{q}}\otimes\hat{\bm{q}}])_i = \sum_{j,k} \hat F_{ijk}\, \hat{q}_j \hat{q}_k$ for the third-order tensor $\hat F$ of the reduced quadratic operator (\cref{sec:background:structure}). The cyclic-skew constraint of \cref{sec:methods:cyclicskew} is the linear condition
\begin{equation}
\hat F_{ijk} + \hat F_{jki} + \hat F_{kij} = 0
\qquad \text{for all } i, j, k.
\label{eq:cyclic_skew}
\end{equation}

\begin{proposition}[The constraint closes the nonlinear energy channel]
\label{prop:energy_neutral}
Let $E(\hat{\bm{q}}) = \tfrac{1}{2}\|\hat{\bm{q}}\|_2^2$. If $\hat F$ satisfies
Eq.~\eqref{eq:cyclic_skew}, then
$\hat{\bm{q}}^\top \hat{\bm{H}}[\hat{\bm{q}}\otimes\hat{\bm{q}}] = 0$ for all
$\hat{\bm{q}}$. It follows that
\begin{equation}
\frac{\mathrm{d}E}{\mathrm{d}t}
= \hat{\bm{q}}^\top \hat{\bm{c}} + \hat{\bm{q}}^\top \hat{\bm{A}}\hat{\bm{q}}
\;\le\; \|\hat{\bm{c}}\|_2 \sqrt{2E}
        + 2\max(0,a_{\max})\,E,
\qquad
a_{\max} = \lambda_{\max}\!\bigl(\tfrac{1}{2}(\hat{\bm{A}}+\hat{\bm{A}}^\top)\bigr).
\label{eq:energy_rate}
\end{equation}
Consequently $E(t)$ is finite for every finite $t$ and every $\hat{\bm{A}}$, which prevents the reduced trajectory from diverging in finite time.
\end{proposition}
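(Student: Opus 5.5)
The plan has three steps: show the cubic term vanishes, bound the energy rate, and rule out blowup by a comparison argument. For the first step, I would expand the cubic form in index notation as
\[
\hat{\bm q}^\top \hat{\bm H}[\hat{\bm q}\otimes\hat{\bm q}] = \sum_{i,j,k} \hat F_{ijk}\,\hat q_i \hat q_j \hat q_k .
\]
The monomial $\hat q_i\hat q_j\hat q_k$ is invariant under cyclic permutation of $(i,j,k)$. Relabelling the summation indices cyclically therefore writes the same sum in three equivalent ways, with coefficients $\hat F_{ijk}$, $\hat F_{jki}$ and $\hat F_{kij}$. Averaging these three expressions gives one third of $\sum_{i,j,k}(\hat F_{ijk}+\hat F_{jki}+\hat F_{kij})\hat q_i\hat q_j\hat q_k$. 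This vanishes term by term under Eq.~\eqref{eq:cyclic_skew}. No symmetry of $\hat F$ in its last two indices is needed, so the argument holds for either the full Kronecker or the compact convention of \cref{sec:background:fit}.

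For the energy rate, I would differentiate $E$ along a solution to get $\dot E = \hat{\bm q}^\top\dot{\hat{\bm q}}$. Substituting the model and discarding the cubic term gives the equality in Eq.~\eqref{eq:energy_rate}. Two bounds then follow. Cauchy--Schwarz gives $\hat{\bm q}^\top\hat{\bm c} \le \|\hat{\bm c}\|_2\|\hat{\bm q}\|_2 = \|\hat{\bm c}\|_2\sqrt{2E}$. The quadratic form only sees the symmetric part of $\hat{\bm A}$, and the Rayleigh quotient bound gives $\hat{\bm q}^\top\hat{\bm A}\hat{\bm q} = \hat{\bm q}^\top\operatorname{sym}(\hat{\bm A})\hat{\bm q} \le a_{\max}\|\hat{\bm q}\|^2 \le 2\max(0,a_{\max})E$. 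Taking $\max(0,\cdot)$ makes the bound valid whatever the sign of $a_{\max}$.

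The main obstacle is turning this differential inequality into finiteness. The right-hand side is only sublinear in $E$ near $E=0$, and the solution's maximal existence interval must be handled. I would first note that the vector field is polynomial, hence locally Lipschitz, so a unique solution exists on a maximal interval $[0,T^\ast)$. Finite-time blowup would force $\|\hat{\bm q}\|\to\infty$ as $t\to T^\ast<\infty$.

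To avoid the square root at $E=0$, I would work with $\rho=\sqrt{2E+\epsilon}$ for some $\epsilon>0$, or equivalently use $\|\hat{\bm q}\|_2\le 1+E$. Either route yields a linear inequality, for example
\[
\dot\rho \le \|\hat{\bm c}\|_2 + \max(0,a_{\max})\,\rho .
\]
Gr\"onwall's lemma then gives
\[
\rho(t)\le e^{a^+ t}\rho(0) + \|\hat{\bm c}\|_2\,\frac{e^{a^+t}-1}{a^+}, \qquad a^+=\max(0,a_{\max}),
\]
with the last factor read as $t$ when $a^+=0$. This is finite on any bounded interval, which contradicts blowup. Hence $T^\ast=\infty$ and $E(t)$ is finite for every finite $t$, regardless of $\hat{\bm A}$.
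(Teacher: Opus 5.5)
Your proposal is correct and follows the paper's proof step for step: the cyclic-relabeling argument that makes the cubic form vanish, the Cauchy--Schwarz and Rayleigh-quotient bounds on $\hat{\bm q}^\top\hat{\bm c}$ and $\hat{\bm q}^\top\operatorname{sym}(\hat{\bm A})\hat{\bm q}$, and a Gr\"onwall-type comparison bound that is linear in $\|\hat{\bm q}\|$. Your regularization $\rho=\sqrt{2E+\epsilon}$, which avoids the square root at $E=0$, and your appeal to the maximal existence interval of the solution spell out details that the paper's one-line comparison estimate leaves implicit.
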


Here $a_{\max} = \lambda_{\max}(\operatorname{sym}\hat{\bm{A}})$ is the worst-case growth rate in the Euclidean metric and is used in \cref{sec:discussion}. The bound $\hat{\bm{q}}^\top \hat{\bm{A}}\hat{\bm{q}} \le a_{\max}\|\hat{\bm{q}}\|^2$ holds for any $\hat{\bm{A}}$, and the $\max(0,\cdot)$ truncation in Eq.~\eqref{eq:energy_rate} keeps the envelope growth rate nonnegative when $a_{\max}$ is itself negative. Dropping the constraint reinstates the cubic term $\hat{\bm{q}}^\top \hat{\bm{H}}[\hat{\bm{q}}\otimes\hat{\bm{q}}]$, of order $\|\hat{\bm{q}}\|^3$, which admits finite-time blowup.

\begin{proof}
Write $\hat{\bm{q}}^\top \hat{\bm{H}}[\hat{\bm{q}}\otimes\hat{\bm{q}}]
= \sum_{i,j,k} \hat F_{ijk}\, \hat{q}_i \hat{q}_j \hat{q}_k$. Relabeling the summation
index triples by the two cyclic rotations
$(i,j,k)\!\to\!(j,k,i)\!\to\!(k,i,j)$ leaves the monomial
$\hat{q}_i \hat{q}_j \hat{q}_k$ unchanged, giving
$3 \sum_{i,j,k} \hat F_{ijk} \hat{q}_i \hat{q}_j \hat{q}_k
= \sum_{i,j,k} (\hat F_{ijk}+\hat F_{jki}+\hat F_{kij})\, \hat{q}_i \hat{q}_j \hat{q}_k = 0$
by Eq.~\eqref{eq:cyclic_skew}. The cubic form therefore vanishes
identically. Then
$\mathrm{d}E/\mathrm{d}t = \hat{\bm{q}}^\top \dot{\hat{\bm{q}}}
= \hat{\bm{q}}^\top\hat{\bm{c}} + \hat{\bm{q}}^\top\hat{\bm{A}}\hat{\bm{q}}$,
since the quadratic contribution is gone. The estimate follows from
$\hat{\bm{q}}^\top\hat{\bm{A}}\hat{\bm{q}}
= \hat{\bm{q}}^\top \tfrac{1}{2}(\hat{\bm{A}}+\hat{\bm{A}}^\top)\hat{\bm{q}}
\le \max(0,a_{\max})\|\hat{\bm{q}}\|^2 = 2\max(0,a_{\max})E$ and
$\hat{\bm{q}}^\top\hat{\bm{c}} \le \|\hat{\bm{c}}\|\,\|\hat{\bm{q}}\|
= \|\hat{\bm{c}}\|\sqrt{2E}$. A comparison estimate then bounds
$E(t)$ by the solution of the corresponding scalar differential
inequality, which is linear in $u = \|\hat{\bm{q}}\| = \sqrt{2E}$ and is
finite on any bounded interval. The unconstrained
case is the standard cubic blowup of $\dot{E} \sim E^{3/2}$.
\end{proof}

\propref{prop:energy_neutral} is the discrete-energy boundedness mechanism known for energy-preserving quadratic Galerkin systems \citep{schlegel2015long}. A linearly unstable $\hat{\bm{A}}$ ($a_{\max} > 0$) produces at most exponential growth. This rules out the finite-time divergence that an energy-injecting nonlinearity admits. Over a finite horizon, closing the cubic channel removes finite-time blowup. It does not imply global boundedness when $\hat{\bm{A}}$ is linearly unstable, which would require the dissipative constraint of the stability-constrained variant of \appref{sec:app_certified}. The same caveat appears in a prior boundedness analysis \citep{liao2025boundedness}: an energy-preserving quadratic nonlinearity does not by itself confer global boundedness, and in more than two dimensions boundedness is not characterized by the shifted linear part at all, since that analysis constructs a bounded three-dimensional system whose shifted linear part is nowhere dissipative. The long-time stability we report is therefore the finite-horizon, empirically non-divergent behavior over five training horizons, and it stops short of a global bound.

Integrating the seventeen blended reduced models of the proposed framework at the evaluation parameters (\tabref{tab:certified}) confirms this quantitatively. Every trajectory stays bounded out to $t = 60$, five times the training horizon, with the discrete energy held inside the envelope of Eq.~\eqref{eq:energy_rate}. Along these integrated trajectories the cubic energy form $\hat{\bm{q}}^\top \hat{\bm{H}}[\hat{\bm{q}}\otimes\hat{\bm{q}}]$ vanishes to machine precision. The closure of the cubic channel therefore extends from the training snapshots to the full integrated trajectory. The exponential growth the energy envelope permits is never approached, as the following estimate quantifies. At the fitted growth rates and forcing norms the rate estimate of Eq.~\eqref{eq:energy_rate} already allows the reduced amplitude to double within $1.5$ to $6.2$ time units depending on the parameter, far inside the $t = 60$ horizon. The realized energy-growth rate stays far below the envelope's exponential rate at every parameter (\cref{sec:discussion}), so the envelope is never tight. The proposition therefore excludes finite-time blowup, while the observed long-horizon boundedness is an empirical saturation the bound does not by itself explain. The bounded behavior is also robust to the initial condition. We integrated the free (non-stability-constrained) cyclic-skew model from the data initial condition and from its amplifications by factors of two, five, and ten. We also integrated it from three random directions at matched norm. None of these $119$ runs across the seventeen parameters diverged by $t = 60$. Most runs relax back toward the data amplitude, with a median terminal amplitude of $1.4$ times the data scale, while the largest amplifications at weak stratification settle on bounded states above it. Therefore the data-scale state is locally but not globally attracting, consistent with the finite stability domains characterized for quadratic reduced models \citep{kramer2021stability}. The worst-case growth rate $a_{\max}$, the realized energy-growth rates, and the data-weighted growth rate $a_{\rm data}$ of \appref{sec:cert_cost_law} are reported and discussed in \cref{sec:discussion}.

The bound becomes operative at every interpolated parameter because the convex operator blend preserves it, as \propref{prop:blend_preserves} shows.

\begin{proposition}[The convex blend preserves the property at every $\mu^\ast$]
\label{prop:blend_preserves}
Let $\{\hat{\bm{H}}_i\}$ each satisfy Eq.~\eqref{eq:cyclic_skew} and
let $\{w_i(\mu^\ast)\}$ be the Gaussian-kernel weights of
\cref{sec:methods:loi}, which are non-negative and sum to one. Then
the blended operator
$\hat{\bm{H}}(\mu^\ast) = \sum_i w_i(\mu^\ast)\,\hat{\bm{H}}_i$
also satisfies Eq.~\eqref{eq:cyclic_skew}, and
$\hat{\bm{q}}^\top \hat{\bm{H}}(\mu^\ast)[\hat{\bm{q}}\otimes\hat{\bm{q}}] = 0$ for
all $\hat{\bm{q}}$. The boundedness of \propref{prop:energy_neutral} therefore
holds at every evaluation parameter $\mu^\ast$, with the linear part
$\hat{\bm{A}}(\mu^\ast) = \sum_i w_i(\mu^\ast)\,\hat{\bm{A}}_i$.
\end{proposition}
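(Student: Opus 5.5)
The plan is to show that the cyclic-skew condition of Eq.~\eqref{eq:cyclic_skew} is a linear, homogeneous constraint on the tensor entries. The blended tensor then satisfies it, and \propref{prop:energy_neutral} applies verbatim with the blended operators.

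First I fix how $\hat{\bm{H}}$ maps to $\hat F$. The map $\hat{\bm{H}} \mapsto \hat F$ defined by $\hat{\bm{H}}(\hat{\bm{q}}\otimes\hat{\bm{q}}) = \sum_{i,j,k}\hat F_{ijk}\hat q_j\hat q_k\,\bm{e}_i$ via the symmetric expansion of \cref{sec:background:structure} is linear. Concretely, $\hat F_{ijk}$ is the symmetrized entry $\tfrac12(H_{i,(j,k)}+H_{i,(k,j)})$ in the Kronecker convention, or the corresponding halved entry in the compact convention. Hence the blended tensor is $\hat F(\mu^\ast)=\sum_i w_i(\mu^\ast)\hat F^{(i)}$.

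Next I write the constraint as a linear map. Define $\mathcal{C}(\hat F)_{ijk} = \hat F_{ijk}+\hat F_{jki}+\hat F_{kij}$, so that Eq.~\eqref{eq:cyclic_skew} says $\hat F\in\ker\mathcal{C}$. By linearity,
\[
\mathcal{C}\bigl(\hat F(\mu^\ast)\bigr)=\sum_i w_i(\mu^\ast)\,\mathcal{C}(\hat F^{(i)})=0 .
\]
This step uses only that the weights are finite real numbers, not that they are nonnegative or sum to one. I would remark that the kernel is a linear subspace, so convexity of the blend is more than is needed here; convexity would matter only for additional inequality constraints such as those of \appref{sec:app_certified}. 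The cubic identity $\hat{\bm{q}}^\top\hat{\bm{H}}(\mu^\ast)[\hat{\bm{q}}\otimes\hat{\bm{q}}]=0$ then follows from the first part of \propref{prop:energy_neutral}. Alternatively, the cubic form is linear in $\hat{\bm{H}}$, so it equals $\sum_i w_i\cdot 0$.

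Finally, for the boundedness claim I apply \propref{prop:energy_neutral} to the frozen system at $\mu^\ast$, with $\hat{\bm{c}}(\mu^\ast)=\sum_i w_i\hat{\bm{c}}_i$, $\hat{\bm{A}}(\mu^\ast)=\sum_i w_i\hat{\bm{A}}_i$ and $\hat{\bm{H}}(\mu^\ast)$. Since the proposition holds for arbitrary $\hat{\bm{c}}$ and $\hat{\bm{A}}$, Eq.~\eqref{eq:energy_rate} holds with $a_{\max}$ taken for $\operatorname{sym}\hat{\bm{A}}(\mu^\ast)$. I would also note the bonus fact that, by convexity of $\lambda_{\max}$ on symmetric matrices, $a_{\max}(\mu^\ast)\le\sum_i w_i\,a_{\max}(\hat{\bm{A}}_i)\le\max_i a_{\max}(\hat{\bm{A}}_i)$. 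This gives a uniform envelope rate over all $\mu^\ast$ and is the one place where nonnegativity and summing to one are genuinely used. Likewise, $\|\hat{\bm{c}}(\mu^\ast)\|_2\le\max_i\|\hat{\bm{c}}_i\|_2$.

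The main obstacle is the bookkeeping in the first step, not anything analytic. When $\hat{\bm{H}}$ is stored in compact deduplicated form, or when the per-node fits impose the constraint through a structured parameterization, I must verify that the blend is performed on the operator entries themselves rather than on nonlinear parameterization variables. Blending the latter could break linearity. I would state the blend as acting on $\hat{\bm{H}}_i$ (equivalently on $\hat F^{(i)}$), consistent with Eq.~\eqref{eq:loi_blend}. With that settled, the remaining steps are immediate.
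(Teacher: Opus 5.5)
Your proposal is correct and follows essentially the same route as the paper. The paper likewise argues that Eq.~\eqref{eq:cyclic_skew} is linear in the tensor entries, so any linear (in particular convex) combination of cyclic-skew tensors is again cyclic-skew, and equivalently computes $\hat{\bm q}^\top\hat{\bm H}(\mu^\ast)[\hat{\bm q}\otimes\hat{\bm q}]=\sum_i w_i\cdot 0=0$ by applying \propref{prop:energy_neutral} at each node. Your additions are correct refinements rather than a different argument: the explicit application of \propref{prop:energy_neutral} to the frozen blended system, and the uniform envelope $a_{\max}(\mu^\ast)\le\max_i a_{\max}(\hat{\bm A}_i)$ from convexity of $\lambda_{\max}$. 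The only slip is cosmetic: in the compact convention, only the off-diagonal entries with $j\neq k$ are halved.
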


\begin{proof}
Condition \eqref{eq:cyclic_skew} is linear in the tensor entries.
Any linear combination of solutions is therefore a solution, and the
convex blend is one such combination. Equivalently,
$\hat{\bm{q}}^\top \hat{\bm{H}}(\mu^\ast)[\hat{\bm{q}}\otimes\hat{\bm{q}}]
= \sum_i w_i\,\hat{\bm{q}}^\top \hat{\bm{H}}_i[\hat{\bm{q}}\otimes\hat{\bm{q}}]
= \sum_i w_i \cdot 0 = 0$ by \propref{prop:energy_neutral} applied to
each node.
\end{proof}

The constraint set of Eq.~\eqref{eq:cyclic_skew} is linear and convex. The fit is therefore a well-posed projection onto the constraint subspace.

\paragraph{Empirical confirmation} Both propositions are borne out to machine
precision on the fitted operators by the normalized cubic energy form $\mathcal{E}_{\mathrm{cub}} = |\hat{\bm{q}}^\top \hat{\bm{H}}[\hat{\bm{q}}\otimes\hat{\bm{q}}]| / \|\hat{\bm{H}}\|_F$, the rate at which the reduced nonlinearity injects energy, normalized by the operator magnitude. Maximized over random unit vectors and over the reduced trajectory, it is $\approx 7 \times 10^{-17}$ for the cyclic-skew fits and $\approx 9 \times 10^{-2}$ for the unconstrained least-squares fit. For the kernel-blended operator at interpolated $\mu^\ast$ it is $\approx 8 \times 10^{-17}$ (\figref{fig:energy_convexity}(b)). The structured comparison of \cref{sec:burgers} confirms the mechanism. The cyclic-skew model is long-time stable on all $17$ plume parameters out to five training horizons even though its inferred linear operator is itself mildly unstable at every super-regime node (\cref{sec:plume:longtime}). The Lyapunov-stable-%
$\hat{\bm{A}}$ model has a stable linear part but an unconstrained quadratic operator, and on the plume it diverges at every node under the regime-split basis and completes at most one of $17$ under the single basis. Long-time boundedness on these problems is governed by the cubic channel, with the linear spectrum the weaker predictor of the two.

\begin{figure}[htbp]
\centering
\includegraphics[width=\linewidth]{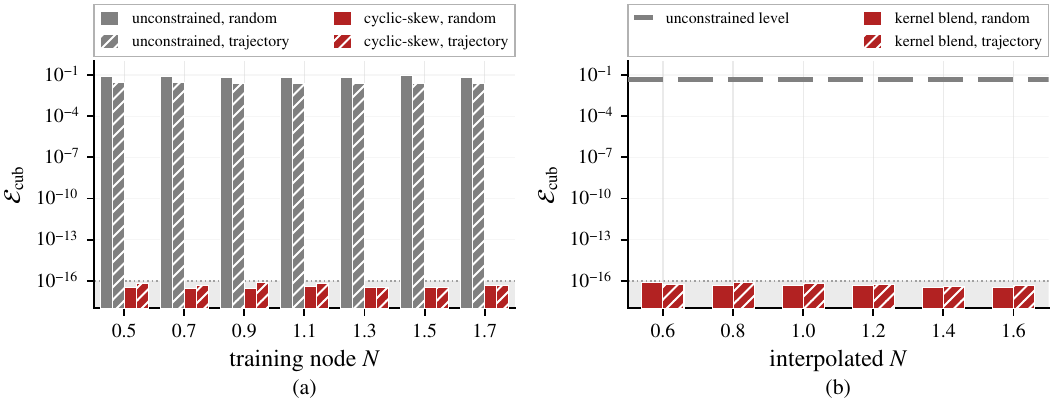}
\caption{Normalized cubic energy form $\mathcal{E}_{\mathrm{cub}}$ (maximized over random unit vectors and over the trajectory) for (a) the unconstrained and cyclic-skew fits at the seven super-side training nodes $N \in \{0.5, 0.7, \ldots, 1.7\}$ of the nine-value representative design of the $\rho_{\rm aff}$ study and (b) the kernel-blended operator at interpolated $N$, with the unconstrained level marked.}
\label{fig:energy_convexity}
\end{figure}

\subsection{Basis-sharing perturbation}
\label{sec:app_davis_kahan}

The local operator interpolation of \cref{sec:methods:loi} fits all per-parameter operators in a single shared reduced basis. The following standard bound establishes that this shared basis is a controlled perturbation of the basis a per-parameter fit would use.

\begin{proposition}[Davis--Kahan bound for the shared reduced basis]
\label{prop:davis_kahan}
Let $\bm{\Phi}^{(i)}$ be the rank-$r$ truncated SVD basis of the
joint-mean-centered block $\bm{Q}^{(i)} - \bar{\bm{q}}\bm{1}_K^\top$ at $\mu_i$, and let
$\bm{\Phi}_{\rm joint}$ be the rank-$r$ basis of the concatenated
snapshot matrix $\bm{Q}_{\rm joint}$. If the Gram-matrix eigenvalue gap
$\sigma_r^2(\bm{Q}_{\rm joint}) - \sigma_{r+1}^2(\bm{Q}_{\rm joint})
\geq \gamma > 0$, then the principal-angle distance between the two
subspaces satisfies
\begin{equation}
\|\sin \bm{\Theta}(\bm{\Phi}^{(i)}, \bm{\Phi}_{\rm joint})\|_2
\;\leq\;
\frac{\|\bm{Q}_{\rm joint}\bm{Q}_{\rm joint}^\top
  - \bm{Q}^{(i)}{\bm{Q}^{(i)}}^\top\|_2}{\gamma}.
\end{equation}
\end{proposition}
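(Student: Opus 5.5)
The plan is to prove the constant-one bound by exploiting an ordering that the shared-basis setting gives us for free. The joint block $\bm{Q}_{\rm joint}$ is the concatenation of the centered per-parameter blocks $\bm{Q}^{(j)} - \bar{\bm{q}}\bm{1}_K^\top$, all centered with the same joint mean. I read $\bm{Q}^{(i)}$ and $\bm{Q}_{\rm joint}$ in the statement as these centered matrices, which is the only reading under which both bases are defined consistently. Set $\bm{G} = \bm{Q}_{\rm joint}\bm{Q}_{\rm joint}^\top$ and $\bm{G}_i = \bm{Q}^{(i)}{\bm{Q}^{(i)}}^\top$. Then
\[
\bm{G} \;=\; \sum_j \bm{G}_j,
\qquad
\bm{\Delta} \;:=\; \bm{G} - \bm{G}_i \;=\; \sum_{j\neq i}\bm{G}_j \;\succeq\; 0 .
\]
The key consequence is Weyl monotonicity: $\lambda_k(\bm{G}_i) \le \lambda_k(\bm{G})$ for every $k$, and in particular $\lambda_{r+1}(\bm{G}_i) \le \lambda_{r+1}(\bm{G})$. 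By construction, $\bm{\Phi}^{(i)}$ and $\bm{\Phi}_{\rm joint}$ are the top-$r$ eigenspaces of $\bm{G}_i$ and $\bm{G}$.

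Next I set up the Davis--Kahan Sylvester equation with the roles chosen so that the gap of $\bm{G}$ alone suffices. Write $\bm{G}\bm{\Phi}_{\rm joint} = \bm{\Phi}_{\rm joint}\bm{\Lambda}$, where the eigenvalues of $\bm{\Lambda}$ are at least $\lambda_r(\bm{G})$. Let $\bm{P}_i^\perp$ be the orthogonal projector onto the complement of $\operatorname{span}\bm{\Phi}^{(i)}$. This projector commutes with $\bm{G}_i$, and $\bm{G}_i$ restricted to its range has spectrum at most $\lambda_{r+1}(\bm{G}_i)$. Applying $\bm{P}_i^\perp$ to $\bm{G}_i\bm{\Phi}_{\rm joint} = \bm{G}\bm{\Phi}_{\rm joint} - \bm{\Delta}\bm{\Phi}_{\rm joint}$ and setting $\bm{X} = \bm{P}_i^\perp\bm{\Phi}_{\rm joint}$ gives
\[
\bm{G}_i\bm{X} - \bm{X}\bm{\Lambda} \;=\; -\,\bm{P}_i^\perp\bm{\Delta}\bm{\Phi}_{\rm joint}.
\]
The spectrum of $\bm{\Lambda}$ lies in $[\lambda_r(\bm{G}),\infty)$. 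The spectrum of $\bm{G}_i$ on $\operatorname{range}\bm{P}_i^\perp$ lies in $(-\infty,\lambda_{r+1}(\bm{G}_i)] \subseteq (-\infty,\lambda_{r+1}(\bm{G})]$ by the monotonicity above. The two spectra are therefore separated by an interval of length at least $\sigma_r^2(\bm{Q}_{\rm joint}) - \sigma_{r+1}^2(\bm{Q}_{\rm joint}) \ge \gamma$.

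The final step invokes the Davis--Kahan $\sin\bm{\Theta}$ theorem in its Sylvester form, as in Bhatia, Matrix Analysis, Thm.~VII.3.1. For Hermitian operators whose spectra are separated by a gap $\delta$, with one spectrum lying to one side of the gap, the solution of such a Sylvester equation satisfies $\|\bm{X}\|_2 \le \|\text{rhs}\|_2/\delta$. This gives
\[
\|\bm{X}\|_2 \;\le\; \frac{\|\bm{\Delta}\|_2}{\gamma},
\]
using $\|\bm{P}_i^\perp\bm{\Delta}\bm{\Phi}_{\rm joint}\|_2 \le \|\bm{\Delta}\|_2$. Since $\|\bm{P}_i^\perp\bm{\Phi}_{\rm joint}\|_2 = \|\sin\bm{\Theta}(\bm{\Phi}_{\rm joint},\bm{\Phi}^{(i)})\|_2$, and the principal angles between two $r$-dimensional subspaces are symmetric in the pair, this is exactly the stated bound with constant one.

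The main obstacle is the separation step. If one instead projects onto the complement of $\bm{\Phi}_{\rm joint}$, the gap needed is $\lambda_r(\bm{G}_i) - \lambda_{r+1}(\bm{G})$, which monotonicity pushes in the wrong direction. Only the orientation chosen above, combined with $\bm{\Delta}\succeq 0$, converts a gap assumption on $\bm{Q}_{\rm joint}$ alone into the required spectral separation, without incurring a Weyl loss $\gamma - \|\bm{\Delta}\|_2$ or a factor of two. I would also verify that the version of the $\sin\bm{\Theta}$ theorem cited applies in the spectral norm under one-sided interval separation, which is the case used here.
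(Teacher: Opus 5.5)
Your proof is correct and follows the paper's route. The paper also applies the symmetric Davis--Kahan theorem to the Gram matrices, using that $\bm{Q}_{\rm joint}\bm{Q}_{\rm joint}^\top - \bm{Q}^{(i)}\bm{Q}^{(i)\top}\succeq 0$ together with interlacing to turn the joint gap into the mixed gap $\lambda_r(\bm{G})-\lambda_{r+1}(\bm{G}_i)\ge\gamma$; your Sylvester-equation setup spells out the orientation that the paper's short sketch leaves implicit.
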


\noindent The bound is the symmetric Davis--Kahan theorem \citep{davis1970rotation, stewart1990matrix} applied to the symmetric Gram matrices $\bm{Q}\bm{Q}^\top$ of the two snapshot sets, whose eigenvalues are the squared singular values. The self-gap of the joint Gram matrix suffices here because the columns of $\bm{Q}^{(i)}$ are columns of $\bm{Q}_{\rm joint}$ (both matrices carry the same joint-mean centering), so $\bm{Q}_{\rm joint}\bm{Q}_{\rm joint}^\top - \bm{Q}^{(i)}\bm{Q}^{(i)\top}$ is the sum of the remaining blocks' Gram contributions and is positive semidefinite, and eigenvalue interlacing converts the joint-Gram gap into the mixed gap the theorem requires. The Gram eigenvalue gap $\gamma = \sigma_r^2 - \sigma_{r+1}^2$ is evaluated at the truncation rank $r$ and is positive at the ranks used here. As with any sin-$\bm{\Theta}$ estimate, the bound is informative only when the perturbation 2-norm in the numerator is below the gap $\gamma$, since beyond that the right-hand side exceeds one and the subspace distance is no longer constrained. The shared basis therefore incurs a perturbation of the per-parameter operators that is bounded in terms of the gap. That bounded perturbation is the price of the per-parameter operator quality that the local interpolation delivers independently of $M$.


\section{Stability-constrained parametric interpolation}
\label{sec:app_certified}

The free construction of \appref{sec:app_energy} closes the cubic energy injection channel and is preserved by the convex blend, but it leaves the linear part $\hat{\bm{A}}(\mu^\ast)$ uncontrolled. The blended operators used at evaluation are mildly unstable, with $\lambda_{\max}(\operatorname{sym}\hat{\bm{A}}(\mu^\ast))$ between $+0.20$ and $+0.46$ across the evaluated parameters. The energy bound of \propref{prop:energy_neutral} therefore guarantees only freedom from finite-time blowup, without a uniform-in-time bound. This appendix gives a variant that is uniformly bounded in time at every interpolated parameter, obtained by matching the interpolation rule to the geometry of each constraint.

Stability-preserving operator interpolation has a substantial literature, almost all of it linear. Matrix-manifold constructions interpolate stable linear time-invariant operators \citep{amsallem2009interpolation,eid2011stability}, and a separate line computes the nearest stable linear operator to a given unstable one \citep{gillis2017distance}, the projection underlying our dissipative clip. None of these addresses a quadratic nonlinearity. For nonlinear reduced models, one line of work constrains a single non-parametric quadratic operator inside an absorbing ball \citep{pontesduff2024stability}, and another infers a single locally stable quadratic model \citep{goyal2024guaranteed}. A third preserves a gradient structure in the inferred operators \citep{geng2024gradient}, with a forecast-horizon-dependent error constant. A roadmap states the preserve-structure-under-interpolation principle without a quadratic construction \citep{sreekumar2026roadmap}. The new element is the pairing of a \emph{parameter-uniform} absorbing ball with a uniform-in-time error bound for a \emph{data-driven quadratic} reduced model. This combination follows because both structural constraints are convex and the interpolation is a single convex blend.

\subsection{Geometry-matched interpolation}

The two structural properties we use live in convex sets of operator space. The cyclic-skew condition of Eq.~\eqref{eq:cyclic_skew} is a linear subspace, and dissipativity with margin $\eta>0$, $\operatorname{sym}(\hat{\bm{A}}) \preceq -\eta\bm{I}$, is a convex set, a translate of the dissipative cone. Both are closed under convex combinations, and a single convex kernel blend therefore preserves both.

\begin{proposition}[Geometry-matched preservation]
\label{prop:cert_preserve}
Let $\{(\hat{\bm{c}}_i, \tilde{\bm{A}}_i, \hat{\bm{H}}_i)\}$
satisfy, for every $i$, the cyclic-skew condition on
$\hat{\bm{H}}_i$ and
$\operatorname{sym}(\tilde{\bm{A}}_i)\preceq-\eta\bm{I}$. Then
for any convex weights $\{w_i\}$ the blended operators
$\hat{\bm{c}}(\mu^\ast)=\sum_i w_i\hat{\bm{c}}_i$,
$\hat{\bm{H}}(\mu^\ast)=\sum_i w_i\hat{\bm{H}}_i$ and
$\hat{\bm{A}}(\mu^\ast)=\sum_i w_i\tilde{\bm{A}}_i$ satisfy
$\hat{\bm{q}}^\top\hat{\bm{H}}(\mu^\ast)[\hat{\bm{q}}\otimes\hat{\bm{q}}]=0$ and
$\operatorname{sym}(\hat{\bm{A}}(\mu^\ast))\preceq-\eta\bm{I}$.
\end{proposition}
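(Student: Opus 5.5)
The plan is to treat the two conclusions separately. Each follows from linearity of the blend combined with convexity of the relevant constraint set. Throughout, the weights are $w_i\ge 0$ with $\sum_i w_i=1$.

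For the quadratic part, I would invoke \propref{prop:blend_preserves} directly. The cyclic-skew condition of Eq.~\eqref{eq:cyclic_skew} is linear in the tensor entries $\hat F_{ijk}$. The map $\hat{\bm H}\mapsto\hat F$ (symmetric expansion) is also linear, so the tensor of $\sum_i w_i\hat{\bm H}_i$ is $\sum_i w_i\hat F^{(i)}$, and it satisfies the cyclic identity. \propref{prop:energy_neutral} then gives $\hat{\bm q}^\top\hat{\bm H}(\mu^\ast)[\hat{\bm q}\otimes\hat{\bm q}]=0$. Alternatively, and more directly, I would expand by linearity:
$\hat{\bm q}^\top\hat{\bm H}(\mu^\ast)[\hat{\bm q}\otimes\hat{\bm q}]=\sum_i w_i\,\hat{\bm q}^\top\hat{\bm H}_i[\hat{\bm q}\otimes\hat{\bm q}]=0$.
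This step needs neither nonnegativity nor normalization of the weights.

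For the linear part, the key observation is that $\operatorname{sym}$ is linear, so $\operatorname{sym}(\hat{\bm A}(\mu^\ast))=\sum_i w_i\operatorname{sym}(\tilde{\bm A}_i)$. I would verify the Loewner inequality through the quadratic form. For any $\bm v\in\mathbb{R}^r$,
\[
\bm v^\top\operatorname{sym}(\hat{\bm A}(\mu^\ast))\bm v=\sum_i w_i\,\bm v^\top\operatorname{sym}(\tilde{\bm A}_i)\bm v\le\sum_i w_i(-\eta\|\bm v\|^2)=-\eta\|\bm v\|^2 .
\]
Here nonnegativity of $w_i$ preserves each inequality, and $\sum_i w_i=1$ recovers exactly the margin $\eta$ rather than a rescaled one. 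This shows that $\{\bm A:\operatorname{sym}\bm A\preceq-\eta\bm I\}$ is convex, being a translate by $-\eta\bm I$ of the preimage of the negative semidefinite cone under a linear map. The blended constant $\hat{\bm c}(\mu^\ast)$ needs no property and is merely recorded.

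There is no real obstacle. The one point requiring care is the distinction between the two conclusions. The cyclic-skew closure holds for arbitrary linear combinations. The dissipativity margin, by contrast, genuinely needs convex weights: negative weights would break the inequality, and unnormalized weights would change $\eta$. The Gaussian kernel weights of Eq.~\eqref{eq:loi_blend} are positive and normalized by construction, so they satisfy both requirements.
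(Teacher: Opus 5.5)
Your proposal is correct and follows the paper's proof: the cubic-form cancellation comes from \propref{prop:blend_preserves} (linearity of the cyclic-skew condition), and the margin comes from linearity of $\operatorname{sym}$ together with preservation of $\preceq -\eta\bm{I}$ under convex combination, which you check explicitly through the quadratic form. Your remark that only the dissipativity claim actually needs nonnegative, normalized weights is a correct refinement that the paper leaves implicit.
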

\begin{proof}
The first claim is \propref{prop:blend_preserves}. For the second,
$\operatorname{sym}(\hat{\bm{A}}(\mu^\ast))
=\sum_i w_i\operatorname{sym}(\tilde{\bm{A}}_i)
\preceq\sum_i w_i(-\eta\bm{I})=-\eta\bm{I}$,
since the matrix inequality is preserved under convex combination.
\end{proof}

A dissipative version $\tilde{\bm{A}}_i$ of each fitted operator is obtained by clipping the symmetric part. With $\hat{\bm{A}}_i=\bm{G}_i+\bm{K}_i$ for $\bm{G}_i$ symmetric and $\bm{K}_i$ skew, and the eigendecomposition $\bm{G}_i=\bm{U}\operatorname{diag}(s)\bm{U}^\top$, set $\tilde{\bm{A}}_i=\bm{U}\operatorname{diag}(\min(s,-\eta))\bm{U}^\top+\bm{K}_i$. This is the nearest operator in Frobenius norm with $\operatorname{sym}\preceq-\eta\bm{I}$ and unchanged skew part. No manifold interpolation is required, because dissipativity is convex and the plain convex blend already preserves it. The cyclic-skew $\hat{\bm{H}}_i$ are left unmodified. The clip therefore touches only the linear part, and the cubic energy injection channel stays closed.

\subsection{Uniform-in-time stability and a priori error}

\begin{proposition}[Uniform-in-time energy bound at every parameter]
\label{prop:cert_uniform}
For the stability-constrained reduced model
$\dot{\hat{\bm{q}}}=\hat{\bm{c}}(\mu^\ast)+\hat{\bm{A}}(\mu^\ast)\hat{\bm{q}}
+\hat{\bm{H}}(\mu^\ast)[\hat{\bm{q}}\otimes\hat{\bm{q}}]$ of
\propref{prop:cert_preserve}, the energy $E=\tfrac12\|\hat{\bm{q}}\|^2$ obeys
$\dot{E}\le\|\hat{\bm{c}}\|\sqrt{2E}-2\eta E$, and therefore
\begin{equation}
\|\hat{\bm{q}}(t)\|\;\le\;\max\!\Bigl(\|\hat{\bm{q}}(0)\|,\;R\Bigr),
\qquad R:=\frac{\|\hat{\bm{c}}(\mu^\ast)\|}{\eta},
\label{eq:cert_ball}
\end{equation}
for all $t\ge0$ and all $\mu^\ast$, a parameter-uniform absorbing ball.
\end{proposition}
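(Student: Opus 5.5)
The plan is to repeat the energy computation of \propref{prop:energy_neutral} and then use the dissipative margin supplied by \propref{prop:cert_preserve}. First, \propref{prop:cert_preserve} gives two facts at every $\mu^\ast$. The blended $\hat{\bm{H}}(\mu^\ast)$ is cyclic-skew, so the cubic form $\hat{\bm{q}}^\top\hat{\bm{H}}(\mu^\ast)[\hat{\bm{q}}\otimes\hat{\bm{q}}]$ vanishes identically. Also $\operatorname{sym}(\hat{\bm{A}}(\mu^\ast))\preceq-\eta\bm{I}$. Differentiating $E$ along trajectories then gives
\[
\dot E=\hat{\bm{q}}^\top\hat{\bm{c}}+\hat{\bm{q}}^\top\operatorname{sym}(\hat{\bm{A}})\hat{\bm{q}}\le\|\hat{\bm{c}}\|\,\|\hat{\bm{q}}\|-\eta\|\hat{\bm{q}}\|^2 .
\]
Here I use Cauchy--Schwarz for the forcing term and the fact that the skew part contributes nothing. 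With $\|\hat{\bm{q}}\|=\sqrt{2E}$, this is the stated inequality $\dot E\le\|\hat{\bm{c}}\|\sqrt{2E}-2\eta E$.

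Next I pass to the amplitude $u(t)=\|\hat{\bm{q}}(t)\|$. Where $u>0$, we have $\dot E=u\dot u$, so dividing by $u$ gives $\dot u\le\|\hat{\bm{c}}\|-\eta u=\eta(R-u)$ with $R=\|\hat{\bm{c}}(\mu^\ast)\|/\eta$. The ball bound then follows from a level-set argument. Suppose $u(t_1)>m:=\max(u(0),R)$ at some $t_1$. Take $t_0=\sup\{t<t_1: u(t)\le m\}$. Then $u(t_0)=m$ by continuity, and $u>m\ge R$ on $(t_0,t_1]$, so $\dot u<0$ there. This gives $u(t_1)<u(t_0)=m$, a contradiction. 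I could instead use Gr\"onwall to get $u(t)\le R+(u(0)-R)e^{-\eta t}$, which is at most the maximum. The bound holds for every $\mu^\ast$ because $\eta$ does not depend on $\mu^\ast$. The statement's phrase ``parameter-uniform'' should be read as uniform in time with the same margin $\eta$ at every $\mu^\ast$. If one also wants an $R$ independent of $\mu^\ast$, use $\|\hat{\bm{c}}(\mu^\ast)\|\le\max_i\|\hat{\bm{c}}_i\|$ by convexity.

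The main obstacle is that $u$ is not differentiable where it vanishes. This causes no difficulty, because such points satisfy $u=0\le m$ and so lie inside the ball, while the argument only needs differentiability on the set where $u>m\ge0$. A second point is global existence of the solution. The a priori bound excludes blowup, so the local solution of the polynomial ODE extends to all $t\ge0$ by the standard continuation argument.
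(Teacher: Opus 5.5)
Your proof is correct and follows the paper's route. Both arguments kill the cubic term through cyclic-skew preservation, bound $\dot E$ using $\operatorname{sym}\hat{\bm A}\preceq-\eta\bm I$ and Cauchy--Schwarz, and pass to $u=\|\hat{\bm q}\|$ with $\dot u\le\|\hat{\bm c}\|-\eta u$. The only difference is that the paper integrates this to $u(t)\le R+(u(0)-R)e^{-\eta t}$, whereas your level-set argument reaches the max bound directly and also handles the nondifferentiability of $u$ at zero and global existence, which the paper leaves implicit.
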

\begin{proof}
The cubic term vanishes (\propref{prop:cert_preserve}). With
$\operatorname{sym}\hat{\bm{A}}\preceq-\eta\bm{I}$,
$\dot E=\hat{\bm{q}}^\top\hat{\bm{c}}+\hat{\bm{q}}^\top\hat{\bm{A}}\hat{\bm{q}}
\le\|\hat{\bm{c}}\|\sqrt{2E}-2\eta E$. Writing $u=\|\hat{\bm{q}}\|$ gives
$\dot u\le\|\hat{\bm{c}}\|-\eta u$, which integrates to
$u(t)\le R+(u(0)-R)e^{-\eta t}$.
\end{proof}

Under the convex blend $\|\hat{\bm c}(\mu^\ast)\| \le \max_i \|\hat{\bm c}_i\|$, so the radius $R$ is bounded uniformly in $\mu^\ast$. The estimate in \propref{prop:cert_uniform} is the classical absorbing-ball bound for a quadratic system with an energy-preserving nonlinearity and a dissipative linear part \citep{schlegel2015long}. New here is its parameter-uniform preservation under the convex operator blend. The uniform amplitude bound of \propref{prop:cert_uniform} converts the operator-interpolation error into a uniform bound on the trajectory error, a conversion that neither the free construction nor a single-operator parametric fit admits.

\begin{remark}[Conditional uniform error bound]
\label{rem:cert_error}
When the enforced margin $\eta$ dominates the nonlinear amplification at the
absorbing-ball radius $R$ of Eq.~\eqref{eq:cert_ball}, a scalar Gr\"onwall
estimate converts the operator-interpolation defect into a uniform-in-$t$
bound on the trajectory error, the uniform-in-time counterpart of the
horizon-dependent error constant of the single-global-operator construction
\citep{geng2024gradient}. On the advection-dominated plume, that premise fails
at the small attainable margin ($\eta = 0.05$), and the operative guarantee therefore
remains the unconditional absorbing ball of \propref{prop:cert_uniform}.
\end{remark}

\subsection{The accuracy-versus-stability tradeoff}

\Tabref{tab:certified} reports the leave-one-out comparison on the plume across the free, Lyapunov-stable, and stability-constrained constructions. The comparison is averaged over five basis seeds. The free construction carries no uniform-in-time guarantee because its blended linear operator is not dissipative. The stability-constrained construction guarantees $\operatorname{sym}\hat{\bm{A}}(\mu^\ast)\preceq-\eta\bm{I}$ at every parameter (\propref{prop:cert_preserve}) at a substantial accuracy cost, because the data-optimal operators of this advection-driven flow are far from dissipative and the projection that enforces the constraint overdamps them. The Lyapunov-stable parameterization lands at the dissipativity boundary, and that placement is guaranteed because Eq.~\eqref{eq:lyap_A} gives $\operatorname{sym}(\hat{\bm{A}}) \preceq -\varepsilon\bm{I}$ with the small fitting slack $\varepsilon$ and the convex blend preserves that constraint set. This slack $\varepsilon$ is distinct from the prescribed stability margin $\eta$. The guarantee it inherits is only the $\varepsilon$-level one, whose absorbing ball $\|\hat{\bm{c}}\|/\varepsilon$ is too large to be informative. The dissipative constraint set of \propref{prop:cert_preserve} at a finite margin $\eta$ therefore carries the quantitative bound. The cyclic-skew constraint already secures empirical long-time stability at the best accuracy, and the stability constraint provides a provable bound when one is required, at a quantified cost. On a problem in which dissipation dominates advection, that cost would be smaller.

\begin{table}[htbp]
\centering
\caption{Accuracy versus stability guarantee on the plume, leave-one-out over five basis seeds. The growth-rate column reports the Euclidean-symmetric value $\lambda_{\max}(\operatorname{sym}\hat{\bm{A}})$ as the worst-case statistic, the maximum over the $17$ leave-one-out parameters. The $\bm M$-weighted growth rate of \appref{sec:cert_cost_law} is a separate quantity.}
\label{tab:certified}
\small
\begin{tabular}{lccc}
\toprule
Construction & In-window & Stable $t{=}60$ & $\lambda_{\max}(\operatorname{sym}\hat{\bm{A}}(\mu^\ast))$ \\
\midrule
Free (cyclic-skew $\bm{H}$)        & $7.9\%$  & $17/17$ & $+0.46$ \\
Lyapunov-stable $\bm{A}$ + skew    & $16.9\%$ & $17/17$ & $-0.00$ \\
\textbf{Stability-constrained} ($\eta = 0.05$)      & $37.3\%$ & $17/17$ & $\mathbf{-0.05}$ \\
\bottomrule
\end{tabular}
\end{table}

\subsection{The data-weighted growth rate governs the stability cost}
\label{sec:cert_cost_law}

The $7.9\%\!\to\!37.3\%$ accuracy penalty of \tabref{tab:certified} is regime-dependent and far from a fixed price. The first candidate predictor is the worst-case growth rate $a_{\max} := \lambda_{\max}(\operatorname{sym}\hat{\bm A})$, the Euclidean value quoted in \tabref{tab:certified} and the synthesis of \cref{sec:discussion:synthesis}. This is the growth rate of the most unstable mode the constraint must damp, yet it proves the weaker predictor. Its $\bm M$-weighted counterpart $\lambda_{\max}(\operatorname{sym}_{\bm M}\hat{\bm A})$ uses $\bm M$, the symmetric positive-definite weight matrix of the energy metric. That counterpart is reported only as a separate metric-weighted diagnostic. The reliable one is the \emph{data-weighted} growth rate $a_{\rm data}$ of Eq.~\eqref{eq:a_data}. It is the energy-weighted growth rate of the inferred linear part over the directions the training snapshots occupy. Results are aggregated per training node, averaging three basis seeds at each $N$, which yields a smaller dataset distinct from the five-seed leave-one-out table of \tabref{tab:certified}. On it the stability cost correlates with $a_{\rm data}$ at $+0.90$, $+0.81$, and $+0.72$ under the $L_2$, kinetic-energy, and physical kinetic-plus-potential energy metrics. All three are positive and substantial. A Spearman rank correlation of $+0.84$ on the $L_2$ metric rules out a single high-leverage cluster, and the cluster-bootstrap-by-$N$ $95\%$ confidence interval is $[+0.70, +0.97]$. The worst-case growth rate $a_{\max}$ is the weaker predictor under every metric, at $+0.06$ and $+0.14$ under the kinetic-energy and physical metrics. On the $L_2$ metric it is even anti-correlated ($-0.48$), because it ignores whether the trajectory visits the unstable mode it measures. The data-weighted quantity also stays predictive after partialing out the parameter $N$ at partial correlation $+0.72$ and is therefore not a proxy for the regime label.

The asymmetry follows from a short derivation. With $\hat{\bm A}_{\rm cert}$ the constrained minimizer, orthogonality of the unconstrained least-squares residual gives the cost as $\|(\hat{\bm A}_{\rm cert}-\hat{\bm A})\,\hat{\bm Q}\|_F^2$, with $\hat{\bm Q}$ the matrix of reduced training snapshots (the projected centered snapshots written $\bm X^{(i)}$ in \cref{sec:background:fit}). Let $\bm L$ be the Cholesky factor of $\bm M$ satisfying $\bm M = \bm L\bm L^\top$, and let $\tilde{\bm Q} := \bm L^\top\hat{\bm Q}$ be the $\bm M$-whitened snapshot matrix. The constraint touches only the $\bm M$-symmetric part of the unconstrained fit $\hat{\bm A}$. Clipping its eigenvalues $\lambda_i$ that exceed $-\eta$ down to $-\eta$ gives, to leading order,
\begin{equation}
\text{cost}\;\approx\;\sum_{i:\,\lambda_i>-\eta}(\lambda_i+\eta)^2\,e_i,
\qquad e_i=\bm v_i^\top(\tilde{\bm Q}\tilde{\bm Q}^\top)\bm v_i,
\label{eq:cost_formula}
\end{equation}
a sum over the energy-amplifying modes weighted by the trajectory energy $e_i$ they carry, with $\bm v_i$ the eigenvectors of the $\bm M$-symmetric part of $\hat{\bm A}_{\rm cert}$. The cost of Eq.~\eqref{eq:cost_formula} is measured in the $\bm M$-metric. The data-weighted growth rate is the natural summary of this sum, since both weight by $e_i$. The worst-case growth rate $a_{\max}$ is the unweighted maximum and ignores whether the data visit the unstable mode. The constraint enforced in this paper is the Euclidean instance of this construction, where $\bm M = \bm I$, the cost reduces to the Frobenius norm, and the constraint becomes $\operatorname{sym}\hat{\bm A}\preceq-\eta\bm I$. The kinetic-energy and physical kinetic-plus-potential energy metrics enter only as diagnostics that report the same cost law under alternative weights. Eq.~\eqref{eq:cost_formula} predicts the measured cost within each system at correlation $+0.81$ on the plume and $+0.73$ on Burgers. We report it as a mechanistic decomposition and make no claim of a tight bound, because the soft penalty re-optimizes the constant and quadratic operators and therefore pays less than the rigid clip.

The cost predictor $a_{\rm data}$ is a property of the fitted operator, yet across the seventeen $N$ it covaries with the domain-integrated buoyancy production $\langle w\,b\rangle$ computed directly from the full-order velocity and buoyancy fields, at Pearson $+0.96$ and Spearman $+0.84$. Both are largest in the weakly stratified free rise regime and decay as the stratification strengthens. This covariation is physical, because $a_{\rm data}$ runs high precisely where buoyancy drives the energy, and the two quantities are different expressions of the same buoyancy-production physics. We claim no independent validation from it. Enforcing the constraint is therefore expensive exactly in the weakly stratified free rise regime, where buoyancy drives the energy the constraint must suppress, and inexpensive in the strongly stratified trapping regime, where that production is quenched. The same ranking holds on the viscous Burgers problem, where the correlation of the cost with $a_{\rm data}$ is $+0.78$. There the fitted linear operator has a large worst-case growth rate ($\lambda_{\max}\approx+4.5$) yet the constraint is inexpensive, because the trajectory energy lives in dissipative directions where $a_{\rm data}<0$. The Burgers stability costs are uniformly small, all below $1\%$, and that comparison confirms the predictor ranking without probing an expensive regime. The plume supplies the wide spread of costs over which the law is tested. Across both systems and all three metrics $a_{\rm data}$ remains the reliable predictor while the worst-case growth rate is weaker throughout. The correlation of the cost with $a_{\rm data}$ is robust to the stability margin. Across the margins of \figref{fig:tradeoff} it stays near $+0.90$ on the plume and positive on Burgers, and the predictor ranking holds at every margin.

A further question is whether the cost law is constructive. Because $a_{\rm data}$ marks where the constraint is expensive, a parameter-adaptive margin $\eta(\mu)$ could in principle relax the constraint where $a_{\rm data}<0$ and enforce it where $a_{\rm data}>0$, recovering accuracy while still controlling dissipativity. The convex preservation of \propref{prop:cert_preserve} holds for any node-wise margin schedule, and the construction is therefore admissible. We tested such a schedule keyed to $a_{\rm data}$ directly, and it yields only a small gain. The per-node margin recovers only about two percentage points over the uniform constraint. It also downgrades the stability-constrained operator from strictly dissipative to merely marginally dissipative, and no schedule approaches the free-construction accuracy.

The estimator drives the recovery. Refitting the constant and quadratic operators under each clipped linear operator before blending keeps the hard margin and the guarantee intact and recovers about $60\%$ of the stability cost, the excess over the free $7.9\%$ model. Averaged over five basis seeds, the stability-constrained error falls from $37.3\%$, with a seed spread below $0.1$ points, to $18.7 \pm 0.5\%$. The starting value matches the clip-only row of \tabref{tab:certified}, which is therefore the upper bound obtained with the simplest estimator. The remainder behaves as the cost law predicts. The refit cost stays concentrated at the weakly stratified free rise nodes. Its correlation with $a_{\rm data}$ stays close to that of the clip-only cost, at Pearson correlations of $+0.84$ for the clip-only cost and $+0.79$ for the refit cost across the five seeds, each with a seed spread below $0.02$, and a rank correlation of $+0.75 \pm 0.02$ for the refit cost. The law therefore holds under both cost estimators. The price falls further only as the problem moves toward dissipation dominance, and the data-weighted growth rate remains the diagnostic of the part of the cost no estimator removes.

\subsection{Per-variant accuracy breakdown}
\label{sec:app_multimetric}

The body reports the aggregate plume accuracy of the proposed model in \cref{sec:plume:accuracy} and summarizes it in \tabref{tab:plume_accuracy}, which lists the metrics for the cyclic-skew configuration alone. \Figref{fig:plume_multimetric} shows the same metrics, in the same order, across all four structural variants. The cyclic-skew row reproduces \tabref{tab:plume_accuracy}, and the other three rows show how the alternative constructions perform on identical quantities. The columns run from the Frobenius error over the three windows, through the in-window $z_{\max}(t)$ temporal correlation and the two rise-quantity RMSEs, to the count of trajectories that integrate to $t = 60$. As in \cref{sec:burgers:matrix}, only the cyclic-skew variant is both accurate in window and stable to $t = 60$, and the figure extends this comparison from the aggregate in-window error to every window and QoI. The plain least-squares fit is accurate in window at the median but completes only $2/17$ trajectories, and the Lyapunov-only fit matches the cyclic-skew in-window error yet completes $0/17$.

\begin{figure}[htbp]
\centering
\includegraphics[width=\linewidth]{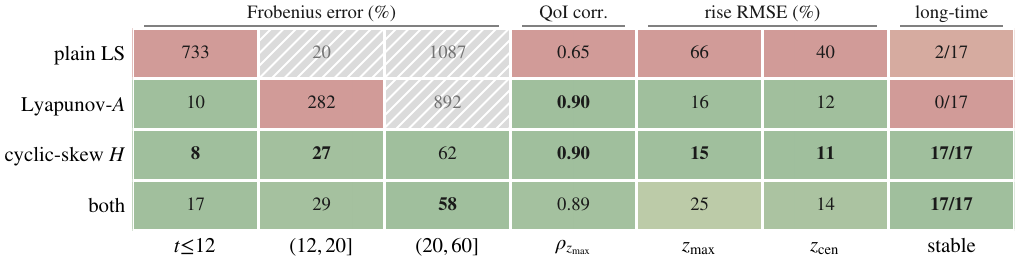}
\caption{The four structural variants over the $17$ training $N$ by leave-one-out, on the metrics of \tabref{tab:plume_accuracy}. Columns are the Frobenius error over the in-window ($t \leq 12$) and the two forecast windows, the in-window $z_{\max}(t)$ Pearson correlation $\rho_{z_{\max}}$, the relative RMSEs of $z_{\max}$ and the buoyancy-weighted centroid $z_{\rm centroid}$, and the number of the $17$ parameters integrated to $t = 60$. Cells are shaded by within-column standing (greener better, best in bold). Averages over fewer than $14$ of $17$ parameters are survivor-subset values, hatched in gray and excluded from the comparison.}
\label{fig:plume_multimetric}
\end{figure}

\phantomsection
\addcontentsline{toc}{section}{Data availability}
\section*{Data availability}

The code and data that support the findings of this study are available from the corresponding author on reasonable request.


\phantomsection
\addcontentsline{toc}{section}{References}
\bibliographystyle{elsarticle-num}
\bibliography{references/references}

\end{document}